\newtheorem{remark}{\bf Remark}
\newtheorem{assumption}{\bf Assumption}
\newtheorem{lemma}{\bf Lemma}
\newtheorem{definition}{\bf Definition}
\newtheorem{proposition}{\bf Proposition}
\newtheorem{problem}{\bf Problem}
\newtheorem{thm}{\bf Theorem}
\newcommand{\wxy}{\color{black}}
\newcommand{\wxyv}{\color{black}}
\newcommand{\hw}{\color{black}} % Hongwei's revision
\newcommand{\col}{\textnormal{col}}
\def\BibTeX{{\rm B\kern-.05em{\sc i\kern-.025em b}\kern-.08em
    T\kern-.1667em\lower.7ex\hbox{E}\kern-.125emX}}
\begin{document}
\title{Reinforcement learning in pursuit-evasion differential game: safety, stability and robustness}
% \title{Reinforcement learning in safety-aware pursuit-evasion game: stability, robustness and safety}
% \title{Learning-enhanced control in pursuit-evasion differential game: stability, robustness and safety}

\author{Xinyang Wang, Hongwei Zhang, Jun Xu, Shimin Wang, Martin Guay 
\thanks{
This work has been submitted to the IEEE for possible publication. Copyright may be transferred without notice, after which this version may no longer be accessible.
This work was supported in part by the National Key Research and Development Program of China under Grant 2025YFE0101100 and in part by the National Natural Science Foundation of China under Grant 62473114. (Corresponding author: Hongwei Zhang)}
\thanks{Xinyang Wang, Hongwei Zhang and Jun Xu are with the School of Intelligence Science and Engineering, Harbin Institute of Technology, Shenzhen, Guangdong 518055, China (e-mail: wangxy@stu.hit.edu.cn; hwzhang@hit.edu.cn; xujunqgy@hit.edu.cn).}
\thanks{Shimin Wang is with Massachusetts
Institute of Technology, Cambridge, MA 02139, USA (e-mail: bellewsm@mit.edu).}
\thanks{Martin Guay is with Queen's University, Kingston, ON K7L 3N6, Canada (e-mail: guaym@queensu.ca).}
}

\maketitle

\begin{abstract}
Safety and stability are two critical concerns in pursuit-evasion (PE) problems in an obstacle-rich environment.
%
% % While most existing works focus on enforcing safety and stability in PE problems with known system dynamics, 
% And unpredictable disturbances exist in many practical scenarios, .
% %
{\hw Most existing works combine control barrier functions (CBFs) and reinforcement learning (RL) to provide an efficient and safe solution. However, they do not consider the presence of disturbances, such as wind gust and actuator fault, which may exist in many practical applications.}
%
% Meanwhile, sliding mode control (SMC) is a well-established robust control method known for its disturbance rejection capabilities.
%
This paper integrates CBFs and a sliding mode control (SMC) term into RL to simultaneously address safety, stability, and robustness to disturbances. However, this integration is significantly challenging due to the strong coupling between the CBF and SMC terms.
Inspired by Stackelberg game, we {\wxy handle the coupling issue by proposing} a hierarchical design scheme where SMC and safe control terms interact with each other in a leader-follower manner. Specifically, the CBF controller, acting as the leader, enforces safety independently of the SMC design; while the SMC term, as the follower, is designed based on the CBF controller.
We then formulate the PE problem as a zero-sum game and propose a safe robust RL framework to learn the min-max strategy online.
{\wxy A sufficient condition is provided} under which the proposed algorithm remains effective even when  constraints are conflicting.
Simulation results demonstrate the effectiveness of the proposed safe robust RL framework.

\end{abstract}

\begin{IEEEkeywords}
Safe and robust control, control barrier function, sliding mode control, reinforcement learning
\end{IEEEkeywords}

%%======================
\section{Introduction}
%%======================
Pursuit-evasion (PE) tasks have been extensively investigated in various practical applications, such as air combat, intelligent transportation navigation, and surveillance operations \cite{liang2022a}, \cite{xiao2024learning}.
In a typical PE scenario, the evader seeks to escape from the pursuer, while the pursuer aims to prevent the evader from doing so.
Recently, differential game theory has been employed in PE scenarios to formulate PE interactions as two-player zero-sum games, enabling the pursuer to optimize its strategy against the worst-case actions of the evader \cite{weintraub2020an}.
Such formulations allow for the derivation of min-max strategies that provide guaranteed performance for the pursuer, irrespective of the evader’s actual behavior.

%%%%%%%%%%%%%%%%%%%%%%%%%%%%
In PE differential games, the evader may deliberately induce the pursuer to enter dangerous regions, {\wxy such as obstacles}, {\hw to prevent being captured}.
Therefore, safety must be taken into account in the decision-making process of the pursuer.
%
% For example, the pursuer needs to avoid collisions with obstacles while attempting to capture the evader.
%
% Safety concerns in PE games, known as 
{\hw Recently,} safety-aware PE games have been investigated in \cite{qu2023pursuit}, \cite{sun2023cooperative} and \cite{liu2025observer}, where safety constraints are directly incorporated into the optimization {\wxy problem} formulation.
However, this approach often imposes significant computational demands, posing challenges for real-time onboard implementation, especially for agents with limited computational resources.
%
% As a Lyapunov-like function, control barrier function (CBF) has shown great promise in 
{\hw A promising approach to provide computationally efficient solutions for control problems with safety requirements is control barrier functions (CBFs) \cite{ames2014control}, which are commonly used to formulate quadratic programs (QPs), also referred to as safety filters \cite{wabersich2023data}, that {\wxy slightly} modify potentially unsafe nominal control policies to enforce safety.}
This technique offers a feasible way to solve safety-aware PE games, i.e., design an unconstrained nominal (min-max) strategy to stabilize the PE system against unknown evader inputs, along with a safe control policy, projecting unsafe policies into the safe region.
%

%%%%%%%%%%%%%%%%%%%%%%%%%%%%
        % {\wxy Yet, even finding an unconstrained nominal (min-max) control policy for PE differential games with nonlinear dynamics remains a significant challenge.}
%
Reinforcement learning (RL) offers great convenience in deriving nominal solutions with guaranteed performance for differential games \cite{kiumarsi2017optimal}. 
Recent studies \cite{zhang2024fixed}, \cite{dong2024adaptive}, \cite{wang2024nash} and \cite{tan2025unmanned} have successfully applied RL techniques to PE scenarios.
Yet, these studies fail to take safety issues into account. 
{\hw To address this limitation, by incorporating a CBF term into the reward function, a safety-oriented extension of RL (i.e., safe RL), 
% has been employed to obtain safe control policies in PE games \cite{marvi2021safe}. 
%
% By incorporating a CBF term into the reward function, a safe RL framework 
was developed to seek the Nash equilibrium of PE games while avoiding collisions with multiple obstacles \cite{kololakis2024safety}.}
{\hw Although safety can be eventually guaranteed by the learned policy, it may still be violated during the learning process.}
To overcome this limitation, a new class of safe RL with strict safety guarantees was proposed in \cite{max2023safe}, where a safeguard policy was developed to decouple the safety constraints from the learning process, {\hw enforcing 
safety even if {\hw the learning process} fails to converge. Very recently, this approach has been successfully applied in human-robot collaboration tasks \cite{li2024optimal}, human-swarm interaction scenarios \cite{li2024game}, and safety-aware PE games \cite{jia2025safety}.
}
{\wxy 
Yet, this approach can only address safety constraints of relative degree one, and thus is not well suited for many PE games, 
% whose systems generally have relative degrees, such as unmanned vehicles and manipulators.
where safety constraints often exhibit high relative degree w.r.t the pursuer’s input.
To handle such case, our previous work \cite{wang2025learning} proposes a high-order safeguard policy to enforce safety for high-relative-degree systems.
However, it focuses only on time-invariant safety specifications.
In practice, many systems, such as unmanned vehicles and manipulators, require time-varying safety guarantees, e.g., collision avoidance with moving obstacles, which further complicates the problem.
}
%
% This method is easy to implement and provides strict safety guarantees, it is not well suited for PE games, where the safety constraints are often time-varying, involve partially unknown dynamics, and have a high relative degree w.r.t the pursuer’s input (e.g., collision avoidance with moving obstacles).

%%%%%%%%%%%%%%%%%%%%%%%%%
While safe RL shows great potential for addressing safety-aware PE differential games, most existing works \cite{zhang2024fixed}, \cite{dong2024adaptive}, \cite{wang2024nash}, \cite{tan2025unmanned}, \cite{kololakis2024safety} and \cite{jia2025safety}  
did not consider disturbances in system dynamics.
However, disturbances, such as modeling errors and external perturbations, are inevitable in practical scenarios, and can adversely affect both stability and safety of the system.
For example,   disturbances may cause the pursuer to deviate from its planned trajectory, or violate safety constraints such as collision avoidance. 
{\hw Very recently,} a safe robust RL was proposed in \cite{xiong2025time} to solve PE tasks with safety guarantees in the presence of disturbances.
However, it suffers from the same {\hw safety} issue as that in \cite{kololakis2024safety}.
{\hw To the best of our knowledge, enhancing the robustness in safe RL from both stability and safety point of view is a very interesting yet challenging research topic that has been insufficiently investigated so far}.
%
% These challenges motivate us to develop a safe robust RL framework that offers provable safety guarantees and robust control performance under disturbances.
%

% {\color{blue} xj ?}

%
% {\wxy In revision...}
%
% These observations motivate the need for establish a feasible RL framework that combines SMC and robust CBFs in solving PE games.

In this paper, we aim to propose a safe robust RL framework to learn a safe and robust min-max strategy for PE differential games subject to disturbances and unknown inputs of the evader.
To enhance the robustness of stability, we integrate sliding mode control (SMC) technique into safe RL, which has not been explored in existing SMC-based RL works \cite{yang2020event}, \cite{wang2024homotopic}.
To improve the robustness of safety, we develop a robust safe control policy within safe RL based on robust CBFs \cite{xiao2024robust}, a class of CBF techniques not yet applied in the existing safe RL literature \cite{kololakis2024safety}, \cite{max2023safe}, \cite{li2024optimal}, \cite{li2024game}.
%
% To enhance the robustness of stability and safety, adaptive SMC and robust CBFs are combined to derive a robust control policy and a robust safeguard policy.
%
% A min-max strategy is derived to accomplish the PE tasks, an adaptive SMC is developed to enhance the robustness of the min-max strategy, and a robust safe control policy is designed to enforce both position and velocity safety constraints.
%
% Since both stability and safety robustness are critical under disturbances, integrating SMC and robust CBFs within safe RL is a natural choice.
%
However, {\hw integration of SMC and robust CBFs within safe RL} poses significant technical challenges due to their strong coupling in the controller design (see Sec. \ref{sec:safe_control}).
%
% due to the strong coupling between sliding surface design and CBF constraints, combining SMC and CBFs presents significant technical challenges due to the strong coupling between these two control designs, as further discussed in Sec. \ref{sec:safe_control}. 
%
Inspired by Stackelberg game \cite{xu2022distributed}, we address the coupling issue between SMC and CBFs by designing a hierarchical control structure, where the safe control serves as the leader to enhance safety independent of the SMC; while the SMC, acting as a follower, {\wxy keeps} the system on a safe sliding surface.

The contributions of this paper are as follows:
\begin{enumerate}
    \item 
    Compared with related works \cite{zhang2024fixed}, \cite{dong2024adaptive}, \cite{wang2024nash}, \cite{tan2025unmanned}, \cite{kololakis2024safety} and \cite{jia2025safety}, which do not simultaneously consider  safety and disturbances, this paper studies safety-aware PE games under disturbances.
    The min-max strategy and safe control policy are designed in a robust manner using adaptive SMC and robust CBFs, without requiring prior knowledge of disturbance bounds.

    \item 
    {\wxy
    This paper extends our recent work \cite{wang2025learning} to handle time-varying safety constraints of high relative degree with partially unknown dynamics, which cannot be handled by existing safe RL methods \cite{max2023safe}, \cite{li2024optimal}, \cite{li2024game} and \cite{wang2025learning}.}
    % We extend the safeguard policy in \cite{max2023safe} to guarantee safety in PE games with unknown disturbances.
    % %
    % Our safe control policy can handle time-varying state constraints of high relative degree and partially unknown dynamics, which cannot be achieved by existing safe RL methods \cite{max2023safe}, \cite{li2024optimal} and \cite{li2024game}.
    %
    And, in contrast to \cite{kololakis2024safety}, \cite{jia2025safety} and \cite{xiong2025time} which assume static obstacles, our safe robust RL algorithm enables the pursuer to avoid collisions with moving obstacles, thus addressing a more practical yet challenging scenario.
    
    \item 
    Unlike existing safeguard works \cite{max2023safe}, \cite{li2024optimal}, \cite{li2024game} and \cite{jia2025safety} which only consider constraints of relative degree one, our proposed safeguard policy can  handle multiple constraints with mixed relative degrees. 
    Furthermore, we provide a sufficient condition to guarantee the effectiveness of the safeguard policy in handling potentially conflicting constraints.
    Interestingly, this condition aligns with those required for the feasibility of QP formulations using CBFs \cite{ames2017control}, but with less conservativeness.
\end{enumerate}

\emph{\textbf{Notations:}} 
Matrix $P\succ0$
%{\color{blue}($\succ$? xj)} {\wxy Fixed it.}
means $P$ is positive definite.
The symbol $0$ denotes a zero vector with appropriate dimension.
Both Euclidean norm of a vector and Frobenius norm of a matrix are denoted by $\lVert \cdot \rVert$.
The maximum and minimum eigenvalues of a matrix are denoted by $\bar{\sigma}(\cdot)$ and $\underline{\sigma}(\cdot)$, respectively.
For a time-varying bounded signal $d(t)$, $\lVert d \rVert_\infty = \sup_{t \geq 0} \lVert d(t) \rVert$.
%
% Given a continuously differentiable function $b$, $L_a b$ denotes the Lie derivative of $b$ along the field $a$.

%==================================
\section{Preliminaries and problem formulation} \label{Sec Formulation}
\subsection{High-order control barrier function}
Consider a nonlinear system 
\begin{align} \label{eq:nominal_sys}
    \dot{x} = f(x)+g(x)u,
\end{align}
where $x \in \mathbb{R}^n$ and $u\in \mathbb{R}^p$ are the state  and the control input of the system, respectively; $f: \mathbb{R}^n \rightarrow \mathbb{R}^n$ and   $g: \mathbb{R}^n \rightarrow \mathbb{R}^{n\times p}$ are locally Lipschitz.

Consider a safe set described as $$\mathscr{C} = \{x\in \mathbb{R}^n :b(x)\geq 0\},$$ where $b$ is $m^{th}$-order continuously differentiable.
The relative degree of $b$ for \eqref{eq:nominal_sys} is defined as:
\begin{definition}\label{def:IRD} 
(relative degree \cite{khalil2002nonlinear}):
Consider a safe set $\mathscr{C}$.
The relative degree of $b(x)$ on $\mathscr{C}$ w.r.t system \eqref{eq:nominal_sys} is $m$ if $L_gL_f^{k} b(x) =0, ~\forall k\in \{0,1,\cdots,m-2\}$ and $L_gL_f^{m-1} b(x) \neq 0 \text{ for all } x \in \mathscr{C}$, where $L_f b$ and $L_g b$ are Lie derivatives of $b$ along $f$ and $g$, respectively.
\end{definition}

We define a series of sets for $i\in \{1,2,\cdots,m\}$:
\begin{align} \label{eq:classical_HOCBF_set}
    \mathscr{C}_i = \{ x \in \mathbb{R}^n:~\theta_{i-1}(x)\geq 0\}, \;\; \mathscr{C} = \cap \mathscr{C}_i   
\end{align}
with $\theta_i$ being defined as
\begin{align} \label{eq:hocbf}
    \theta_i(x) = \dot{\theta}_{i-1}(x)+\alpha_i(\theta_{i-1}(x)), \;\; \theta_0(x) = b(x),
\end{align}
where $\alpha_{i}$ is an $(m-i)^{th}$ order continuously differentiable class $\mathcal{K}$ function.
Then the high-order control barrier function (CBF) can be defined as:
\begin{definition}\label{def:HOCBF} 
(high-order CBF \cite{xiao2021high}):
Consider a safe set $\mathscr{C}$.
An $m^{th}$-order continuously differentiable function $b$ is a high-order CBF of relative degree $m$ for system \eqref{eq:nominal_sys} if there exists a class $\mathcal{K}$ function $\alpha$ such that
\begin{align*}
    \mathop{\rm{sup}}\limits_{u \in \mathbb{R}^p} \{ L^{m}_{f} b & + L_gL^{m-1}_{f}b~u \} \geq -  \mathcal{O}(x,t) - \frac{\partial^m b}{\partial t^m} -\alpha(\theta_{m-1})
\end{align*}
for all $(x,t) \in \mathscr{C} \times[0,\infty)$, where $$\mathcal{O} = \sum_{i=1}^{m-1} L_f^i(\alpha_{m-1} \circ \theta_{m-i-1}) + \frac{\partial^i(\alpha_{m-1} \circ \theta_{m-i-1})}{\partial t^i}$$ and $\circ$ denotes the composition of functions.
\end{definition}

\subsection{Pursuit-evasion differential game}
Consider the safety-aware pursuit-evasion (PE) differential game consisting of a pursuer and an evader, where the dynamics of the pursuer are
\begin{align} \label{eq:pursuer_dynamics}
    \dot{z}_p  = f_p\left(z_p\right) + g_p(z_p)(u_p + d),
\end{align}
where $z_p = \col(x_p,v_p)$; $x_p \in \mathbb{R}^n$, $v_p \in \mathbb{R}^n$, $u_p \in \mathbb{R}^p$ and $d \in \mathbb{R}^p$ denote the position, velocity, control input, and unknown disturbance of the pursuer, respectively.
The functions $f_p : \mathbb{R}^{2n} \rightarrow \mathbb{R}^{2n}$ and $g_p : \mathbb{R}^{2n} \rightarrow \mathbb{R}^{2n\times p}$ are locally Lipschitz.
It is assumed that $f_p(0) = 0$, $g_p(z_p)$ is full column rank, and the system is stabilizable.

{\wxyv 
The dynamics of the evader are
\begin{align*}
    \dot{z}_e = f_e(z_e) + g_e (z_e) u_e,
\end{align*}
where $z_e = \col(x_e,v_e)$; $x_e \in \mathbb{R}^n$, $v_e \in \mathbb{R}^n$ and $u_e \in \mathbb{R}^p$ are the position, velocity and control input of the evader, respectively.
The functions $f_e: \mathbb{R}^{2n} \rightarrow\mathbb{R}^{2n}$ and $g_e:\mathbb{R}^{2n} \rightarrow\mathbb{R}^{2n \times p}$ are locally Lipschitz.
%
% Let $z_e = \col(x_e,v_e)$, where $x_e\in \mathbb{R}^n$ and $v_e\in \mathbb{R}^n$ represent the position and velocity of the evader, respectively.
%{\color{red} What are the dynamics of $z_e$?} Martin's comments
%
% Define the relative position and velocity w.r.t the pursuer and the evader as $\xi_x = x_p-x_e$ and $\xi_v = v_p - v_e$, respectively.
% %
% The dynamics of $\xi = \col(\xi_x,\xi_v)$ can be described as 
% \begin{align} \label{eq:pursuer_evader_dynamics}
%     \dot{\xi} = \mathcal{F}(\xi) + \mathcal{G}(\xi)(u_p+d) + \mathcal{H}(\xi)u_e,
% \end{align}
% where $u_e \in \mathbb{R}^p$ is the evader's control input; $ \mathcal{F}:\mathbb{R}^{2n} \rightarrow \mathbb{R}^{2n}$, $ \mathcal{G}:\mathbb{R}^{2n} \rightarrow \mathbb{R}^{2n\times p}$ and $ \mathcal{H}:\mathbb{R}^{2n} \rightarrow \mathbb{R}^{2n\times p}$ are locally Lipschitz.

Define the relative state with respect to the pursuer and the evader as $\xi = z_p-z_e$. Then its dynamics are 
\begin{align} \label{eq:pursuer_evader_dynamics}
    \dot{\xi} = \mathcal{F}(z_p,z_e) + g_p(z_p)(u_p+d)  - g_e(e_e)u_e,
\end{align}
where $\mathcal{F}(z_p,z_e)=f_p(z_p) - f_e(z_e)$.
% Define the relative state w.r.t the pursuer and the evader as $\xi = z_p-z_e$, and its dynamics are 
% \begin{align} \label{eq:pursuer_evader_dynamics}
%     \dot{\xi} =\underbrace{f_p(z_p) -f_e(z_e)}_{ \mathcal{F}(\xi)} + \underbrace{g_p(z_p)}_{\mathcal{G}(\xi)}(u_p+d) \underbrace{-g_e(z_e)}_{\mathcal{H}(\xi)}u_e.
% \end{align}
}

The pursuer is required to avoid collisions with moving obstacles and satisfy velocity constraints.
The safe set for position and velocity are defined as
\begin{align} \label{eq:safetyset}
    &\mathscr{C}_{x} = \{ z_p \in \mathbb{R}^{2n}:~ b_x(z_p,x_o) \geq 0\}, \notag\\
    &\mathscr{C}_{v} = \{ z_p \in \mathbb{R}^{2n}:~ b_v(z_p) \geq 0\},
\end{align}
% $\mathscr{C}_{x}$ for collision avoidance and the set $\mathscr{C}_{v}$ for velocity constraints are
where $b_x:\mathbb{R}^{2n}\times\mathbb{R}^n \rightarrow \mathbb{R}$ and $b_v:\mathbb{R}^{2n} \rightarrow\mathbb{R}$ are sufficiently differentiable functions, and $x_o$ is the position of the obstacle.
It is assumed that $b_x$ has relative degree two and $b_v$ has relative degree one w.r.t \eqref{eq:pursuer_dynamics}, which implies $\lVert L_{g_p}L_{f_p}b_x \rVert \neq 0$ and $\lVert L_{g_p}b_v \rVert \neq 0$.
Then $\dot{b}_x$ and $\ddot{b}_x$ can be expressed as
\begin{align*}
    &\dot{b}_x = L_{f_p}b_x+\frac{\partial b_x}{\partial x_o}^\top  \frac{\partial x_o}{\partial t}, \\
    &\ddot{b}_x =L^2_{f_p}b_x+L_{g_p}L_{f_p}b_x (u+d)+ \frac{\partial L_{f_p}b_x}{\partial x_o}^\top\frac{\partial x_o}{\partial t} \\
    &~~~~~~~~+\left(\frac{\partial^2 b_x}{\partial x_o^2}^\top \frac{\partial x_o}{\partial t} + L_{f_p} \frac{\partial b_x}{\partial x_o}\right)^\top\frac{\partial x_o}{\partial t} + \frac{\partial b_x}{\partial x_o}^\top  \frac{\partial^2 x_o}{\partial t^2}.
\end{align*}
If $d$, $\partial x_o/\partial t$ and $\partial^2 x_o/\partial t^2$ are known, the collision avoidance can be enforced using a high-order CBF.
{\wxy
While $d$ and $\partial^2 x_o/\partial t^2$ are unknown, the collision avoidance can also be enforced using our previous work \cite{wang2025learning} if $\partial x_o/\partial t$ is known for control design.
However, this information is generally difficult to obtain in practice, which significantly complicates the safety-aware PE games.} 
Therefore, we study the case where $d$, $\partial x_o/\partial t$ and $\partial^2 x_o/\partial t^2$ are unknown, and impose the following assumptions:

\begin{assumption} \label{as:disturbance}
    The disturbance $d$ is bounded by an unknown positive real number $\bar{d}$, i.e., $\lVert d\rVert_\infty \leq \bar{d}$. 
\end{assumption}

\begin{assumption}\label{as:obstacle}
    The obstacle's position $x_o$ is bounded and its velocity satisfies $\lVert \frac{\partial x_o}{\partial t}\rVert \leq \eta$, where $\eta>0$ is known. 
\end{assumption}

Before presenting the PE differential game, we first make a general assumption regarding the evader:
\begin{assumption} \label{as:u_e}
    The evader's state $z_e \in \Omega_e$ for a compact set $\Omega_e$, and
    its control input $u_e \in L_2 \cap L_\infty$, i.e., $$\int_0^\infty \lVert u_e(\tau) \rVert^2 d\tau < \infty,~~~ \lVert u_e\rVert_\infty \leq \bar{u}_e,$$ where $\bar{u}_e$ is an unknown positive real number. 
\end{assumption}

We now formulate the safety-aware PE problem studied in this paper:

\begin{problem} \label{pro}
    Consider the systems \eqref{eq:pursuer_dynamics} and \eqref{eq:pursuer_evader_dynamics}, together with the safe sets $\mathscr{C}_x$ and $\mathscr{C}_v$ defined in \eqref{eq:safetyset}.
    Under Assumptions \ref{as:disturbance}-\ref{as:u_e}, design a robust safe control policy $u_p$ such that the PE task is accomplished, i.e., $\lim_{t\rightarrow \infty} \lVert  \xi(t) \rVert = 0,$ despite the presence of unknown disturbance $d$ and the evader's input $u_e$ while guaranteeing the forward invariance of the safe sets, i.e., $z_p(t) \in \mathscr{C}_{x}\cap \mathscr{C}_{v} , ~\forall t\geq 0$.
\end{problem}

To address Problem \ref{pro}, we propose a safe robust RL framework that simultaneously accounts for stability, robustness and safety.

% \section{Safe robust RL for PE differential game}
\section{Solutions for safety-aware PE game}
In this section, we design a three-term controller for the pursuer as $$u_p = u_n + u_r + u_s,$$ where $u_n$ denotes the unconstrained nominal (min-max) strategy for accomplishing the PE task under unknown $u_e$; $u_r$ is the robust control policy for rejecting disturbance $d$; and $u_s$ represents the safeguard policy for guaranteeing safety.

\subsection{Design of the min-max strategy $\mathnormal{u_n}$}
To obtain a min-max strategy $u_n$, we formulate a two-player zero-sum PE game of the undisturbed nominal system
\begin{align} \label{eq:pursuer_evader_dynamics_nodis}
    \dot{\xi} = \mathcal{F}(z_p,z_e) + g_p(z_p)u_n - g_e(z_e)u_e,
\end{align}
with the following performance index
\begin{align} \label{eq:performance_index}
    J(\xi,u_n,u_e) = \int_0^\infty l(Q(\xi(\tau)), u_n(\tau),u_e(\tau)) d\tau
\end{align}
and the reward function
\begin{align*}
    l(\xi,u_n,u_e) =Q(\xi) + u_n^\top R u_n - \gamma^2 u_e^\top T u_e
\end{align*}
where $Q:\mathbb{R}^n\rightarrow\mathbb{R}$ is a positive definite function, $R \succ 0,~ T \succ 0$ and $\gamma>0$ is a prescribed gain.   

The main goal is to find an admissible feedback control $u_n$ that ensures the $L_2$ stability of \eqref{eq:pursuer_evader_dynamics_nodis}, i.e.,
\begin{align} \label{eq:disturbance_attenuation}
    \int_0^\infty (Q(\xi) + u_n^\top Ru_n)d\tau \leq \gamma^2\int_0^\infty  u_e^\top T u_e d\tau,
\end{align}
where $\gamma \geq \gamma^\star$ and  {\hw $\gamma^\star$ is the minimum value of $\gamma$ such that disturbance attenuation condition \eqref{eq:disturbance_attenuation} can be achieved.}

Now, we can solve the PE problem by seeking the Nash equilibrium of the following min-max optimization problem
\begin{align*}
    \min_{u_n \in \mathbb{R}^p} \max_{u_e \in \mathbb{R}^p}J(\xi,u_n,u_e).
\end{align*}
The Nash condition is defined as
\begin{definition} \cite{lewis2012optimal}
$\{u_n^\star, u_e^\star\}$ is the Nash solution of the zero-sum game if 
the following condition holds:
\begin{align*}
J(\xi,u_n^\star,u_e) \leq  J(\xi,u_n^\star,u_e^\star) \leq J(\xi,u_n,u_e^\star).
\end{align*}
\end{definition}

For such an infinite-horizon optimal control problem, we define the value function as
\begin{align*}
    V(\xi(t)) = \int_{t}^\infty ( Q(\xi) +  u_n^\top R u_n - \gamma^2u_e^\top T u_e) d\tau,
\end{align*}
and its corresponding Hamilton function
\begin{align*}
    H(\xi, \nabla V, u_n,u_e) =  l(\xi,u_n,u_e)+ \nabla V^\top(\mathcal{F} + g_p u_n -g_eu_e),
\end{align*}
where $\nabla V = \partial V/\partial \xi$.
Define the optimal value of $V$ as
\begin{align*}
    V^\star(\xi) = \min_{u_n \in \mathbb{R}^p} \max_{u_e \in \mathbb{R}^p}\int_{t}^\infty ( Q(\xi) +  u_n^\top R u_n - \gamma^2u_e^\top T u_e) d\tau,
\end{align*}
and it is the unique positive definite solution to the Hamilton-Jacobi-Isaacs (HJI) equation
\begin{align} \label{eq:HJB}
    \min_{u_n \in \mathbb{R}^p} \max_{u_e \in \mathbb{R}^p} H(\xi,\nabla V^\star,u_n,u_e)  = 0,
\end{align}
where $\nabla V^\star = \partial V^\star/\partial\xi$ and $V^\star(0) = 0$.
Let $u_n^\star$ and $u_e^\star$ denote the optimal control polices.
According to the stationary condition \cite{lewis2012optimal}, we obtain
\begin{align*}
    &\frac{\partial H(\xi,\nabla V^\star,u_n,u_e)}{\partial u_n}\bigg|_{u_n = u_n^\star} = 0, \\
    % &{\color{red}\frac{\partial H(\xi,\nabla V^\star,u_n,u_e)}{\partial u_e}\bigg|_{u_e = u_e^\star} = 0.}
    &\frac{\partial H(\xi,\nabla V^\star,u_n,u_e)}{\partial u_e}\bigg|_{u_e = u_e^\star} = 0.
\end{align*}
Then the optimal control policies can be derived as
\begin{align}  \label{eq:optimal_control_policy}
    u_n^\star = -\frac{1}{2}R^{-1} g_p^\top \nabla V^\star, ~~u_e^\star =  -\frac{1}{2\gamma^2}T^{-1} g_e^\top \nabla V^\star. 
\end{align}
Substituting \eqref{eq:optimal_control_policy} into \eqref{eq:HJB}, the HJI equation becomes
\begin{align} \label{eq:HJB_equation}
    0 = &\nabla V^{\star\top}\mathcal{F} - \frac{1}{4} \nabla V^{\star\top}g_pR^{-1}g_p^\top \nabla V^\star \notag\\
    &~~~~~~~~~~~~~~+Q(\xi)+\frac{1}{4\gamma^2} \nabla V^{\star\top}g_eT^{-1}g_e^\top \nabla V^\star.
\end{align}

\begin{lemma}
    Suppose Assumption \ref{as:u_e} holds.
    Let $V^\star$ be a continuously differentiable positive definite solution of \eqref{eq:HJB} with the control policies $u_n$ and $u_e$ taking the optimal forms \eqref{eq:optimal_control_policy}.
    The PE system \eqref{eq:pursuer_evader_dynamics_nodis} is $L_2$ stable with $L_2$-gain bounded by $\gamma$.
\end{lemma}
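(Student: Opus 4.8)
The plan is to treat $V^\star$ as a storage (Lyapunov-like) function and to establish a dissipation inequality directly from the HJI equation \eqref{eq:HJB_equation}, from which the $L_2$-gain bound \eqref{eq:disturbance_attenuation} follows by integration. The whole argument is a nonlinear $H_\infty$/$L_2$-gain computation: the optimal pursuer policy $u_n^\star$ is held fixed, while the evader input $u_e$ is treated as an arbitrary admissible disturbance, and the positive definite HJI solution $V^\star$ certifies dissipativity.

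First I would differentiate $V^\star$ along the closed-loop trajectory of \eqref{eq:pursuer_evader_dynamics_nodis} driven by $u_n = u_n^\star$ and an arbitrary $u_e$, obtaining $\dot V^\star = \nabla V^{\star\top}(\mathcal{F} + g_p u_n^\star - g_e u_e)$. Adding and subtracting the running reward $l(\xi,u_n^\star,u_e)$ rewrites this as $\dot V^\star = -Q(\xi) - u_n^{\star\top}R u_n^\star + \gamma^2 u_e^\top T u_e + H(\xi,\nabla V^\star,u_n^\star,u_e)$, so the task reduces to showing that the Hamiltonian term is nonpositive for every $u_e$. Because $H$ is strictly concave in $u_e$ (its Hessian with respect to $u_e$ is $-2\gamma^2 T \prec 0$) and $u_e^\star$ in \eqref{eq:optimal_control_policy} is its stationary point, $u_e^\star$ is the global maximizer; combined with the HJI identity $H(\xi,\nabla V^\star,u_n^\star,u_e^\star)=0$ this gives $H(\xi,\nabla V^\star,u_n^\star,u_e)\le 0$. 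Equivalently, one completes the square in $u_e$ to expose the explicit slack term $-\gamma^2(u_e-u_e^\star)^\top T(u_e-u_e^\star)\le 0$. Hence the dissipation inequality $\dot V^\star \le -Q(\xi) - u_n^{\star\top}R u_n^\star + \gamma^2 u_e^\top T u_e$ holds.

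Integrating this inequality over $[0,T]$ and using $V^\star(\xi(T))\ge 0$ (positive definiteness) yields $\int_0^T (Q + u_n^{\star\top}Ru_n^\star)\,d\tau \le V^\star(\xi(0)) + \gamma^2\int_0^T u_e^\top T u_e\,d\tau$. Letting $T\to\infty$, Assumption \ref{as:u_e} ($u_e\in L_2$) guarantees that the right-hand integral converges so the bound is meaningful, and for zero initial offset this delivers precisely \eqref{eq:disturbance_attenuation}, i.e. an $L_2$-gain at most $\gamma$ from $u_e$ to the penalized output.

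The main obstacle is not the algebra but justifying the two structural/limiting points cleanly: (i) that the stationary point $u_e^\star$ is indeed the global maximizer of $H$ over all admissible $u_e$, which hinges on $T\succ 0$ rendering $H$ concave in $u_e$; and (ii) the correct treatment of the boundary term $V^\star(\xi(T))$ as $T\to\infty$, where positive definiteness of $V^\star$ permits discarding it. I would also flag the initial-condition offset $V^\star(\xi(0))$, which is absorbed into the standard zero-initial-state convention implicit in \eqref{eq:disturbance_attenuation}.
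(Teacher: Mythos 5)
Your proposal is correct and is exactly the standard argument that the paper defers to: the paper omits this proof entirely, citing Theorem~1 of \cite{li2025robust}, and that style of proof is precisely your dissipation inequality obtained by completing the square in $u_e$ (using strict concavity from $T \succ 0$), invoking the HJI identity $H(\xi,\nabla V^\star,u_n^\star,u_e^\star)=0$, integrating, and discarding the nonnegative terminal term $V^\star(\xi(T))$. Your handling of the two delicate points — global maximality of $u_e^\star$ and the zero-initial-state convention absorbing $V^\star(\xi(0))$ — is also the right reading of the lemma, since the $L_2$-gain claim is only meaningful when $u_e$ is treated as an arbitrary admissible input rather than fixed at $u_e^\star$.
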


The proof is similar to that of Theorem 1 in \cite{li2025robust}, and is omitted due to space limitation.

\subsection{Design of the adaptive robust control $\mathnormal{u_r}$}
Design the following integral safe sliding surface
{\wxyv
\begin{align}\label{eq:sliding_manifold}
    s(t) =&  S(z_p(t)) - S(z_p(0)) \notag \\
    &- \int_0^t  \chi(z_p)\big(f_p(z_p)+g_p(z_p)(u_s+u_n)\big)d\tau,
\end{align}
where $S\in\mathbb{R}^m$, $\chi= \partial S/\partial z_p$, and $u_s$ is the safe control policy to be designed in Section \ref{sec:safe_control}.}
The dynamics of the integral safe sliding surface is 
\begin{align*}
    \dot{s} = \chi(z_p )\dot{z}_p(t) - \chi(z_p)(f_p(z_p)+ g_p(z_p)(u_s+u_n)).
\end{align*}

% {\color{red} Is there a term missing?  Martin's comments $\dot{\chi}\int_0^t(f_[+g_p(u_s+u_n))dt$}
% {\wxyv Sorry. I mistakenly put the term $\chi$ outside the integral manipulation, which should be inside.}
%
% By satisfying $\dot{s} = 0$, the equivalent control can be derived as $u_{req} = -d$.
% %
% Applying $u_{req}$ to \eqref{eq:pursuer_evader_dynamics}, the sliding-mode dynamics of PE game becomes \eqref{eq:pursuer_evader_dynamics_nodis}.
%
Then we design the robust control policy $u_{r}$ to keep the state on the integral safe sliding manifold $s = 0$ as
\begin{align} \label{eq:robust_control_sign}
    u_{r} = - \mathcal{K}(t) \textnormal{sign} \left(g_p^\top(z_p) \chi^\top(z_p) s\right)
\end{align}
where $\mathcal{K}(t)$ is an adaptive gain.

To address the chattering issue, we replace $sign$ function with a continuous function $tanh$. Then \eqref{eq:robust_control_sign} becomes
\begin{align} \label{eq:robust_control_tanh}
    u_{r} = - \mathcal{K}(t) \tanh \left(\frac{g_p^\top(z_p) \chi^\top(z_p) s}{\rho}\right),
\end{align}
where $\rho$ is a very small positive constant, and the tuning law for $\mathcal{K}(t)$ is given as

(i) when $\lVert s(t) \rVert > \varepsilon>0$
\begin{align} \label{eq:sliding_adaptive_gain_update_s>0}
    \dot{\mathcal{K}}(t) = K_{1} + K_{2} \lVert s(t) \rVert
\end{align}
with $K_1,K_2 > 0$ and $\mathcal{K}(0) > 0$.

(ii) when $\lVert s(t) \rVert \leq \varepsilon$
\begin{align} \label{eq:sliding_adaptive_gain_update_s=0}
    \mathcal{K}(t) = K_3 + \mathcal{K}(t^*) \lVert \theta(t) \rVert \notag \\
    K_{4}\dot{\theta}(t) + \theta(t) = \textnormal{sign}(s(t))
\end{align}
with $K_3,K_4>0$, and $t^*$ is the largest time such that $\lVert s(t^*)\rVert \leq \varepsilon$ and $\lVert s({t^*}^-)\rVert > \varepsilon$; ${t^*}^-$ denotes the time instant just before $t^*$.

We now show that the PE error $z_p$ can reach the safe sliding surface in finite time.
The following two lemmas are introduced to facilitate the proof.

\begin{lemma} \cite{liu2025practical} \label{lm:tanh_sign}
    Given any $\rho>0$, the following inequality 
    \begin{align*}
        0 \leq x \,\textnormal{sign}(x)- x \tanh\left(\frac{x}{\rho}\right) \leq \kappa \rho
    \end{align*}
    always holds for all $x \in \mathbb{R}$, where $\kappa = 0.2785$.
\end{lemma}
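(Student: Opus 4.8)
The plan is to reduce the two-sided bound to a one-variable optimization by exploiting symmetry. Since $x\,\textnormal{sign}(x) = |x|$ and $\tanh$ is odd, the map $h(x) = |x| - x\tanh(x/\rho)$ satisfies $h(-x) = h(x)$, so it suffices to treat $x \geq 0$, where $h(x) = x\big(1 - \tanh(x/\rho)\big)$. The lower bound $h(x) \geq 0$ is then immediate: for $x \geq 0$ we have $\tanh(x/\rho) \leq 1$, hence $h(x) \geq 0$, and the bound extends to all $x$ by evenness.

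For the upper bound I would substitute $y = x/\rho \geq 0$, so that $h(x) = \rho\,\phi(y)$ with $\phi(y) = y(1-\tanh y)$, reducing the claim to $\sup_{y\geq 0}\phi(y) = \kappa$. Differentiating gives $\phi'(y) = (1-\tanh y) - y(1-\tanh^2 y)$. Setting $\phi'(y)=0$, factoring $1-\tanh^2 y = (1-\tanh y)(1+\tanh y)$, and dividing by the nonzero factor $1-\tanh y$ yields the critical-point condition $1 = y(1+\tanh y)$. Rewriting $\tanh y$ in exponential form collapses this to $e^{-2y} = 2y-1$, which has a unique positive root $y^\star$ (the left side is decreasing, the right side increasing, with a single crossing near $y^\star \approx 0.639$).

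The key simplification is that the extremal value has a closed form at $y^\star$. From $e^{2y^\star} = 1/(2y^\star-1)$ one computes $1 - \tanh y^\star = 2/(e^{2y^\star}+1) = (2y^\star-1)/y^\star$, so that $\phi(y^\star) = 2y^\star - 1$. Writing $\kappa = \phi(y^\star) = 2y^\star-1 = e^{-2y^\star}$ and eliminating $y^\star$ via $2y^\star = 1+\kappa$ shows that $\kappa$ is the unique positive solution of the fixed-point relation $\kappa = e^{-(1+\kappa)}$, whose numerical value is $\kappa \approx 0.2785$. I would close the argument by confirming that $y^\star$ is the global maximizer on $[0,\infty)$: since $\phi(0)=0$, $\phi(y)\to 0$ as $y\to\infty$, and $\phi'$ changes sign from positive to negative at the unique critical point, the supremum is attained at $y^\star$ and equals $\kappa$. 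Hence $0 \leq h(x) = \rho\,\phi(x/\rho) \leq \kappa\rho$.

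The main obstacle is that the critical-point equation $e^{-2y} = 2y-1$ is transcendental and has no elementary closed-form solution, so the constant $0.2785$ can only be pinned down numerically. The saving grace is the identity $\phi(y^\star) = 2y^\star - 1$, which converts the awkward extremal value into the clean self-referential equation $\kappa = e^{-(1+\kappa)}$, avoiding any need to evaluate $y(1-\tanh y)$ directly at the irrational root.
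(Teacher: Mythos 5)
Your proof is correct. Note that the paper itself does not prove this lemma at all --- it imports it by citation from the reference on practical sliding-mode control --- so there is no in-paper argument to compare against; what you have written is a complete, self-contained derivation of the cited fact, and it follows the classical route used in the robust-adaptive-control literature where this bound originates. Every step checks out: evenness of $h(x)=|x|-x\tanh(x/\rho)$ reduces the problem to $x\ge 0$; the scaling $y=x/\rho$ isolates the dimensionless function $\phi(y)=y(1-\tanh y)$; the factorization $1-\tanh^2 y=(1-\tanh y)(1+\tanh y)$ with $1-\tanh y>0$ for finite $y$ gives the critical-point condition $y(1+\tanh y)=1$, equivalently $e^{-2y}=2y-1$, which has a unique positive root by the monotonicity argument you give; and the identity $1-\tanh y^\star = 2/(e^{2y^\star}+1)=(2y^\star-1)/y^\star$ yields the clean extremal value $\phi(y^\star)=2y^\star-1=e^{-2y^\star}$, hence the fixed-point characterization $\kappa=e^{-(1+\kappa)}\approx 0.2785$. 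The endpoint behavior $\phi(0)=0$, $\phi(y)\to 0$ as $y\to\infty$, together with the single sign change of $\phi'$, confirms that $y^\star$ is the global maximizer, so $0\le h(x)=\rho\,\phi(x/\rho)\le\kappa\rho$. One cosmetic remark: the exact supremum is $\kappa\approx 0.27846$, so the stated constant $0.2785$ is a (valid, slightly loose) rounding --- your fixed-point equation makes this precise, which is arguably cleaner than the lemma's bare numerical constant.
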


\begin{lemma}\label{lm:K_upper_bound}
    Consider the system \eqref{eq:pursuer_dynamics} with safe sliding surface \eqref{eq:sliding_manifold} controlled by \eqref{eq:robust_control_tanh}, \eqref{eq:sliding_adaptive_gain_update_s>0} and \eqref{eq:sliding_adaptive_gain_update_s=0}.
    Suppose Assumption \ref{as:disturbance} holds.
    Then there exists a positive real number $\mathcal{K}^*$ such that $0 < \mathcal{K}(t) \leq \mathcal{K}^*$.
\end{lemma}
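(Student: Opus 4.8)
The plan is to first reduce the sliding-variable dynamics, then identify a gain threshold above which the sliding surface is attractive, and finally combine finite-time reaching with the boundedness of the region-(ii) gain law to exclude unbounded growth across switches. First I would substitute the closed loop $\dot z_p = f_p + g_p(u_n + u_r + u_s + d)$ into the expression for $\dot s$, so that the nominal and safe terms cancel by construction of \eqref{eq:sliding_manifold} and one is left with $\dot s = \chi(z_p) g_p(z_p)(u_r + d)$. Positivity $\mathcal{K}(t) > 0$ is then immediate, since $\mathcal{K}(0) > 0$ and $\dot{\mathcal{K}} = K_1 + K_2\|s\| > 0$ in region (i), while $\mathcal{K} = K_3 + \mathcal{K}(t^*)\|\theta\| \ge K_3 > 0$ in region (ii). For the upper bound I would work with $V_s = \tfrac12 s^\top s$ and the abbreviation $w = g_p^\top \chi^\top s$, giving $\dot V_s = -\mathcal{K}\, w^\top \tanh(w/\rho) + w^\top d$; applying Lemma \ref{lm:tanh_sign} componentwise yields $w^\top \tanh(w/\rho) \ge \|w\|_1 - \kappa\rho p$, and Assumption \ref{as:disturbance} gives $w^\top d \le \bar d\,\|w\|_1$, so that $\dot V_s \le -(\mathcal{K} - \bar d)\|w\|_1 + \mathcal{K}\kappa\rho p$.

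Next I would isolate a reaching threshold. Using that $g_p^\top\chi^\top$ is full column rank along the trajectory (a consequence of $g_p$ being full column rank and of the design of $S$), there is $\underline{\sigma} > 0$ with $\|w\|_1 \ge \|w\| \ge \underline{\sigma}\|s\|$. Since $\rho$ is chosen very small, the dead-zone width can be taken to satisfy $\varepsilon > 2\kappa\rho p/\underline{\sigma}$; then for any $\mathcal{K} > \mathcal{K}_{\mathrm{th}} := 2\bar d$ and any $\|s\| \ge \varepsilon$ one has $(\mathcal{K} - \bar d)\|w\|_1 > \tfrac12\mathcal{K}\,\underline{\sigma}\,\varepsilon > \mathcal{K}\kappa\rho p$, hence $\dot V_s < 0$. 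This shows that whenever the gain exceeds $\mathcal{K}_{\mathrm{th}}$ the sliding variable is strictly driven toward the boundary layer, so $\|s\|$ reaches $\varepsilon$ in finite time. Although $\bar d$, and therefore $\mathcal{K}_{\mathrm{th}}$, is unknown, its mere existence is all that the statement requires.

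I would then bound the gain separately in the two regimes. In region (ii), $\theta$ is the output of the stable first-order filter $K_4\dot\theta + \theta = \textnormal{sign}(s)$ driven by the bounded input $\textnormal{sign}(s)$ with $\|\textnormal{sign}(s)\| \le \sqrt m$; solving the filter shows $\|\theta(t)\| \le \max\{\|\theta(t^*)\|,\sqrt m\} =: \bar\theta$, whence $\mathcal{K}(t) = K_3 + \mathcal{K}(t^*)\|\theta\| \le K_3 + \mathcal{K}(t^*)\bar\theta$, so the region-(ii) bound is controlled entirely by the entry value $\mathcal{K}(t^*)$. In region (i), $\dot{\mathcal{K}} \ge K_1 > 0$ forces $\mathcal{K}$ to cross $\mathcal{K}_{\mathrm{th}}$ within time $(\mathcal{K}_{\mathrm{th}} - \mathcal{K}(0))/K_1$; over this finite interval $\mathcal{K}$ and $\|s\|$ stay bounded, because on the relevant compact set $\dot s = \chi g_p(u_r + d)$ and $\dot{\mathcal{K}} = K_1 + K_2\|s\|$ form a coupled system of at most affine growth, so a Gr\"onwall estimate yields bounds $\mathcal{K}_{\max},\|s\|_{\max}$ depending only on $\mathcal{K}(0)$, $\varepsilon$ and the design constants. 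Once $\mathcal{K} > \mathcal{K}_{\mathrm{th}}$ the reaching time $T_r$ to $\|s\|=\varepsilon$ is finite, and during it the gain grows by at most $(K_1 + K_2\|s\|_{\max})T_r$, so the value $\mathcal{K}(t^*)$ at every entry into region (ii) is uniformly bounded by some $\bar{\mathcal{K}}$.

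Combining the two estimates gives the uniform bound $\mathcal{K}(t) \le \mathcal{K}^* := \max\{\bar{\mathcal{K}},\,K_3 + \bar{\mathcal{K}}\bar\theta\}$. I expect the main obstacle to be this last step, namely ruling out unbounded accumulation of the gain over the possibly infinite sequence of switches between the two regions. This is resolved by the finite-time reaching property established above: the gain can grow only in region (i), and once it surpasses $\mathcal{K}_{\mathrm{th}}$ each region-(i) excursion lasts only a bounded time with $\|s\|$ staying bounded, so each excursion contributes only a bounded increment and no runaway growth can occur. Equivalently, a blow-up $\mathcal{K}(t)\to\infty$ would force an infinitely long residence in region (i) above the threshold, contradicting $\dot V_s < 0$ there.
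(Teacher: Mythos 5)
Your reduction of the sliding dynamics to $\dot s = \chi g_p(u_r+d)$, the positivity argument, the reaching inequality obtained from Lemma \ref{lm:tanh_sign} with the threshold $\mathcal{K}_{\mathrm{th}} = 2\bar d$, and the filter estimate $\lVert\theta(t)\rVert \le \max\{\lVert\theta(t^*)\rVert,\sqrt m\}$ are all sound, and your outline (finite-time reaching once the gain dominates the disturbance, plus a separate bound inside the dead zone) is the strategy of the proof the paper points to --- note the paper does not prove this lemma itself but defers to Lemma 1 of \cite{plestan2010new}.

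However, the last step --- the one you yourself flag as the main obstacle --- has a genuine gap. Your dead-zone bound reads $\mathcal{K}(t) \le K_3 + \mathcal{K}(t^*)\bar\theta$ with $\bar\theta \ge \sqrt m \ge 1$, so each visit to region (ii) may, according to your estimates, \emph{amplify} the entry gain by the factor $\bar\theta$ rather than reset it, and every region-(i) excursion then adds a further non-negative increment. Over a possibly infinite switching sequence the entry values therefore obey only the recursion $\mathcal{K}_{n+1} \le K_3 + \bar\theta\,\mathcal{K}_n + \Delta$, which admits no uniform bound (even for $\bar\theta = 1$ it allows growth linear in $n$). The justification you give --- ``each excursion contributes only a bounded increment, hence no runaway growth'' --- is a non sequitur: infinitely many uniformly bounded increments can diverge, and blow-up through the region-(ii) formula requires no long residence in region (i), so the contradiction with $\dot V_s < 0$ you invoke never materializes. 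The missing ingredient, which is precisely the heart of the argument in \cite{plestan2010new}, is the equivalent-control interpretation of \eqref{eq:sliding_adaptive_gain_update_s=0}: inside the boundary layer, $\theta$ is the low-pass filtered $\textnormal{sign}(s)$, and the control $-\mathcal{K}(t^*)\textnormal{sign}(s)$ must on average cancel the matched disturbance, so $\mathcal{K}(t^*)\lVert\theta(t)\rVert \approx \lVert d(t)\rVert \le \bar d$ up to a filtering transient. This makes the dead-zone gain bounded by $K_3 + \bar d$ plus a small error, \emph{independently of the entry value} $\mathcal{K}(t^*)$; it is this entry-independent bound that breaks the ratcheting across switches and yields a uniform $\mathcal{K}^*$. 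Without it, your claimed bound $\mathcal{K}^* = \max\{\bar{\mathcal{K}},\,K_3 + \bar{\mathcal{K}}\bar\theta\}$ rests on a uniform bound $\bar{\mathcal{K}}$ on the entry values that was never actually established.
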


The proof of Lemma \ref{lm:K_upper_bound} is similar to that of Lemma 1 in \cite{plestan2010new}, and is omitted here due to space limitation.

The following theorem shows that the safe sliding surface can be reached in finite time.
\begin{thm} \label{thm:sliding_control}
    Consider the disturbed system \eqref{eq:pursuer_dynamics} with safe sliding surface defined by \eqref{eq:sliding_manifold}.
    Suppose Assumption \ref{as:disturbance} holds.
    Let the robust control $u_r$ be designed as in \eqref{eq:robust_control_tanh}, \eqref{eq:sliding_adaptive_gain_update_s>0}, and \eqref{eq:sliding_adaptive_gain_update_s=0} with $\chi(z_p) = g_p^\dagger(z_p)$.
    Then $u_r$ drives the system trajectories to reach the safe sliding surface $s =0$ in finite time and stay on it thereafter.
\end{thm}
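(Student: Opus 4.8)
The plan is to first collapse the closed-loop dynamics of $s$ into a disturbance-driven single-integrator form, and then run an $\eta$-reachability (finite-time Lyapunov) argument on $V_s=\tfrac12\|s\|^2$. Differentiating \eqref{eq:sliding_manifold} and substituting $\dot z_p=f_p+g_p(u_n+u_r+u_s+d)$, the $f_p$, $u_n$ and $u_s$ contributions cancel by construction, and the prescribed choice $\chi=g_p^\dagger$ makes $\chi g_p=I$, so that $\dot s=u_r+d$. The same identity yields $g_p^\top\chi^\top=I$, hence the argument of the switching function in \eqref{eq:robust_control_tanh} is exactly $s$ and $u_r=-\mathcal{K}(t)\tanh(s/\rho)$, the chattering-free realization of the discontinuous law $-\mathcal{K}(t)\,\textnormal{sign}(s)$ in \eqref{eq:robust_control_sign}. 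This reduction is precisely what decouples the reaching analysis from $f_p$, $u_n$ and $u_s$, and is the reason the pseudoinverse is mandated in the hypothesis.

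With $V_s=\tfrac12 s^\top s$, I would evaluate $\dot V_s=-\mathcal{K}\,s^\top\textnormal{sign}(s)+s^\top d$ along the discontinuous law, bound $s^\top\textnormal{sign}(s)=\sum_{i=1}^m|s_i|\ge\|s\|$ and $s^\top d\le\bar d\|s\|$ under Assumption \ref{as:disturbance}, to reach the reachability inequality $\dot V_s\le -(\mathcal{K}-\bar d)\|s\|=-(\mathcal{K}-\bar d)\sqrt{2V_s}$. The next step is to show that the adaptive gain eventually dominates the \emph{unknown} bound $\bar d$: while $\|s\|>\varepsilon$, law \eqref{eq:sliding_adaptive_gain_update_s>0} gives $\dot{\mathcal{K}}\ge K_1>0$, so $\mathcal{K}$ grows at least linearly and after a finite time $T_1$ satisfies $\mathcal{K}\ge\bar d+\eta$ for a prescribed margin $\eta>0$, while Lemma \ref{lm:K_upper_bound} guarantees it never escapes to infinity, so the design is well posed. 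For $t\ge T_1$ the estimate becomes $\dot V_s\le -\eta\sqrt{2V_s}$, i.e. $\tfrac{d}{dt}\sqrt{V_s}\le-\eta/\sqrt2$, which integrates to exact reaching of $s=0$ in finite time $t_r\le T_1+\|s(T_1)\|/\eta$. For the realized $\tanh$ law I would invoke Lemma \ref{lm:tanh_sign} componentwise to write $s^\top\tanh(s/\rho)\ge\sum_{i=1}^m|s_i|-m\kappa\rho\ge\|s\|-m\kappa\rho$, so that the smoothing contributes only the constant $\mathcal{K}m\kappa\rho$, which is bounded through Lemma \ref{lm:K_upper_bound}.

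For the ``stay on it thereafter'' claim I would establish forward invariance of $s=0$ using the phase-(ii) gain law \eqref{eq:sliding_adaptive_gain_update_s=0}. On the surface the equivalent-control requirement is $u_r=-d$, so it suffices that the adapted gain remains no smaller than $\bar d$; the filtered-sign variable $\theta$ generated by $K_4\dot\theta+\theta=\textnormal{sign}(s)$ tracks the normalized disturbance, so that $\mathcal{K}=K_3+\mathcal{K}(t^*)\|\theta\|$ stays above $\bar d$ and $\dot V_s\le0$ holds on any attempted re-entry, rendering $s=0$ invariant.

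The step I expect to be the main obstacle is exactly the interaction between the two nonstandard features of this controller: the \emph{unknown} bound $\bar d$ and the \emph{continuous} $\tanh$ surrogate. Proving exact reaching of $s=0$, rather than merely of a boundary layer, requires the reachability estimate to retain a term strictly of order $\|s\|$ with a genuinely positive coefficient $\mathcal{K}-\bar d$, which relies on (i) the gain growth of \eqref{eq:sliding_adaptive_gain_update_s>0} overtaking $\bar d$ in finite time, certified against the boundedness of Lemma \ref{lm:K_upper_bound}, and (ii) controlling the constant smoothing residual $m\kappa\rho$ of Lemma \ref{lm:tanh_sign} so that it is absorbed by the margin $\eta$ rather than imposing a floor on $\|s\|$. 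The decisive device here is that the $\tanh$ law is the realization of the discontinuous $\textnormal{sign}$ control of \eqref{eq:robust_control_sign}, whose reachability condition forces the sliding variable onto $s=0$ in finite time; carrying this exact reaching through the smoothed implementation, by taking $\rho$ small enough that $m\kappa\rho$ is dominated uniformly by the adaptive margin down to the surface, is the delicate point that upgrades practical convergence to the exact finite-time reaching asserted in the statement.
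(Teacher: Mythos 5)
Your reduction $\dot s=\chi g_p(u_r+d)=u_r+d$ under $\chi=g_p^\dagger$, and your use of Lemma~\ref{lm:tanh_sign} and Lemma~\ref{lm:K_upper_bound}, match the paper's setup exactly, but there is a genuine gap in your final claim of \emph{exact} finite-time reaching of $s=0$ under the smoothed law \eqref{eq:robust_control_tanh}. With the $\tanh$ surrogate, Lemma~\ref{lm:tanh_sign} leaves an additive constant, so the sharpest available estimate is $\dot V_s \le -(\mathcal{K}-\bar d)\lVert s\rVert + \kappa\mathcal{K}^* m\rho$: the stabilizing term vanishes as $\lVert s\rVert\to 0$ while the residual $\kappa\mathcal{K}^* m\rho$ does not, so for any \emph{fixed} $\rho>0$ the inequality $\dot V_s\le-\eta\sqrt{2V_s}$ cannot hold on a ball of radius proportional to $\rho$ around the surface, no matter how the adaptive margin is chosen. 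Your proposed fix --- ``taking $\rho$ small enough that $m\kappa\rho$ is dominated uniformly by the adaptive margin down to the surface'' --- is therefore impossible: shrinking $\rho$ shrinks the certified boundary layer but never removes it. Your two-phase argument proves exact reaching only for the idealized sign law \eqref{eq:robust_control_sign}, which is not the controller in the theorem. Your ``stay on it thereafter'' step has the same character: the assertion that $\theta$ generated by \eqref{eq:sliding_adaptive_gain_update_s=0} tracks the disturbance so that $\mathcal{K}\ge\bar d$ on any re-entry is stated, not proved.

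The paper's proof is structured to avoid exactly this trap. It runs the reaching analysis on the composite Lyapunov function $V_s=\frac12\lVert s\rVert^2+\frac{1}{2\breve\gamma}(\mathcal{K}-\mathcal{K}^*)^2$, whose gain-error term contributes $-\frac{K_1}{\breve\gamma}\lvert\mathcal{K}-\mathcal{K}^*\rvert$ and yields $\dot V_s\le-\sqrt2\,\beta_\gamma V_s^{1/2}+\kappa\mathcal{K}^* m\rho$ in one stroke, without your separate time-$T_1$ bootstrapping (your dichotomy --- either the layer is reached first or $\dot{\mathcal{K}}\ge K_1$ forces gain domination --- is repairable, but the augmented function makes it unnecessary). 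Crucially, the paper then \emph{defines} the threshold $\varepsilon=\kappa\mathcal{K}^* m\rho/\beta_\gamma$, concludes only finite-time reaching of the boundary layer $\Omega_s=\{\lVert s\rVert\le\varepsilon\}$, delegates invariance of $\Omega_s$ under the phase-(ii) law \eqref{eq:sliding_adaptive_gain_update_s=0} to Corollary~1 of Plestan et al.\ (the step you asserted heuristically), and only then argues that $\Omega_s$ can be made arbitrarily small by the choice of $\rho$ --- which is the (admittedly generous) sense in which it asserts $s=0$ is reached. To close your argument you must adopt this practical-reaching structure, or retain the discontinuous sign law for which your reachability estimate is actually valid.
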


\begin{proof}
    See Appendix \ref{Appendix_sliding_control}
\end{proof}

\subsection{Design of the safeguard control $\mathnormal{u_s}$} \label{sec:safe_control}
%
% While the control policy $u_r$ guarantees robustness and the control policy $u_n$ ensures $L_2$ stability, safety for PE games remains unaddressed.
% %
{\wxy In this section, we employ a robust CBF technique to derive a safeguard policy that enforces safety constraints without knowing the obstacle dynamics and disturbance.}
Specifically, under Assumption \ref{as:obstacle}, we have
\begin{align*}
    \dot{b}_x(z_p,x_o) =& L_{f_p} b_x + \frac{\partial b_x}{\partial x_o}^\top \frac{\partial x_o}{\partial t} \\
     \geq & L_{f_p} b_x - \frac{1}{4k}\left\lVert \frac{\partial b_x}{\partial x_o}\right\rVert^2 - k \eta^2 := \tilde{b}_x(z_p,x_o),
\end{align*}
where $k$ is an arbitrary positive real number.
Define the following functions 
\begin{align} \label{eq:HO-DRCBF}
    \phi_{x,0}(z_p,x_o) &= b_x(z_p,x_o), \notag\\
    \phi_{x,1}(z_p,x_o) &= \tilde{b}_x(z_p,x_o)+ p_1\phi_{x,0}(z_p,x_o),
\end{align}
and the corresponding sets
\begin{align*}
\mathscr{C}_{x,0} = \{ z_p \in \mathbb{R}^{2n} : \phi_{x,0} \geq 0\}, \mathscr{C}_{x,1} = \{ z_p \in \mathbb{R}^{2n} :\phi_{x,1} \geq 0\},
\end{align*}
where $p_1$ is a positive real number.

For the position constraint, the CBF condition is $\dot{\phi}_{x,1} + p_2\phi_{x,1} \geq 0$, where $p_2$ is a positive number (see Definition \ref{def:HOCBF}).
The following lemma provides a sufficient condition for $u_p$ to guarantee collision avoidance with a moving obstacle of unknown dynamics.
\begin{lemma} \label{lm:robust_cbf}
    Given functions $b_x$ and $\tilde{b}_x$, define $\phi_{x,0}$ and $\phi_{x,1}$ as in \eqref{eq:HO-DRCBF} with the associated sets $\mathscr{C}_{x,0}$ and $\mathscr{C}_{x,1}$.
    Let $p_1, p_2$ be two positive real numbers such that $z_p(0) \in \mathscr{C}_{x,0} \cap \mathscr{C}_{x,1}$.
    Let $K_{x}$ be defined as 
    \begin{align*}
        K_{x} = \big\{ u_p \in \mathbb{R}^p:~ \breve{\alpha}_1(z_p,t) + \breve{\beta}_1(z_p,t)(u_p+d) \geq 0 \big\},
    \end{align*}
    where $\breve{\alpha}_1(z_p,t) = L_{f_p} \phi_{x,1} + \frac{\partial \phi_{x,1}}{\partial x_o}^\top \frac{\partial x_o}{\partial t}+p_2 \phi_{x,1}$ and $\breve{\beta}_1(z_p,t) = L_{g_p} \phi_{x,1}$.
    Then any locally Lipschitz controller $u_p \in K_{x}$ renders the set $\mathscr{C}_{x,0} \cap \mathscr{C}_{x,1}$ forward invariant.
\end{lemma}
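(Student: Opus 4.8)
The plan is to prove forward invariance of the two sets $\mathscr{C}_{x,1}$ and $\mathscr{C}_{x,0}$ separately, each by a comparison-lemma (Nagumo-type) argument applied to a linear differential inequality, and then to conclude invariance of their intersection. Since $f_p$, $g_p$ and the controller $u_p$ are locally Lipschitz, the closed-loop trajectory $z_p(t)$ exists and is unique on its interval of definition, so every differential inequality below is understood along this trajectory and the comparison lemma applies.

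First I would record the algebraic identity linking the defining inequality of $K_x$ to the time derivative of $\phi_{x,1}$. Differentiating $\phi_{x,1}(z_p,x_o)$ along \eqref{eq:pursuer_dynamics}, accounting for both its dependence on $z_p$ (through $\dot z_p = f_p + g_p(u_p+d)$) and on the moving obstacle $x_o$, gives
\begin{align*}
    \dot{\phi}_{x,1} = L_{f_p}\phi_{x,1} + L_{g_p}\phi_{x,1}(u_p+d) + \frac{\partial \phi_{x,1}}{\partial x_o}^\top \frac{\partial x_o}{\partial t}.
\end{align*}
Comparing with the definitions of $\breve{\alpha}_1$ and $\breve{\beta}_1$ yields $\dot{\phi}_{x,1} + p_2\phi_{x,1} = \breve{\alpha}_1 + \breve{\beta}_1(u_p+d)$, so membership $u_p\in K_{x}$ is exactly equivalent to $\dot{\phi}_{x,1} \ge -p_2\phi_{x,1}$. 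Since $z_p(0)\in\mathscr{C}_{x,1}$ gives $\phi_{x,1}(z_p(0))\ge 0$, the comparison lemma delivers $\phi_{x,1}(z_p(t))\ge \phi_{x,1}(z_p(0))e^{-p_2 t}\ge 0$ for all $t\ge 0$, i.e. forward invariance of $\mathscr{C}_{x,1}$.

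Next I would propagate this invariance down to $\mathscr{C}_{x,0}$. The key is the inequality $\dot b_x\ge \tilde b_x$, which follows from Assumption \ref{as:obstacle} together with the Young-type bound $\frac{\partial b_x}{\partial x_o}^\top\frac{\partial x_o}{\partial t}\ge -\frac{1}{4k}\lVert \partial b_x/\partial x_o\rVert^2 - k\eta^2$ recorded just before \eqref{eq:HO-DRCBF}. Because $\phi_{x,0}=b_x$ and $\phi_{x,1}=\tilde b_x + p_1\phi_{x,0}$, on the invariant region where $\phi_{x,1}\ge 0$ we have $\tilde b_x\ge -p_1\phi_{x,0}$, and hence $\dot{\phi}_{x,0}=\dot b_x\ge \tilde b_x\ge -p_1\phi_{x,0}$, that is, $\dot{\phi}_{x,0}+p_1\phi_{x,0}\ge 0$. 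Invoking the comparison lemma once more with $\phi_{x,0}(z_p(0))\ge 0$ gives $\phi_{x,0}(z_p(t))\ge 0$ for all $t\ge 0$, i.e. forward invariance of $\mathscr{C}_{x,0}$. Combining the two invariance statements yields $z_p(t)\in\mathscr{C}_{x,0}\cap\mathscr{C}_{x,1}$ for all $t\ge 0$, which is the claim.

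The step I expect to require the most care is the bridging inequality $\dot b_x\ge\tilde b_x$: because $\tilde b_x$ is only a lower bound on $\dot b_x$ rather than an equality, the usual high-order CBF chain must be run with an inequality at the first level, and I would verify that this inequality direction is preserved through the comparison lemma and never requires knowledge of the unknown velocity $\partial x_o/\partial t$. This is exactly the feature that renders the construction robust to the moving obstacle. A secondary point worth noting is that although $\breve{\alpha}_1$ still contains the unknown terms $d$ and $\partial x_o/\partial t$, the lemma is a purely trajectory-wise invariance statement in which $u_p\in K_{x}$ is posited along the true trajectory; consequently no worst-case robustification is needed inside this proof, that task being deferred to the actual synthesis of the safeguard policy $u_p$.
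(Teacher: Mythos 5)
Your proposal is correct and follows essentially the same route as the paper's proof: first establish $\phi_{x,1}\geq 0$ from the CBF condition implied by $u_p\in K_x$, then use the bound $\dot b_x\geq \tilde b_x$ (from Assumption \ref{as:obstacle} and the Young-type inequality) to obtain $\dot\phi_{x,0}+p_1\phi_{x,0}\geq\phi_{x,1}\geq 0$ and conclude invariance of $\mathscr{C}_{x,0}\cap\mathscr{C}_{x,1}$. The only cosmetic difference is that you invoke the comparison lemma directly (with the explicit exponential bound) where the paper cites Theorem 2 of \cite{xiao2021high}, which is the same argument packaged as a citation.
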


\begin{proof}
    If $u_p \in K_x$, then $\dot{\phi}_{x,1} + p_2 \phi_{x,1} \geq 0$ holds.
    According to Theorem 2 in \cite{xiao2021high}, one has $\phi_{x,1}(z_p(t),x_o(t)) \geq 0~~ \forall t \geq 0$ since $\phi_{x,1}(z_p(0),x_o(0)) \geq 0$.
    Since $\dot{\phi}_{x,0} + p_1 \phi_{x,0} = L_{f_p}b_x + \frac{\partial b_x}{\partial x_o}^\top \frac{\partial x_o}{\partial t} + p_1 \phi_{x,0} \geq \phi_{x,1} $, $\phi_{x,1} \geq 0$ implies that $ \dot{\phi}_{x,0} + p_1 \phi_{x,0} \geq 0$.
    Hence, $\phi_{x,0}(z_p(t),x_o(t)) \geq 0$ holds for all $t \geq 0$ since $\phi_{x,0}(z_p(0),x_o(0)) \geq 0$.
    Then one can conclude that $\mathscr{C}_{x,0} \cap \mathscr{C}_{x,1}$ is forward invariant under $u_p \in K_x$.
\end{proof}

For the velocity constraint, the CBF condition is $\dot{b}_v + \tilde{p}b_v \geq 0$, where $\tilde{p}>0$ is a positive real number (see Definition \ref{def:HOCBF}).
For brevity, we denote $\breve{\alpha}_2(z_p) = L_{f_p}b_v + \tilde{p}b_v$ and $\breve{\beta}_2(z_p) =L_{g_p}b_v$.
Then any locally Lipschitz control policy $u_p \in K_v$, where $K_v$ is defined as 
\begin{align*}
    K_{v} = \big\{ u_p \in \mathbb{R}^p:~ \breve{\alpha}_2(z_p) + \breve{\beta}_2(z_p)(u_p+d) \geq 0  \big\},
\end{align*}
can render $\mathscr{C}_v$ forward invariant.

To simultaneously enforce collision avoidance and velocity constraint under $u_p$, we design the safeguard policy by addressing the following Quadratic Program (QP) as
\begin{align*}
    u_s = &\textnormal{arg}\min_{\mu \in \mathbb{R}^p} \frac{1}{2}\lVert \mu \rVert^2 \\
    &\textnormal{s.t.} ~~\breve{\alpha}_1(z_p,t) + \breve{\beta}_1(z_p,t)(u_r+u_n+\mu+ d)\geq 0, \\
    & ~~~~~~\breve{\alpha}_2(z_p) + \breve{\beta}_2(z_p)(u_r+u_n+\mu+ d) \geq 0.
\end{align*}

Define the following Lagrangian function
\begin{align*}
    L(z_p,u_s,&\lambda_x,\lambda_v) =  \frac{1}{2} \lVert u_s\rVert^2  \\
    &- \lambda_x \left(\breve{\alpha}_1(z_p,t) + \breve{\beta}_1(z_p,t)(u_r+u_n+u_s+ d)\right) \\
    & - \lambda_v \left(\breve{\alpha}_2(z_p) + \breve{\beta}_2(z_p)(u_r+u_n+u_s+ d)\right),
\end{align*}
where $\lambda_x$ and $\lambda_v$ are Lagrange multipliers w.r.t position constraint and velocity constraint, respectively.

Applying the Karush-Kuhn-Tucker (KKT) conditions 
\begin{align} \label{eq:KKT}
        &\partial L/ \partial u_s = 0, ~~ \lambda_x \geq 0,~~ \lambda_v \geq 0,\notag \\
        &\lambda_x \left(\breve{\alpha}_1(z_p,t) + \breve{\beta}_1(z_p,t)(u_r+u_n+u_s+ d)\right) = 0, \notag\\
        &\lambda_v \left(\breve{\alpha}_2(z_p) + \breve{\beta}_2(z_p)(u_r+u_n+u_s+ d)\right) = 0, \notag\\
        &\breve{\alpha}_1(z_p,t) + \breve{\beta}_1(z_p,t)(u_r+u_n+u_s+ d) \geq 0,  \notag\\
        &\breve{\alpha}_2(z_p) + \breve{\beta}_2(z_p)(u_r+u_n+u_s+ d) \geq 0, 
\end{align}
the safe control policy can be derived as
\begin{align} \label{eq:KKT_safe_control}
    u_s = \lambda_x\breve{\beta}_1^\top(z_p,t) + \lambda_v \breve{\beta}_2^\top(z_p),
\end{align}
where the multipliers $\lambda_x~\textnormal{and}~\lambda_v$ depend on the values of $x_p, v_p,x_o, \partial x_o/\partial t, u_r,u_n$ and $d$.\footnote{Interested readers can refer to \cite{tan2024on} for detailed form of $\lambda_x$ and $\lambda_v$.}

It is worth noting that there exists a strong coupling between $u_r$ and $u_s$; specifically, the design of $u_r$ depends on $u_s$ (see \eqref{eq:sliding_manifold}), while the computation of $u_s$ requires information from $u_r$ (see \eqref{eq:KKT}).
This coupling renders the direct combination of SMC and CBF technically infeasible.
Inspired by the Stackelberg game framework \cite{xu2022distributed}, we design $u_r$ and $u_s$ in a leader-follower sense,
{\wxy where $u_s$ acts as the leader, which is designed independently of $u_r$ to enforce safety, and $u_r$ acts as the follower, designed subsequently to keep the system staying on the safe sliding surface defined by $u_s$.}
% {\hw where $u_s$ serves as the leader to guarantee robust safety, and $u_r$ serves as the follower to enhance robustness of stability w.r.t stability. (need more specific explanation)}
%
Under this framework, we replace \eqref{eq:KKT_safe_control} with the following form:
\begin{align} \label{eq:safe_control}
    u_s = \zeta_x(z_p,t) \breve{\beta}_1^\top(z_p,t) + \zeta_v(z_p) \breve{\beta}_2^\top(z_p)
\end{align}
where $\zeta_x(z_p,t)$ and $ \zeta_v(z_p)$ are two energy functions that do not depend on $u_r$.
This design effectively resolves the coupling issue between $u_r$ and $u_s$.

%
% and make the following assumption:
% \begin{assumption} \label{as:safe_requirement}
% Consider the system \eqref{eq:pursuer_dynamics}, the safe set $\mathscr{C}_{x,1}(t)$ and $\mathscr{C}_v$. 
% The following conditions hold:
% \begin{enumerate} 
%     \item Given any $z_p \in \mathscr{C}_\cap^I(t) 
%     $, one has $\lim_{z_p\rightarrow\mathscr{C}_\cup^S} \lVert \breve{\alpha}_1\rVert,\lVert \breve{\alpha}_2\rVert < \infty~~\forall t \geq 0$.
%     \item There exists a neighborhood of $\mathscr{C}_\cup^S$, denoted by $\mathcal{N}(\mathscr{C}_\cup^S)$, such that $0 < \lVert \breve{\beta}_1\rVert, \lVert\breve{\beta}_2\rVert < \infty$ for all $z_p \in \mathcal{N}(\mathscr{C}_\cup^S)$.
% \end{enumerate}
% \end{assumption}

%
Now we provide the following theorem to show that $u_s$ guarantees the safety of the PE game.
Furthermore, we analyze the relationship between the feasibility of QP and the effectiveness of $u_s$ when both position and velocity constraints are simultaneously active.
To make convenience in the safety analysis, we define the following sets 
\begin{align*}
    &\mathscr{C}^I_\cap(t) = \operatorname{Int}(\mathscr{C}_{x,1}(t)) \cap \operatorname{Int}(\mathscr{C}_v) ,\\
    &\mathscr{C}^S_{\cap }(t) =\partial\mathscr{C}_{x,1}(t) \cap \partial\mathscr{C}_v,\\
    &\mathscr{C}^S_{\cup }(t) =\partial\mathscr{C}_{x,1}(t) \cup \partial\mathscr{C}_v,
\end{align*}
where
\begin{align*}
    &\operatorname{Int}(\mathscr{C}_{x,1}) = \{ z_p \in \mathbb{R}^{2n}:~ \phi_{x,1}(z_p,x_o) >0\},\\
    &\operatorname{Int}(\mathscr{C}_{v}) = \{ z_p \in \mathbb{R}^{2n}:~ b_v(z_p) > 0\}, \\
    &\partial\mathscr{C}_{x,1}= \{ z_p \in \mathbb{R}^{2n}:~ \phi_{x,1}(z_p,x_o) =0\},\\
    &\partial\mathscr{C}_{v} = \{ z_p \in \mathbb{R}^{2n}:~ b_v(z_p) = 0\}.
\end{align*}
% {xj \color{blue} need to clarify the meaning of Int, $\partial$?} {\wxy xy: Thanks!}

\begin{thm} \label{thm:safe_analysis}
    Suppose Assumptions \ref{as:disturbance}-\ref{as:u_e} hold, and $z_p \in \Omega_p$ for a compact set $\Omega_p \in \mathbb{R}^{2n}$.
    Let $u_s$ be defined in \eqref{eq:safe_control}, $u_r$ be defined in \eqref{eq:robust_control_tanh}, $u_n$ be a locally Lipschitz stabilizing controller, and $\zeta_x(z_p,t), \zeta_v(z_p)$ satisfy the following conditions
    \begin{align*} 
        &\zeta_x(z_p,t) = -\tilde{K}_x  \nabla B_x(\phi_{x,1}(z_p,t)),\\
        &\zeta_v(z_p) = -\tilde{K}_v  \nabla B_v(b_v(z_p)), \notag
    \end{align*}
    where $\tilde{K}_x,\tilde{K}_v > 0$ are bounded safe control gain, $B_x, B_v$ are user-defined energy functions, $\nabla B_x = \partial B_x/\partial \phi_{x,1}$ and $\nabla B_v = \partial B_v/\partial b_v$.
    If $z_p(0) \in \mathscr{C}^I_{\cap}(0)$, $B_x$ and $B_u$ satisfy the following inequalities
        \begin{align}
        \frac{1}{\gamma_1(\phi_{x,1}(z_p,t))}  \leq& B_x(\phi_{x,1}(z_p,t)) \leq \frac{1}{\gamma_2(\phi_{x,1}(z_p,t))} \label{bx_pro}\\
        \frac{1}{\gamma_3(b_v(z_p))}  \leq& B_v(b_v(z_p)) \leq \frac{1}{\gamma_4(b_v(z_p))}\label{bu_pro}
        \end{align}
    for some class $\mathcal{K}$ functions $\gamma_1, \gamma_2, \gamma_3$ and $\gamma_4$, and if for any $z_p \in \mathscr{C}_\cap^S(t)  ~~ \forall t \geq 0$, there exists a small neighborhood $N_l(z_p, t)$ such that the Gram matrix is positive definite, i.e.,
    \begin{align} \label{schur-Matrix}
        \begin{bmatrix}
            \breve{\beta}_1\breve{\beta}_1^\top  & \breve{\beta}_1\breve{\beta}_2 ^\top\\
            \breve{\beta}_2\breve{\beta}_1^\top &  {\wxyv \breve{\beta}_2\breve{\beta}_2^\top}
        \end{bmatrix}\succ 0~~~~ \forall x \in N_l(z_p, t),
    \end{align}
    % {\color{red} Should the lower corner term in the Gram matrix be $\breve{\beta}_2\breve{\beta}_2^\top$?}
    % {\wxyv Yes! Thanks for your observation. I have corrected it.}
    Then the safe control policy $u_s$ renders the safe set $\mathscr{C}^I_{\cap}$ forward invariant, i.e., $\phi_{x,1}(z_p, t), b_v(z_p) > 0 ~~\forall t \geq 0$.
\end{thm}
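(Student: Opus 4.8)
The plan is to treat the two energy functions jointly as a composite barrier and to establish forward invariance of $\mathscr{C}^I_\cap$ by ruling out finite-time escape to its boundary. Concretely, I would define $W(t) = B_x(\phi_{x,1}(z_p(t),t)) + B_v(b_v(z_p(t)))$. The lower bounds in \eqref{bx_pro}--\eqref{bu_pro} make $W$ a genuine barrier: since $\gamma_1,\gamma_3$ are class $\mathcal{K}$, $B_x \to \infty$ as $\phi_{x,1}\to 0^+$ and $B_v\to\infty$ as $b_v\to 0^+$, so $W$ diverges on $\mathscr{C}^S_\cup$, while the upper bounds guarantee $W(0)<\infty$ because $z_p(0)\in\mathscr{C}^I_\cap(0)$. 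The goal then reduces to showing that $W$ stays bounded on $[0,\infty)$: if $W(t)\le c^\star$ for all $t$, the lower bounds force $\phi_{x,1}\ge \gamma_1^{-1}(1/c^\star)>0$ and $b_v\ge\gamma_3^{-1}(1/c^\star)>0$, which is exactly forward invariance of $\mathscr{C}^I_\cap$.

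Next I would differentiate $W$ along the closed loop $u_p=u_n+u_r+u_s$. Using $\dot\phi_{x,1}=\breve\alpha_1-p_2\phi_{x,1}+\breve\beta_1(u_p+d)$ and $\dot b_v=\breve\alpha_2-\tilde p b_v+\breve\beta_2(u_p+d)$, and substituting $u_s=-\tilde K_x\nabla B_x\,\breve\beta_1^\top-\tilde K_v\nabla B_v\,\breve\beta_2^\top$ from \eqref{eq:safe_control}, the contribution of $u_s$ produces the quadratic form
\[
-\begin{bmatrix}\nabla B_x & \nabla B_v\end{bmatrix}
\begin{bmatrix}\tilde K_x\,\breve\beta_1\breve\beta_1^\top & \tfrac{\tilde K_x+\tilde K_v}{2}\,\breve\beta_1\breve\beta_2^\top\\ \tfrac{\tilde K_x+\tilde K_v}{2}\,\breve\beta_2\breve\beta_1^\top & \tilde K_v\,\breve\beta_2\breve\beta_2^\top\end{bmatrix}
\begin{bmatrix}\nabla B_x\\ \nabla B_v\end{bmatrix},
\]
while all remaining terms ($\nabla B_x\breve\alpha_1$, $-p_2\phi_{x,1}\nabla B_x$, $\nabla B_x\breve\beta_1(u_n+u_r+d)$ and the $b_v$ analogues) are at most linear in $(\nabla B_x,\nabla B_v)$. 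On the compact set $\Omega_p$ every coefficient of these linear terms is bounded: $\breve\alpha_1,\breve\alpha_2,\breve\beta_1,\breve\beta_2,\phi_{x,1},b_v$ are continuous hence bounded, $d$ is bounded by Assumption \ref{as:disturbance}, $\partial x_o/\partial t$ is bounded by Assumption \ref{as:obstacle} (so the unknown obstacle velocity enters only through its magnitude $\eta$), $u_n$ is bounded as a locally Lipschitz stabilizing controller on the compact sets (Assumption \ref{as:u_e} bounds $z_e$), and $u_r$ is bounded because $\mathcal K(t)\le\mathcal K^\star$ by Lemma \ref{lm:K_upper_bound} and $\tanh$ is bounded.

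Then I would show $\dot W\le 0$ whenever $W$ is sufficiently large, through a regime analysis reflecting which constraints are active. Away from the intersection, at most one barrier blows up: if only $\phi_{x,1}\to 0$ (so $|\nabla B_x|\to\infty$ while $|\nabla B_v|$ stays bounded), the diagonal term $-\tilde K_x\breve\beta_1\breve\beta_1^\top(\nabla B_x)^2$ dominates the linear terms since $\|\breve\beta_1\|=\|L_{g_p}\phi_{x,1}\|\neq 0$ (from the relative-degree-two assumption on $b_x$); symmetrically for $b_v$ using $\|\breve\beta_2\|=\|L_{g_p}b_v\|\neq 0$. The delicate case --- and the main obstacle --- is the neighbourhood of the intersection $\mathscr{C}^S_\cap$, where both gradients blow up simultaneously and the constraints may be conflicting (i.e.\ $\breve\beta_1\breve\beta_2^\top<0$, making the cross term destabilizing). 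Here I would invoke \eqref{schur-Matrix}: positive definiteness of the Gram matrix on $N_l(z_p,t)$ guarantees that the $2\times 2$ coefficient matrix above is positive definite, so the dissipation $-\underline\sigma(\cdot)(|\nabla B_x|^2+|\nabla B_v|^2)$ again dominates the bounded linear growth. In every regime this yields an explicit threshold $\bar c$ with $\dot W\le 0$ once $W\ge\bar c$, hence $W(t)\le\max\{W(0),\bar c\}$, no finite-time escape, and thus $\phi_{x,1},b_v>0$ for all $t\ge 0$.

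Finally I would flag two points. The sharpest technical issue is reconciling the stated hypothesis (positive definiteness of the Gram matrix) with the coefficient matrix, whose off-diagonal carries the factor $(\tilde K_x+\tilde K_v)/2$; by the AM--GM inequality this matrix is positive definite as soon as the Gram matrix is and the gains $\tilde K_x,\tilde K_v$ are taken equal (or sufficiently close), and this gap is precisely the source of the claimed ``less conservativeness'' relative to QP feasibility, so it should be stated explicitly. A secondary point is well-posedness: the closed-loop vector field is locally Lipschitz on the interior $\mathscr{C}^I_\cap$ (since $B_x,B_v$ are smooth there and $\tanh$ is smooth), so a unique solution exists locally, and the barrier bound on $W$ extends it to $[0,\infty)$ while keeping $z_p\in\Omega_p$.
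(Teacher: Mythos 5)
Your proposal is correct and follows essentially the same route as the paper's proof: the same composite barrier $B_x+B_v$, the same observation that all non-dissipative terms are bounded on the compact set (via Assumptions 1--3 and Lemma 2), the same three-regime case analysis separating the single-constraint boundaries from the intersection $\mathscr{C}^S_\cap$, and the same use of the Gram matrix condition to dominate the destabilizing cross term there (the paper concludes via $\dot B \to -\infty$ near $\mathscr{C}^S_\cup$ and a contradiction, rather than your threshold bound $W(t)\le\max\{W(0),\bar c\}$, but these are interchangeable). The gain-ratio gap you flag is genuine and is resolved in the paper's proof exactly as you anticipate: after writing $\breve\beta_1\breve\beta_2^\top=\lVert\breve\beta_1\rVert\lVert\breve\beta_2\rVert\cos(\theta)$ and extracting $\lvert\cos(\theta)\rvert\le 1-\epsilon_\theta$ from the Schur complement of the Gram matrix, the paper imposes inside the proof the additional selection condition $\breve{\epsilon}_{\theta,1}=\tfrac{1+\epsilon_\theta}{2}\tilde K_x-\tfrac{1-\epsilon_\theta}{2}\tilde K_v>0$ and $\breve{\epsilon}_{\theta,2}=\tfrac{1+\epsilon_\theta}{2}\tilde K_v-\tfrac{1-\epsilon_\theta}{2}\tilde K_x>0$, i.e., the gains must be sufficiently close --- a hypothesis that, as you note, is absent from the theorem statement.
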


\begin{proof}
    See Appendix \ref{Appendix_safe_analysis}.
\end{proof}

From Theorem \ref{thm:safe_analysis}, $\phi_{x,1}(z_p, t) > 0$ for all $t \geq 0$, which implies that $u_p \in K_x$.
Then one has $b_x \geq 0 ~\forall t \geq 0$ according to Lemma \ref{lm:robust_cbf}.
Hence, both position constraint and velocity constraint can be simultaneously enforced if \eqref{schur-Matrix} is satisfied.
Since $b_x(z_p(0),x_o(0)) > 0$ and $b_v(z_p(0)) > 0$, one has $b_x(z_p(t),x_o(t)) > 0$ and $b_v(z_p(t)) > 0$ for all $t \geq 0$ according to Theorem 2 in \cite{xiao2021high}.

\begin{remark}
    In Theorem \ref{thm:safe_analysis}, we provide a sufficient condition for guaranteeing safety under potentially conflicting position and velocity constraints, and establish the relationship between the feasibility of the QP solution \eqref{eq:KKT_safe_control} and the effectiveness of the safeguard policy \eqref{eq:safe_control} when both constraints are active.
    It is important to note that positive definiteness of the Gram matrix \eqref{schur-Matrix} is a necessary condition for the feasibility of the QP when position constraint and velocity constraint are in conflict.
    For example, consider the case where both position and velocity constraints are active.
    According to KKT conditions \eqref{eq:KKT}, the multipliers $\lambda_x$ and $\lambda_v$ w.r.t \eqref{eq:KKT_safe_control} are calculated as
    \begin{align} \label{eq:lambda_both_active}
        \begin{bmatrix}
            \lambda_x \\ \lambda_v
        \end{bmatrix}
        =
        \begin{bmatrix}
            \breve{\beta}_1\breve{\beta}_1^\top  & \breve{\beta}_1\breve{\beta}_2 ^\top\\
            \breve{\beta}_2\breve{\beta}_1^\top &  \breve{\beta}_2\breve{\beta}_2^\top
        \end{bmatrix}^{-1}
        \begin{bmatrix}
            \mathcal{L}_x  \\
            \mathcal{L}_v
        \end{bmatrix}, 
    \end{align}
    where
    \begin{align*}
        &\mathcal{L}_x = \breve{\alpha}_1(z_p,t) + \breve{\beta}_1(z_p,t)(u_r+u_n+ d), \\
        &\mathcal{L}_v = \breve{\alpha}_2(z_p) + \breve{\beta}_2(z_p)(u_r+u_n+ d).
    \end{align*}
    From \eqref{eq:lambda_both_active}, the Gram matrix must be positive definite within the safe set (\cite{ames2017control}, Theorem 3), which ensures that $K_x \cap K_v$ is nonempty as $z_p$ approaches $\mathscr{C}_\cap^S(t)$.
    This condition is similarly required in our framework to guarantee that $u_s$ can simultaneously enforce both position and velocity constraints.
\end{remark}

\begin{remark}
    Our approach only requires the Gram matrix to be positive definite in a small neighborhood of $\mathscr{C}_\cap^S$, rather than in the whole set of $\mathscr{C}_\cap^I$ (cf. \cite{ames2017control}); {\hw thus enlarging the feasible solution set of the QP problem. }
    Furthermore, the conditions $L_{g_p}L_{f_p} b_x \neq0 ~\forall z_p \in \mathscr{C}_x$ and $L_{g_p} b_v \neq0 ~\forall z_p \in \mathscr{C}_v$ can be relaxed so that they only need to hold in a small neighborhood of the boundary of $\mathscr{C}_{x,1}$ and $\mathscr{C}_v$, respectively (\cite{max2023safe}, Assumption 1).
   {\hw These relaxations significantly benefit the applications of existing CBF based safe control approaches.}
\end{remark}

% \begin{remark}
%     Compared with the safeguard policy proposed in \cite{max2023safe}, our safe control policy $u_s$ can address time-varying constraints with unknown dynamics of moving obstacle for high-relative-degree system.
%     %
%     One should note that $u_s$ does not take the Lyapunov form of that in \cite{max2023safe} and thereby $\lVert u_s\rVert$ may not converge to zero even $z_p = 0$ or $\xi = 0$.
%     %
%     This is because our equilibrium point is a dynamic one of \eqref{eq:pursuer_dynamics}, and safety should be guaranteed all the time, even if pursuit-evasion task is completed.
% \end{remark}

\begin{remark}
  % {\hw   Compared with the robust safe control \eqref{eq:KKT_safe_control}, where the information of obstacle velocity bound $\eta$ and disturbance $d$ are required, our robust safeguard policy \eqref{eq:safe_control} provides strict safety guarantees without requiring accurate knowledge of $d$ and $\frac{\partial x_o}{\partial t}$, or their upper bounds $\bar{d}$ and $\eta$, as shown in Theorem \ref{thm:safe_analysis}. 」%(本别说出两种算法的优缺点)}
  {\wxy
  The robust safe control \eqref{eq:KKT_safe_control} guarantees the forward invariance of the safe set $\mathscr{C}_{x,1}(t) \cap \mathscr{C}_v$, but it requires accurate knowledge of the obstacle velocity $\partial x_o/\partial t$ and the disturbance $d$.
  In contrast, the robust safeguard policy \eqref{eq:safe_control}, as analyzed in Theorem \ref{thm:safe_analysis}, provides strict safety guarantees without requiring precise information about $d,~\frac{\partial x_o}{\partial t}$, or even their upper bounds $\bar{d}$ and $\eta$.
  Nevertheless, this relaxation comes at the expense of conservativeness, as \eqref{eq:safe_control} only enforces safety within the interior of the safe set, i.e., $z_p(t) \in \mathscr{C}_{\cap}^I(t)$ for all $t \geq 0$, rather than the entire set.
  }
\end{remark}

\section{Safe robust RL for safety-aware PE game} 
{\wxy 
% The previous section presents the explicit form of $u_p = u_n+u_r+u_s$ for the safety-aware PE games, where $u_n$ depends on the solution of the HJI equation \eqref{eq:HJB_equation}.
Unconstrained nominal control policy $u_n$ relies on the solution of the HJI equation \eqref{eq:HJB_equation}, whose analytical solution is generally very difficult to obtain.
To address this challenge, this section proposes a neural network (NN) based safe robust RL framework that approximates the solution of \eqref{eq:HJB_equation} with low computational cost, as shown in Fig. \ref{fig:structure}}.

% \begin{figure}[htbp]
%     \centering
%     \includegraphics[trim=6cm 4cm 4cm 3cm, clip, width=0.5\textwidth]{Figure/Structure.jpg}  
%     \caption{Safe robust RL structure. Dash line represents back-propagating path, black solid line represents signal line, and purple solid line represents simulated data generation.}  \label{fig:structure}  
% \end{figure}

\begin{figure}[htbp]
    \centering
    \includegraphics[trim=0cm 0cm 0cm 0cm, clip, width=0.45\textwidth]{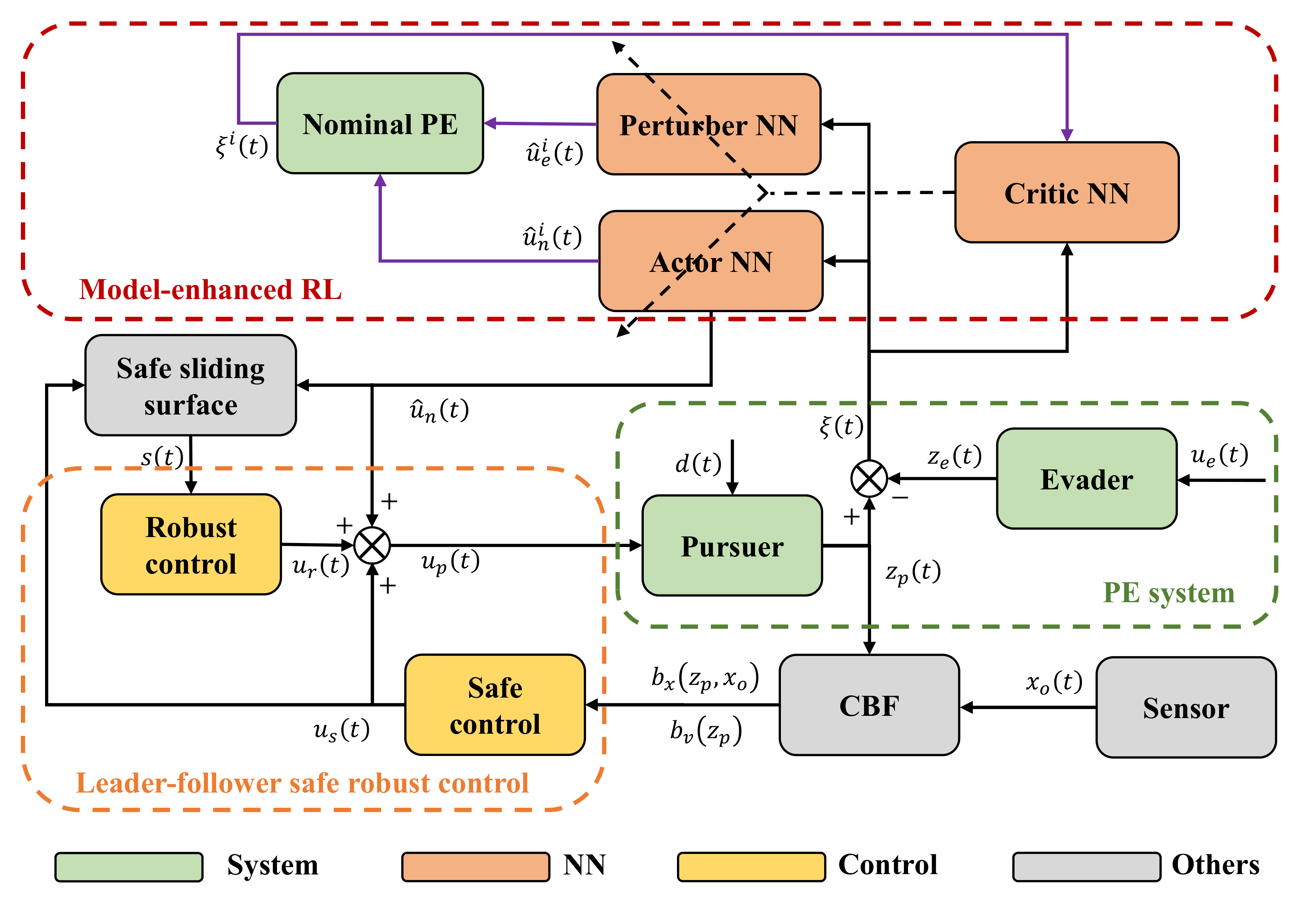}  
    \caption{Safe robust RL structure. Dash line represents back-propagating path, black solid line represents signal line, and purple solid line represents simulated data generation.}  \label{fig:structure}  
\end{figure}

%%================================
\subsection{Actor-perturber-critic neural networks}
Under Assumption \ref{as:u_e}, one can always find a compact set $\Omega$ such that $\xi \in \Omega$ if $z_p \in \Omega_p$.
In the light of universal approximation theory, the optimal value function $V^\star(\xi)$, the optimal policies $u_n^\star(\xi)$ and $u_e^\star(\xi)$ can be expressed over such a compact set $\Omega$ as
 \begin{align} \label{eq:NN_expression}
     &V^\star(\xi) = W^\top \vartheta(\xi) + \epsilon_c(\xi) , \notag\\
    &u_n^\star(z_p,\xi) = -\frac{1}{2} R^{-1} g_p^\top(z_p) \big(\nabla \vartheta^\top(\xi) W +\nabla \epsilon_c(\xi)\big), \notag\\
    &u_e^{\star}(z_e,\xi) =-\frac{1}{2\gamma^2}T^{-1} g_e^\top(z_e) (\nabla \vartheta^\top(\xi) W+\nabla \epsilon_c(\xi)) ,
 \end{align}
where $W \in \mathbb{R}^{l_\vartheta}$ is an ideal NN weight satisfying $\lVert W \rVert \leq \bar{W}$, $\vartheta(\cdot): \mathbb{R}^{2n} \rightarrow \mathbb{R}^{l_\vartheta}$ is the continuously differentiable activation function with $l_\vartheta$ neurons, $\epsilon_c(\xi)$ is the function approximation error, $\nabla \vartheta= \partial \vartheta/\partial \xi$ and $\nabla \epsilon_c =\partial\epsilon_c/\partial \xi$.

Since the ideal weight $W$ is unknown, a critic NN, an actor NN and a perturber NN are used to approximate $V^\star(\xi)$, $u_n^\star(z_p,\xi)$ and $u_e^{\star}(z_e,\xi)$ as
\begin{align} \label{eq:policy_i}
    &\hat{V}(\xi) = \hat{W}_c^\top \vartheta(\xi), \notag\\
    &\hat{u}_n(z_p,\xi) = -\frac{1}{2} R^{-1} g_p^\top(z_p) \nabla \vartheta^\top (\xi)\hat{W}_a, \notag\\
    &\hat{u}_e(z_e,\xi) =-\frac{1}{2\gamma^2}T^{-1} g_e^\top(z_e) \nabla \vartheta^\top(\xi) \hat{W}_p,
\end{align}
where $\hat{W}_c$ is the critic NN weight, $\hat{W}_a$ is the actor NN weight and $\hat{W}_p$ is the perturber NN weight.

\subsection{Model-enhanced safe robust RL}
This section develops update laws for NNs to learn the optimal value function $V^\star$ and optimal control policies $u_n^\star ~\textnormal{and} ~u_e^\star$ while relaxing persistence condition using model information.
The Bellman error is defined as 
\begin{align} \label{eq:bellman_error}
    \delta(t) = \hat{W}_c^{\top}  \varphi(t) + l( \xi(\tau), \hat{u}_n(\tau),\hat{u}_e(\tau)),
\end{align}
where $\varphi=\nabla\vartheta^\top(\xi)(\mathcal{F}(z_p,z_e)+g_p(z_p)\hat{u}_n - g_e(z_e)\hat{u}_e)$.
%
% Since $u_e$ is unavailable for online training and $u_p \neq \hat{u}_n$, the learning process is actually off-policy and unavoidably exists approximation bias.
% %
% Given that, we utilize the undisturbed system dynamics to generate data for correcting the estimation bias.
% %
% Bt doing so, the learning performance can be enhanced using the system knowledge.
%
To relax the persistence excitation condition required in classical RL, we generate simulated trajectories by extrapolating the current state to its neighborhood.
Denote $\{z_{p}^i(t),z_{e}^i(t),\xi_i(t)\}_{i=1}^N$ as a collection of $N$ state points sampled from $(z_p^i,z_e^i,\xi_i) \in \Omega_p \times \Omega_e \times \Omega$ at $t \geq 0$ under $\hat{u}^i_n := \hat{u}_n(z_p^i,\xi_i)$ and $\hat{u}^i_e := \hat{u}_e(z_e^i,\xi_i)$, where $\xi_i = z_{p}^i - z_{e}^i$.
The Bellman error at each sampled state $\delta_i:=\delta(\xi_i,z_p^i,z_e^i)$ is computed as
\begin{align} \label{eq:bellman_error_sample}
    \delta_i(t) = \hat{W}_c^{\top} \varphi_i(t) + l( \xi_i(t), \hat{u}^i_n(t),\hat{u}^i_e(t)), 
\end{align}
where $\varphi_i=\nabla\vartheta^\top(\xi_i)(\mathcal{F}(z_p^i,z_e^i)+ g_p(z_p^i)\hat{u}^i_n - g_e(z_e^i)\hat{u}^i_e)$.
%
% To minimize 
% \begin{align*}
%     e = \frac{1}{2} \delta^\top \delta + \frac{1}{2} \sum_{i=1}^N \delta_i^\top \delta_i,
% \end{align*}
The weight tuning law of the critic NN is 
\begin{align}
    \dot{\hat{W}}_c(t) &= -\Gamma(t) \Big(k_{c1} \frac{\varphi(t)}{\rho^2(t)} \delta(t) + \frac{k_{c2}}{N}\sum_{i=1}^N \frac{\varphi_i(t)}{\rho_i^2(t)} \delta_i(t)\Big) \label{eq:update_Wc_a} \\
    \dot{\Gamma}(t)& = \beta \Gamma(t) - \Gamma(t) \Big( k_{c1} \Lambda(t) + \frac{k_{c2}}{N}\sum_{i=1}^N \Lambda_i(t) \Big) \Gamma(t) \label{eq:update_Wc_b}
\end{align}
where $\rho(t) = 1 + \varphi(t)^{\top} \varphi(t)$, $\rho_i(t) = 1 + \varphi_i(t)^{\top} \varphi_i(t)$,
$$\Gamma(0) > 0,\ \Lambda(t) = \frac{\varphi(t) {\varphi(t)}^{\top}}{\rho^2(t)}\ \ \textnormal{and}\ \  \Lambda_i(t) = \frac{\varphi_i(t) {\varphi_i(t)}^{\top}}{\rho_i^2(t)},$$
are normalizing signals, $k_{c1},k_{c2}> 0$ are bounded learning gains, and $\beta>0$ is a forgetting factor. 

Let
\begin{align*}
    \mathcal{G}_{\vartheta}(\xi,z_p) &= \nabla\vartheta(\xi) g_p(z_p) R^{-1} g_p^\top(z_p)\nabla\vartheta^\top(\xi), \\
    \mathcal{H}_{\vartheta}(\xi,z_e) &=  \nabla\vartheta(\xi) g_e(z_e)  
T^{-1}g^\top_e(z_e)\nabla\vartheta^\top(\xi).
\end{align*}
Then, the weight tuning laws of the actor NN and the perturber NN  are  designed as 
\begin{align}
    \dot{\hat{W}}_a &= \operatorname{Proj}_{\mathcal{W}} \biggl\{-k_{a1}\left(\hat{W}_a - \hat{W}_c \right) - k_{a2} \hat{W}_a \nonumber \\
    & +  \frac{k_{c1}}{4 }\mathcal{G}_{\vartheta}\hat{W}_a \frac{\varphi^{\top}}{\rho^2}  \hat{W}_c + \frac{k_{c2}}{4 N}\sum_{i=1}^N \mathcal{G}_{\vartheta,i}\hat{W}_a \frac{\varphi_i^{\top}}{\rho_i^2}  \hat{W}_c \biggl\}, \label{eq:un_update_law_a}\\
    \dot{\hat{W}}_p &=  \operatorname{Proj}_{\mathcal{W}} \biggl\{-k_{p1}\left(\hat{W}_p - \hat{W}_c \right) - k_{p2} \hat{W}_p  \nonumber \\
    - &  \frac{k_{c1}}{4 \gamma^2}  \mathcal{H}_{\vartheta} \hat{W}_p \frac{\varphi_i^{\top}}{\rho_i^2}  \hat{W}_c - \frac{k_{c2}}{4 N \gamma^2}\sum_{i=1}^N  \mathcal{H}_{\vartheta,i} \hat{W}_p \frac{\varphi_i^{\top}}{\rho_i^2}  \hat{W}_c \biggl\}\label{eq:un_update_law_b},
\end{align}
where $k_{a1}, k_{a2},k_{p1} ~\textnormal{and}~ k_{p2} > 0$ are learning gains, $\mathcal{G}_{\vartheta,i}:= \mathcal{G}_\vartheta(\xi_i,z_p^i)$, $\mathcal{H}_{\vartheta,i}:= \mathcal{H}_\vartheta(\xi_i,z_e^i)$ and $\operatorname{Proj}_{\mathcal{W}} \{\cdot \}$ is a smooth operator for projecting $\hat{W}_a$ and  $\hat{W}_p$ within a compact set $\mathcal{W}$.

The detailed safe robust RL algorithm is presented in Algorithm \ref{Algorithm:RL} to online learn the Nash solution of PE game with safety guarantee and robustness.

\begin{algorithm}
\caption{Safe Robust RL for PE Differential Game}
\label{Algorithm:RL}
\begin{algorithmic}[1]
% \Require $z_p(0)$, $\xi(0)$, $x_o(0)$, $t_{end}$ ($t_{end}$ is the running time)
\Require $z_p(0)$, $\xi(0)$, $x_o(0)$, $t_{end}$ ($t_{end}$ is the running time)

\Ensure $\hat{W}_c(t_{end}), \hat{W}_a(t_{end}), \hat{W}_p(t_{end})$

\State \textbf{Initialization:} 
\Statex Game settings: $Q$, $R$, $T$, $\gamma$
\Statex Control gains: $K_1$, $K_2$, $K_3$, $K_4$, $\tilde{K}_x$, $\tilde{K}_v$, $\rho$
\Statex Learning weights: $k_{c1}$, $k_{c2}$, $k_{a1}$, $k_{a2}$, $k_{p1}$, $k_{p2}$, $N$
\Statex Adaptive weights: $\hat{W}_c(0)$, $\hat{W}_a(0) $, $\hat{W}_p(0)$, $\Gamma(0)$, $\mathcal{K}(0)$

\While{$t \leq t_{end}$}
    \State Update safe control policy $u_s(t)$ by \eqref{eq:safe_control}
    \State Update min-max policy $\hat{u}_n(t)$ by \eqref{eq:policy_i}
    \State Update robust control policy $u_r(t)$ by \eqref{eq:robust_control_tanh}
    \State Update pursuer's input $u_p(t) =u_s(t)+\hat{u}_n(t)+u_r(t)$
    \State  Collect real-time data \((\varphi, l(\xi, \hat{u}_n, \hat{u}_e))\)
    \State Generate simulated data \(\{(\varphi_i, l(\xi_i, \hat{u}_n^i, \hat{u}_e^i))\}_{i=1}^N\)
    \State \textbf{Policy evaluation:}
    \State \quad Compute bellman error $\delta(t)$ by \eqref{eq:bellman_error}
    \State \quad \textbf{for} $i=1,2,\cdots,N$ \textbf{do}
    \State \quad \quad Compute simulated bellman error $\delta_i(t)$ by \eqref{eq:bellman_error_sample}
    \State \quad \textbf{end for}
    \State \quad Update critic NN $\hat{W}_c(t)$ by \eqref{eq:update_Wc_a}
    \State \quad Update adaptive gain $\Gamma(t)$ by \eqref{eq:update_Wc_b}
    \State \textbf{Policy improvement:}
    \State \quad Update actor NN $\hat{W}_a(t)$ by \eqref{eq:un_update_law_a}
    \State \quad Update perturber NN $\hat{W}_p(t)$ by \eqref{eq:un_update_law_b}
\EndWhile
% \State \textbf{Return} $\hat{W}_c(t_{end})$, $\hat{W}_a(t_{end})$, $\hat{W}_p(t_{end})$
\end{algorithmic}
\end{algorithm}

%%================================
\subsection{Safety and stability analysis}
In this section, we analyze the safety and stability of the closed-loop PE system under the evader's input $u_e$ and the pursuer's input
\begin{align} \label{eq:behvaior_policy}
    u_p =& \hat{u}_n + u_r + u_s \notag
    \\=&-\frac{1}{2} R^{-1}g_p^\top(z_p) \nabla \vartheta^\top \hat{W}_a- \mathcal{K}\tanh \left(\frac{g_p^\top(z_p) \chi^\top(z_p) s}{\rho}\right) \notag  \\
    &+\zeta_x(z_p,t) \breve{\beta}_1^\top(z_p,t) + \zeta_v(z_p) \breve{\beta}_2^\top(z_p).
\end{align}

{\wxy The effectiveness of the proposed control policy \eqref{eq:behvaior_policy} in enforcing safety is presented in the following proposition.}
\begin{proposition} \label{pro:safety}
    Consider the system \eqref{eq:pursuer_dynamics} and the safe set \eqref{eq:safetyset}.
    Let the control policy $u_p$ be designed as \eqref{eq:behvaior_policy}.
    Let $\zeta_x(z_p,t)$ and $\zeta_v(z_p)$ satisfy \eqref{bx_pro} and \eqref{bu_pro}, respectively.
    Suppose Assumptions \ref{as:disturbance}-\ref{as:u_e} hold, $z_p \in \Omega_p$ and $\xi \in \Omega$.
    %
    % Then the safe control policy in \eqref{eq:safe_control}, the sliding mode control in \eqref{eq:robust_control_tanh} and the update law \eqref{eq:un_update_law} can guarantee the safety of \eqref{eq:pursuer_dynamics}, i.e., $\phi_{x,1}(\xi, t), b_v(z_p) > 0 ~~\forall t \geq 0$.
    Then the safe robust control policy in \eqref{eq:behvaior_policy} can guarantee the safety of \eqref{eq:pursuer_dynamics}, i.e., $\phi_{x,1}(\xi, t), b_v(z_p) > 0 ~~\forall t \geq 0$.
\end{proposition}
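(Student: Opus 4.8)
The plan is to establish Proposition~\ref{pro:safety} as a direct consequence of Theorem~\ref{thm:safe_analysis}, the sole gap being to verify that the learned nominal policy $\hat{u}_n$ from \eqref{eq:policy_i} inherits the regularity that Theorem~\ref{thm:safe_analysis} demands of a generic nominal term $u_n$. Observe that the closed-loop input in \eqref{eq:behvaior_policy} has exactly the three-term structure $u_p = \hat{u}_n + u_r + u_s$ analyzed in Theorem~\ref{thm:safe_analysis}, with $u_s$ of the form \eqref{eq:safe_control} and $u_r$ of the form \eqref{eq:robust_control_tanh}. Hence, once $\hat{u}_n$ is shown to be a locally Lipschitz and bounded nominal term on the compact sets, and provided the remaining hypotheses of Theorem~\ref{thm:safe_analysis} carry over (namely the initial condition $z_p(0) \in \mathscr{C}_\cap^I(0)$ and the Gram-matrix condition \eqref{schur-Matrix}), the conclusion $\phi_{x,1}(z_p,t), b_v(z_p) > 0$ transfers verbatim.

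First I would establish that $\hat{u}_n$ is locally Lipschitz and uniformly bounded on $\Omega_p \times \Omega$. The decisive observation is that the actor update law \eqref{eq:un_update_law_a} employs the projection operator $\operatorname{Proj}_{\mathcal{W}}\{\cdot\}$, which confines $\hat{W}_a$ to the compact set $\mathcal{W}$ for all $t \geq 0$, so $\lVert \hat{W}_a \rVert$ is bounded by a constant. Since $\vartheta$ is continuously differentiable, $\nabla\vartheta$ is continuous and therefore bounded on the compact set $\Omega$, and $g_p$ is locally Lipschitz and bounded on $\Omega_p$. Combining these facts with the explicit form $\hat{u}_n = -\tfrac{1}{2} R^{-1} g_p^\top(z_p)\nabla\vartheta^\top(\xi)\hat{W}_a$ yields both the local Lipschitz property and a uniform bound $\lVert \hat{u}_n \rVert \leq \bar{u}_n$ on $\Omega_p \times \Omega$.

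Next I would note that $u_r$ is likewise bounded: by Lemma~\ref{lm:K_upper_bound} the adaptive gain satisfies $\mathcal{K}(t) \leq \mathcal{K}^*$, and $\tanh(\cdot)$ is bounded componentwise by $1$, so $\lVert u_r \rVert \leq \sqrt{p}\,\mathcal{K}^*$. Consequently the aggregate term $\hat{u}_n + u_r$ is uniformly bounded on the compact sets, precisely as a locally Lipschitz stabilizing controller would be in Theorem~\ref{thm:safe_analysis}. The safeguard term $u_s = \zeta_x \breve{\beta}_1^\top + \zeta_v \breve{\beta}_2^\top$, with $\zeta_x, \zeta_v$ built from barrier energy functions satisfying \eqref{bx_pro}-\eqref{bu_pro}, grows without bound as $\phi_{x,1} \to 0^+$ or $b_v \to 0^+$; this is the mechanism by which $u_s$ dominates the bounded quantity $\hat{u}_n + u_r + d$ near the boundary and keeps the trajectory inside $\mathscr{C}_\cap^I$. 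Since this dominance argument relies only on the boundedness of the nominal term and not on its stabilizing or optimal character, replacing $u_n$ by the bounded $\hat{u}_n$ leaves the safety conclusion intact.

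The main obstacle, and the step deserving the most care, is making the boundedness of $\hat{u}_n$ genuinely rigorous throughout the run: one must invoke the projection operator to guarantee $\hat{W}_a \in \mathcal{W}$ even during the transient learning phase, when $\hat{u}_n$ need not be stabilizing, and then confirm that the safety argument in the proof of Theorem~\ref{thm:safe_analysis} never appealed to stability or optimality of the nominal input. With these two points settled, Proposition~\ref{pro:safety} is immediate: all hypotheses of Theorem~\ref{thm:safe_analysis} hold with $u_n$ replaced by $\hat{u}_n$, so $\phi_{x,1}(z_p,t), b_v(z_p) > 0$ for all $t \geq 0$.
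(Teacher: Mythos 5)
Your proposal follows essentially the same route as the paper's proof: bound $\hat{W}_a$ via the projection operator in \eqref{eq:un_update_law_a}, conclude boundedness of $\hat{u}_n$ on the compact sets $\Omega_p \times \Omega$ and of $u_r$ via Lemma \ref{lm:K_upper_bound}, and then rerun the barrier argument of Theorem \ref{thm:safe_analysis} with the drift terms $P_x$, $P_v$ redefined using $\hat{u}_n$ in place of $u_n$, observing that only boundedness of the nominal term (not stability or optimality) is used there. Your explicit flagging that the initial-condition and Gram-matrix hypotheses of Theorem \ref{thm:safe_analysis} must carry over is, if anything, slightly more careful than the paper's own wording.
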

\begin{proof}
% This proof is finished by showing the boundness of $u_r$ and $\hat{u}_n$.
%
Since $\hat{W}_a$ is bounded from \eqref{eq:un_update_law_a}, we denote the bound of $\hat{W}_a$ as $\bar{W}_a$.
For continuously differentiable $\vartheta$ and locally Lipschitz $g_p$, one has that $\vartheta$ and $g_p$ are bounded in the compact sets $\Omega$ and $\Omega_p$, respectively.
Then $\hat{u}_n$ is bounded as $\lVert \hat{u}_n \rVert \leq \frac{1}{2}\bar{\sigma}(R^{-1}) \overline{\lVert g_p(z_p)\rVert}_{\Omega_p}  \overline{\lVert \nabla \vartheta(\xi)\rVert}_{\Omega}  \bar{W}_a<\infty$.
From Assumptions \ref{as:disturbance}-\ref{as:u_e} and Lemma \ref{lm:K_upper_bound}, the signals $\breve{\beta}_1$, $\breve{\beta}_2$, $d$, and $u_r$ are bounded.
By redefining $P_x=L_{f_p} \pi + \frac{\partial \pi}{\partial x_o}^\top \frac{\partial x_o}{\partial t}+ \breve{\beta}_1(z_p,t)(u_r+\hat{u}_n+d)$ and $P_v = L_{f_p}b_v + \breve{\beta}_2(z_p)(u_r +\hat{u}_n+d)$, safety of \eqref{eq:pursuer_dynamics} can be shown by following the same development as that of Theorem \ref{thm:safe_analysis} based on the fact that $P_x$ and $P_v$ are bounded for all $z_p \in \Omega_p$ and bounded $x_o$.
\end{proof}

The following assumption is made to guarantee that the NN estimations $\hat{W}_c,~\hat{W}_a,~\textnormal{and}~\hat{W}_p$ converge to the ideal value.
% \begin{assumption} \label{as:G_H}
%  There exist unknown positive constants $\bar{\mathcal{G}}, \underline{\mathcal{G}}$, $\bar{\mathcal{H}}$, $\underline{\mathcal{H}}$ such that $\underline{\mathcal{G}} \leq \lVert \mathcal{G}(\xi) \rVert \leq \bar{\mathcal{G}}$ and $\underline{\mathcal{H}} \leq \lVert \mathcal{H}(\xi) \rVert \leq \bar{\mathcal{H}}$.
% \end{assumption}

\begin{assumption} \label{as:learning}  \cite{kamalapurkar2016model}
% The following conditions hold:
% \begin{enumerate} 
%     \item The activation function $\vartheta(\xi)$ and its derivative $\nabla \vartheta(\xi)$ are  bounded as $\lVert \vartheta(\xi)\rVert \leq \bar{\vartheta}$ and $\bar{\vartheta}_\nabla$ for some unknown positive constants $\bar{\vartheta}$ and $\bar{\vartheta}_\nabla$.
%     \item The sample number $N$ is large enough to satisfy the following condition 
%         \begin{align*}
%         \inf_{t\geq 0} \left\{\underline{\sigma} \left( \frac{1}{N} \sum_{i=1}^N \Lambda_i(t) \right)\right\} \geq  \lambda_c,
%         \end{align*}
%         where $\lambda_c$ is a positive constant.
% \end{enumerate}
The sample number $N$ is large enough to satisfy the following condition 
        \begin{align*}
        \inf_{t\geq 0} \left\{\underline{\sigma} \left( \frac{1}{N} \sum_{i=1}^N \Lambda_i(t) \right)\right\} \geq  \lambda_c,
        \end{align*}
where $\lambda_c$ is a positive constant.
\end{assumption}

%
% To analyze the stability of the closed-loop system under the controller and actuator fault, 
Define the estimation errors of NNs as $$\Tilde{W}_c = W - \hat{W}_c \ ,\Tilde{W}_a = W - \hat{W}_a \ \ \textnormal{and} \ \ \Tilde{W}_p = W - \hat{W}_p,$$  and let $z = \col\bm(\xi, \Tilde{W}_c, \Tilde{W}_a, \Tilde{W}_p \bm)$.
Now consider the following control Lyapunov function 
\begin{align}\label{eq:A3-Lyapunov_function}
    L(z,t) = V^\star(\xi) + V_c(\tilde{W}_c,t) + V_a(\tilde{W}_a) + V_p(\tilde{W}_p),
\end{align}
where $V_c = \frac{1}{2} \tilde{W}_c^{\top} \Gamma^{-1} \tilde{W}_c$, $V_a = \frac{1}{2} \lVert \tilde{W}_a\rVert^2$ and $V_p= \frac{1}{2} \lVert \tilde{W}_p\rVert^2$.
Under Assumption \ref{as:learning}, there exist positive constants $\bar{\Gamma}$ and $\underline{\Gamma}$ such that $\bar{\Gamma} I\leq  \Gamma(t) \leq \underline{\Gamma} I$, $\forall t \geq 0$ (\cite{rushikesh2016efficient}, Lemma 1).
Therefore, $L(z, t)\succ0$ and hence $\eta_1(\lVert z \rVert) \leq L(z, t) \leq \eta_2(\lVert z \rVert)$ holds for some class $\mathcal{K}$ functions $\eta_1$ and $\eta_2$ (\cite{khalil2002nonlinear}, Lemma 4.3).
The following theorem demonstrates the stability of the safe robust RL.

\begin{thm} \label{thm:convergence_RL}
    Consider the PE system \eqref{eq:pursuer_evader_dynamics}, the safe set $\mathscr{C}_p$ and $\mathscr{C}_v$. 
    Define a series of functions \eqref{eq:HO-DRCBF}, the corresponding sets $\mathscr{C}_{x,0}$ and $\mathscr{C}_{x,1}$ associated with the safe set $\mathscr{C}_x$.
    Design the control policy of pursuer as \eqref{eq:behvaior_policy} with $u_p$ defined in \eqref{eq:behvaior_policy}.
    Let the optimal value function and corresponding control policy be approximated by NNs over a compact set $\Omega$ with the critic NN updated by \eqref{eq:update_Wc_a} and \eqref{eq:update_Wc_b}, the actor and perturber NN updated by \eqref{eq:un_update_law_a} and \eqref{eq:un_update_law_b}, respectively.
    Suppose that Assumptions \ref{as:disturbance}-\ref{as:learning} hold, $z_p(0) \in \operatorname{Int}(\mathscr{C}_x) \cap \operatorname{Int}(\mathscr{C}_v)$, $\zeta_x(z_p,t)$ and $\zeta_v(z_p)$ satisfy \eqref{bx_pro} and \eqref{bu_pro}, respectively. 
    Then the trajectory $z(t)$ is uniformly ultimately bounded (UUB).
\end{thm}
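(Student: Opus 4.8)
The plan is to take the composite Lyapunov function $L(z,t)$ in \eqref{eq:A3-Lyapunov_function} and show that its derivative along the closed-loop trajectories admits an estimate of the form $\dot{L} \leq -c\|z\|^2 + \varpi$ for some $c>0$ and a bounded residual $\varpi$. Combined with the sandwich bound $\eta_1(\|z\|) \leq L(z,t) \leq \eta_2(\|z\|)$ established just before the theorem, a standard comparison argument (\cite{khalil2002nonlinear}) then yields that $z(t)$ is UUB. Safety need not be re-examined, since Proposition \ref{pro:safety} already guarantees $\phi_{x,1}(\xi,t),\,b_v(z_p)>0$ for all $t\geq 0$; the analysis focuses entirely on boundedness of $z=\col(\xi,\tilde{W}_c,\tilde{W}_a,\tilde{W}_p)$.

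First I would differentiate $V^\star(\xi)$ along \eqref{eq:pursuer_evader_dynamics} with $u_p=\hat{u}_n+u_r+u_s$. Adding and subtracting the optimal policies \eqref{eq:optimal_control_policy} and invoking the HJI equation \eqref{eq:HJB_equation} collapses the leading term to $-Q(\xi)-u_n^{\star\top}Ru_n^\star+\gamma^2 u_e^{\star\top}Tu_e^\star$, giving a negative-definite contribution $-\underline{\sigma}(Q)\|\xi\|^2$ in the state. The remaining pieces — the term $\nabla V^{\star\top}g_p(\hat{u}_n-u_n^\star)$, which is affine in $\tilde{W}_a$ and the gradient approximation error $\nabla\epsilon_c$, together with the contributions of $u_r$, $u_s$, $d$, and $(u_e-u_e^\star)$ — are controlled by Young's inequality. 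Here I exploit that $\hat{W}_a,\hat{W}_p$ stay in the compact set $\mathcal{W}$ via the projection operator, that $\mathcal{K}$ and $u_r$ are bounded by Lemma \ref{lm:K_upper_bound}, that $u_s$ is bounded since $\zeta_x,\zeta_v,\breve{\beta}_1,\breve{\beta}_2$ are bounded on $\Omega_p$ (as in the proof of Proposition \ref{pro:safety}), that $d$ is bounded by Assumption \ref{as:disturbance}, and that $u_e\in L_2\cap L_\infty$ by Assumption \ref{as:u_e}.

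Next I would differentiate $V_c$, $V_a$, $V_p$. For $V_c=\tfrac{1}{2}\tilde{W}_c^\top\Gamma^{-1}\tilde{W}_c$, using $\dot{\tilde{W}}_c=-\dot{\hat{W}}_c$ from \eqref{eq:update_Wc_a}, the $\Gamma$-dynamics \eqref{eq:update_Wc_b}, and $\tfrac{d}{dt}\Gamma^{-1}=-\Gamma^{-1}\dot\Gamma\Gamma^{-1}$, the cross terms in $\Lambda$ partially cancel and leave $-\tfrac{k_{c1}}{2}\tilde{W}_c^\top\Lambda\tilde{W}_c-\tfrac{k_{c2}}{2}\tilde{W}_c^\top(\tfrac{1}{N}\sum_i\Lambda_i)\tilde{W}_c-\beta V_c$, plus residuals driven by the approximation-error Bellman residual (obtained by substituting $W$ into the HJI so that $\delta=-\tilde{W}_c^\top\varphi+\Delta$ with $\Delta$ collecting $\epsilon_c$ and the policy-mismatch terms). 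Assumption \ref{as:learning} then supplies $\tfrac{1}{N}\sum_i\Lambda_i\succeq\lambda_c I$, yielding a genuinely negative-definite term proportional to $\lambda_c k_{c2}\|\tilde{W}_c\|^2$. For $V_a$ and $V_p$ I would use the standard property that the smooth projection does not degrade the Lyapunov decrease, so $\dot{V}_a\leq-\tilde{W}_a^\top\tau_a$ with $\tau_a$ the unprojected direction in \eqref{eq:un_update_law_a}; writing $\hat{W}_a=W-\tilde{W}_a$, $\hat{W}_c=W-\tilde{W}_c$ produces the stabilizing terms $-(k_{a1}+k_{a2})\|\tilde{W}_a\|^2$ together with a bilinear coupling $k_{a1}\tilde{W}_a^\top\tilde{W}_c$ and bounded terms in $W$ and the bounded feedback block $\mathcal{G}_\vartheta$; the perturber $V_p$ is handled identically via $\mathcal{H}_\vartheta$ and \eqref{eq:un_update_law_b}.

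The main obstacle will be assembling all the bilinear couplings — the $\tilde{W}_a$–$\tilde{W}_c$ and $\tilde{W}_p$–$\tilde{W}_c$ terms from both the actor/perturber laws and the $\mathcal{G}_\vartheta,\mathcal{H}_\vartheta$ feedback blocks — into a single quadratic form in $z$ and certifying that its gain matrix is positive definite, so that the negative terms $-\underline{\sigma}(Q)\|\xi\|^2$, $-\lambda_c k_{c2}\|\tilde{W}_c\|^2$, $-k_{a1}\|\tilde{W}_a\|^2$, $-k_{p1}\|\tilde{W}_p\|^2$ dominate after completing the square. This is precisely where the learning gains $k_{c1},k_{c2},k_{a1},k_{a2},k_{p1},k_{p2}$ enter: one must argue they can be chosen to make the associated symmetric matrix positive definite. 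Once that holds, the bounded residual $\varpi$ — collecting the approximation error, the $L_2$-bounded evader input, the disturbance, and the safe/robust control contributions — gives $\dot{L}\leq -c\|z\|^2+\varpi$, whence $\dot{L}<0$ whenever $\|z\|>\sqrt{\varpi/c}$, establishing UUB of $z(t)$.
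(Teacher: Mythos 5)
Your proposal follows essentially the same route as the paper's proof: the same composite Lyapunov function $L = V^\star + V_c + V_a + V_p$, the HJI-based bound on $\dot{V}^\star$ with boundedness of $u_r$, $u_s$, $d$, $u_e$ absorbed into a residual, the critic analysis via the Bellman-error rewriting $\delta = -\tilde{W}_c^\top\varphi + \Pi$ together with Assumption \ref{as:learning}, the actor/perturber analysis using the projection and $\hat{W} = W - \tilde{W}$ substitutions, and the final assembly of the bilinear $\tilde{W}_c$--$\tilde{W}_a$, $\tilde{W}_c$--$\tilde{W}_p$ couplings into a quadratic form made positive (semi-)definite by gain conditions (the paper certifies its matrix $M$ via the Ger\v{s}gorin circle criterion), yielding $\dot{L}\leq -\kappa\lVert z\rVert^2 + \varrho$ and UUB by the standard Khalil argument. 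The approach is correct and matches the paper's proof in all essential steps.
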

\begin{proof}
    See Appendix \ref{Appendix_convergence_PI}. 
\end{proof}

% To guarantee that $\xi$, $\epsilon_c(\xi)$ and $\nabla \epsilon_c(\xi)$ are bounded during the learning process, a compact set $\mathcal{Z} \subset \Omega \times \mathbb{R}^{3 l_{\vartheta}}$, which contains the composite state trajectory, is needed.
% %
% Interested readers can refer to \cite{kamalapurkar2016model} for deriving an algorithm that constructs such a compact set.

% \begin{remark}
%     Assumption \ref{as:G_H} can accommodate many systems, such as unmanned aerial vehicle and manipulator systems.
%     %
%     This assumption can be relaxed in safety analysis if the safe set $\mathscr{C}_x$ and $\mathscr{C}_v$ are compact.
%     %
%     Moreover, in the stability analysis, we also find that the boundness of $z$ is related to the maximum of $B_x$ and $B_v$.
%     %
%     If the compact set $\mathcal{Z}$ such that the $\lVert \nabla_x B \rVert_\infty$ and $\lVert \nabla_v B \rVert_\infty$ in $\varrho$ satisfies $$\bigg\{ z \in \mathbb{R}^{n+3l_\vartheta}:~ \lVert z \rVert \leq  \sqrt{\frac{2\varrho}{\kappa}}\bigg\} \subset \mathcal{Z}$$ exists, then the assumption can be relaxed even the safe set is not compact.
% \end{remark}

\section{Simulation Example}
% {\color{blue} xj need to include more examples or more scenarios? or quantitative comparison?}Inspired by \cite{zhang2020double},
Consider the pursuer as a wheeled moving robot described by $\dot{x}_p = v_p, \dot{v}_p = f(x_p,v_p) + u_p + d$ and the evader robot as $\dot{x}_e = v_e, \dot{v}_e =f(x_e,v_e) + u_e$, where $f(\mu_1,\mu_2) = 0.1\cos(0.08 \mu_1)\sin(0.03\mu_2) -0.1\cos(0.03\mu_1)\tanh(0.08\mu_2)+0.05\mu_2$ represents the inherent nonlinearities; $u_p$ and $u_e$ are the control force of pursuer and evader robot, respectively;
 $x_p = \col(x_{p,1},x_{p,2}) \in \mathbb{R}^2$ and $v_p = \col(v_{p,1},v_{p,2})\in \mathbb{R}^2$, where $x_{p,1},v_{p,1}$ are the position and velocity along $X$ axis, and $x_{p,2},v_{p,2}$ are the position and velocity along $Y$ axis.
Then the dynamics of the PE error $\xi = \col(\xi_x,\xi_v)$ is 
\begin{align*}
         \dot{\xi}  =
    \begin{bmatrix}
         v_p-v_e\\
        f(x_p,v_p) - f(x_e,v_e)
    \end{bmatrix}
    +
    \begin{bmatrix}
        0 \\
        u_p+d
    \end{bmatrix}
    -
    \begin{bmatrix}
        0 \\
        u_e
    \end{bmatrix}.
\end{align*}

Recall $z_p=\col(x_p,v_p)$.
The safe set for position constraint of the pursuer is  $\mathscr{C}_x = \{z_p \in \mathbb{R}^4:~b_p(z_p,x_o)\geq 0\}$, where $b_p(z_p,x_o) = k_p(\lVert x_p -x_o  \rVert^2 - r_o^2) \geq 0$, $k_p$ is a positive number, and $r_o \in \mathbb{R}^2$ is the minimum safety distance between the pursuer and the obstacle.
The safe set for velocity constraint is  $\mathscr{C}_v = \{z_p  \in \mathbb{R}^4:~b_v(z_p)\geq 0\}$, where $b_v(z_p) = v_{p,1}-v_{\textnormal{min}}$ and $v_{\textnormal{min}} \in \mathbb{R}$ is the minimum velocity of $v_{p,1}$. Note that $v_{p,1}>0$ ($v_{p,1}<0$) means that the pursuer is heading right (left) along the X axis.
Design $$\tilde{b}_x = 2k_p(x_p-x_o)^\top v_p - \frac{k_p}{4k} \lVert x_p -x_o\rVert^2 -  kk_p\eta^2$$ and we obtain $\phi_{x,0} = b_x$ and $\phi_{x,1} = \tilde{b}_x + p_1\phi_{x,0}$.
Further, $\breve{\beta}_1(z_p,t)$ and $\breve{\beta}_2(z_p)$ can be calculated as $\breve{\beta}_1(z_p,t) = 2(x_p-x_o)^\top$ and $\breve{\beta}_2(z_p) = [1,0]$. 
According to \eqref{eq:safe_control}, the safe control policy can be designed as $u_s = -2\tilde{K}_x \nabla B_x(\phi_{x,1})(x_p-x_o)^\top + \tilde{K}_v\nabla B_v(b_v)$, where $B_x(\phi_{x,1}) = 1/\phi_{x,1}$ and $B_v(b_v) = 1/b_v$.

The minimum safety distance is $r_o = 5~ \textnormal{m}$, and the minimum velocity is $v_{\textnormal{min}} = -20 \textnormal{m/s}$.
The initial state of pursuer robot is $x_p(0) = \col(0,0), v_p(0) = \col(0,0)$ and the initial state of evader robot is $x_e(0) = \col(2,5), v_e(0) = \col(0,0)$.
The evader is tracking a prescribed trajectory, which satisfies Assumption \ref{as:u_e}.
%
% The input of evader $u_e$ is designed to track a prescribed trajectory, which satisfies Assumption \ref{as:u_e}.
%
% The dynamics of the moving obstacle is $\dot{x}_o = v_o$, where its velocity is $v_o= -k_{o}(x_o - x_{o,d})$ and $x_{o,d}=\col(40,40)$ is its desired position.
The obstacle is starting from $x_o(0) = \col(10, 10)$ and moving towards its destination $x_{o,d}=\col(40,40)$ at a speed of $v_o= -k_{o}(x_o - x_{o,d})$.
The CBF parameters are set as $k_p = 0.1, ~p_1 = 50$, $k = 1$, $\tilde{K}_x =  1000$ and $\tilde{K}_v =  10$.
The parameters for adaptive sliding control are set as $K_1 = 1,~ K_2 = 10,~ K_3 = 10,~ K_4 = 10, ~\rho =0.1$.
The parameters for PE game is set as $Q(\xi) = \xi^\top \breve{Q} \xi$ with $\breve{Q} = R = T = I$ and $\gamma = 10$.
The activation function is selected as $\vartheta(\xi) = \col(\xi_x,\xi_v) \otimes\col(\xi_x,\xi_v)$, and the weights for NN is initialized as $\hat{W}_c(0) = \hat{W}_a(0)=\hat{W}_p(0) = \col(0.6076, 0.9674, 1.865,  0.7862, -0.9773,  0.5243, -0.5635,$ $0.6471, 3.0772,  -0.8426)$.
The disturbance $d= \col(d_1,d_2)$ where 
$d_1(t) = d_{w,1}(t) + 2\sin(5\pi t) + \cos(2\pi t)$, $d_2(t) = d_{w,2}(t) + \cos(10\pi t)$, $d_{w,1}$ and $d_{w,2}$ are random zero-mean disturbances of magnitude bounded by $\bar{d}_w = 5 \textnormal{m/s}^2$.

Let pursuer $1$ and pursuer $2$ be two pursuers implemented by safe robust RL (i.e., $u_p=u_s+u_r+\hat{u}_n$) and robust RL (i.e., $u_p=u_r+\hat{u}_n$), respectively.
The evader tracks a piece-wise continuous target position $x_{e,d}(t)$, where $x_{e,d}(t) = \col(38,10)$ for $t \in[0,6)\textnormal{s}$, $x_{e,d}(t) = \col(10,25)$ for $t \in[6,10)\textnormal{s}$, $x_{e,d}(t) = \col(42,36)$ for $t \in[10,15)\textnormal{s}$ and $x_{e,d}(t) = \col(10,48)$ for $t \geq 15\textnormal{s}$.
%
% The results of PE game are given in Fig. \ref{fig:3D} and \ref{fig:2D}.
% %
% From Fig. \ref{fig:3D}, the moving obstacle moves on the way where pursuer tracks the evader, and this demonstrates the necessity of guaranteeing safety in PE game.
% %
% We can observe from Fig. \ref{fig:2D} that the pursuer under safe robust RL makes two turns close to the obstacle while tracking the evader compared with the pursuer under robust RL, which means our safe control policy works on collision avoidance.

The position trajectories of PE game are given in Fig. \ref{fig:2D}.
%
% Fig. \ref{fig:3D}, the moving obstacle moves on the way where pursuer tracks the evader, and this demonstrates the necessity of guaranteeing safety in PE game.
%
{\hw The pursuer under safe robust RL makes a sharp turn when $t=7\textnormal{s}$ at the speed of $v_p=\col(-8.03,-2.91) ~\textnormal{m/s}$; and make another sharp turn when $t=10.5\textnormal{s}$ at the speed of $\col(2.7,5.94)~\textnormal{m/s}$.
This shows our control algorithm enables the pursuer to successfully avoid the obstacle at high speeds during the whole process and catches the evader at $t = 23.5\textnormal{s}$. However, when the pursuer is controlled by the robust RL algorithm, unfortunately, it hits the obstacle at $t=7.07\textnormal{s}$ and $t=10.76\textnormal{s}$.}
Figure \ref{fig:constraint} further demonstrates the effectiveness of our safe control policy.
Collision avoidance is achieved by our method while the pursuer under robust RL cannot avoid collision (see Fig. \ref{fig:constraint_pos}).
The pursuer under robust RL violates velocity constraint at $t = 6.5 \textnormal{s}$ and $t = 15.5 \textnormal{s}$, whereas the pursuer under safe robust RL keeps $v_{p,1} \geq v_{min}$ for all $t \geq 0$ (see Fig. \ref{fig:constraint_vel}).

\begin{figure*}[htbp]
    \centering
    \begin{subfigure}{0.23\textwidth}
        \includegraphics[width=\linewidth]{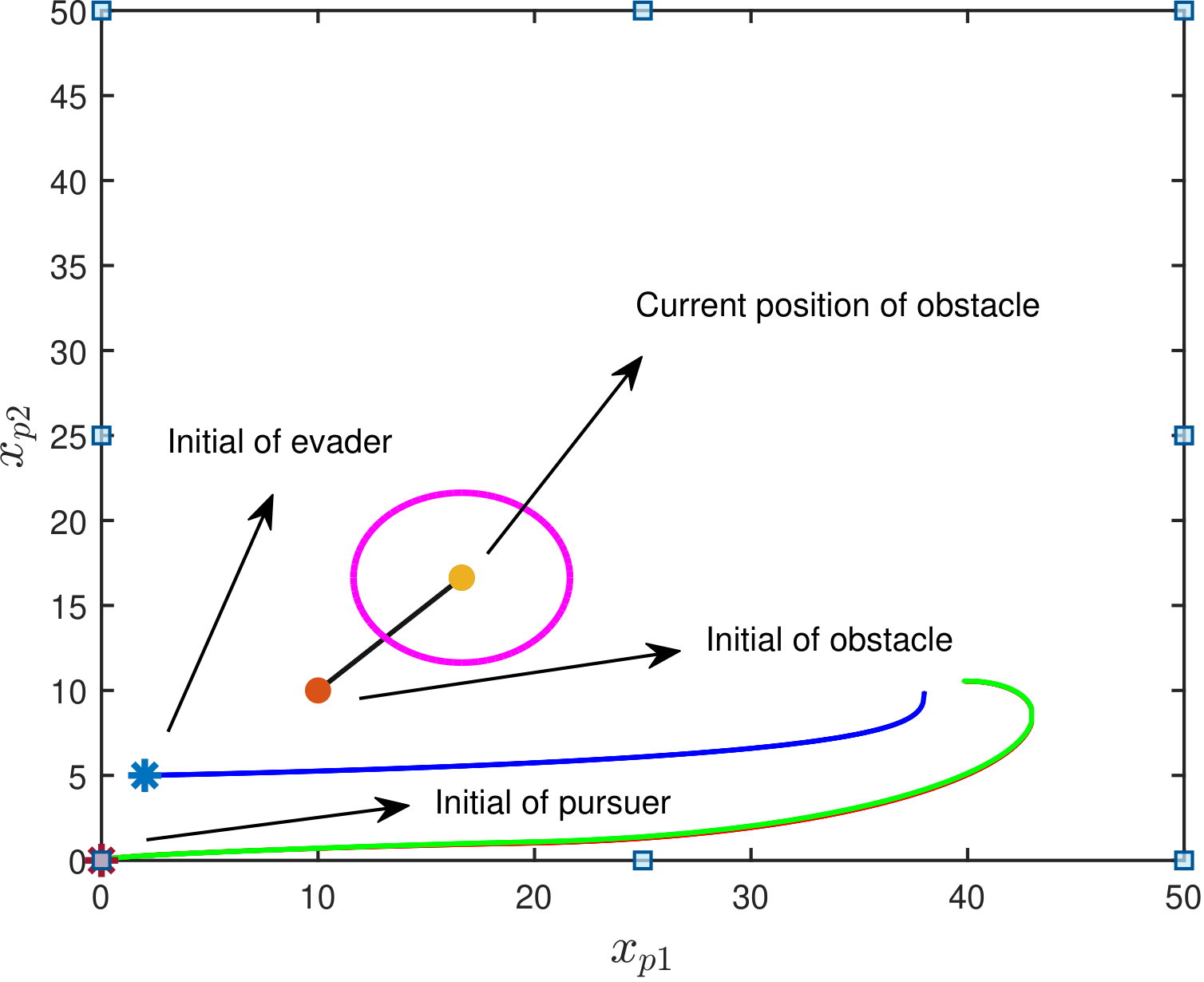}
        \caption{$t \in [0,5]\textnormal{s}$}
        \label{fig:a}
    \end{subfigure}
    \begin{subfigure}{0.23\textwidth}
        \includegraphics[width=\linewidth]{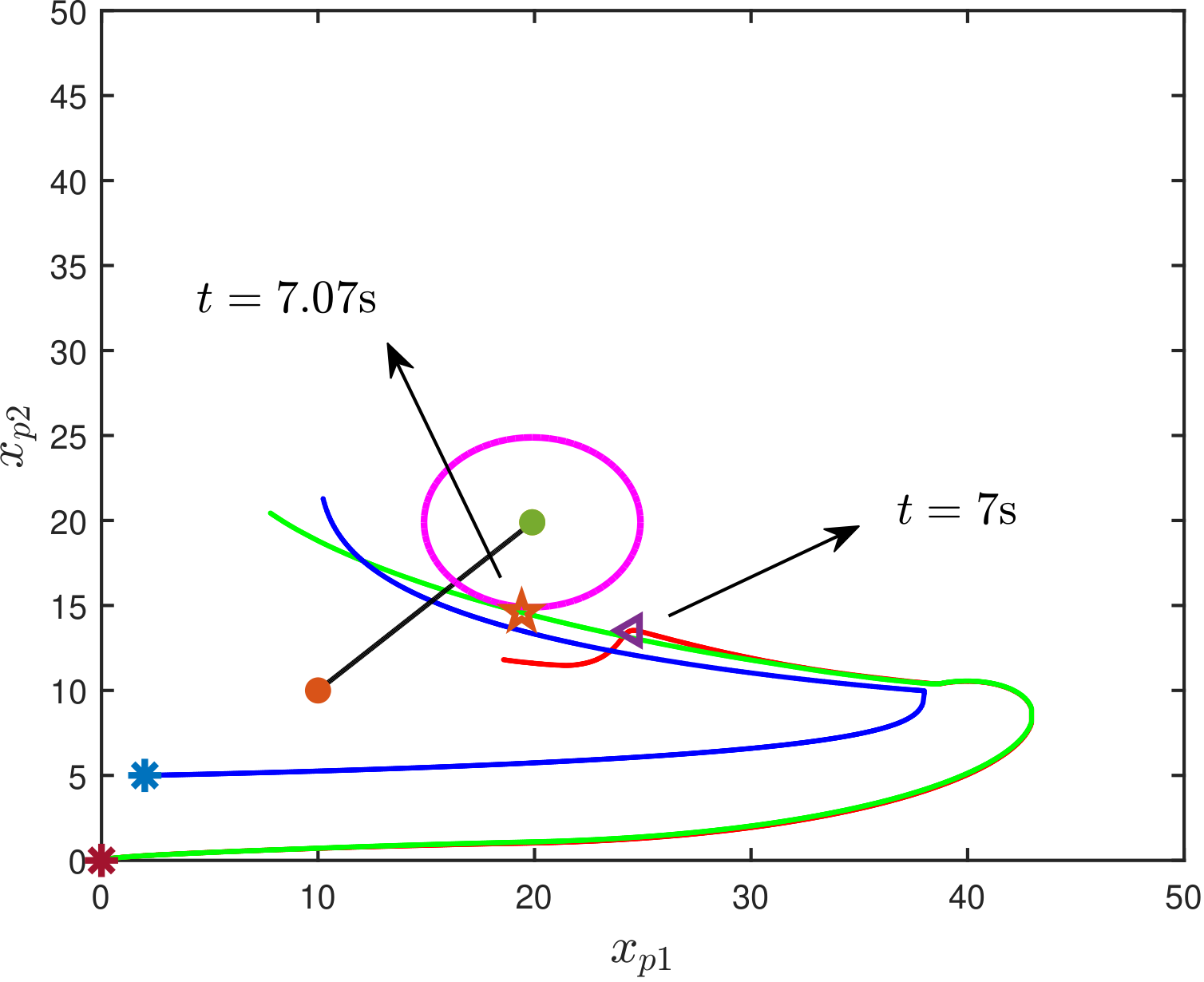}
        \caption{$t \in [0,8]\textnormal{s}$}
        \label{fig:b}
    \end{subfigure}
    \begin{subfigure}{0.23\textwidth}
        \includegraphics[width=\linewidth]{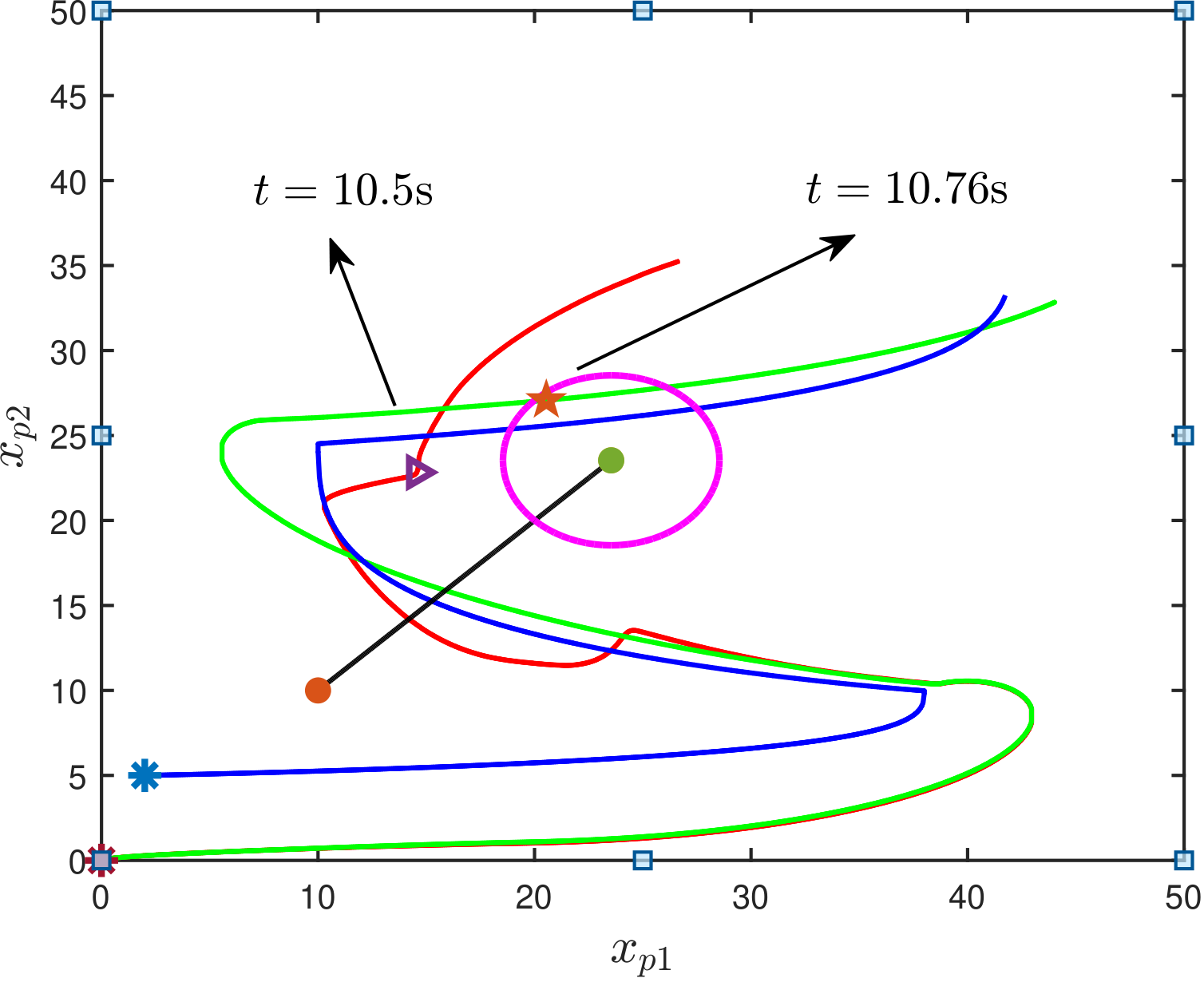}
        \caption{$t \in[0,12]\textnormal{s}$}
        \label{fig:c}
    \end{subfigure}
    \begin{subfigure}{0.23\textwidth}
        \includegraphics[width=\linewidth]{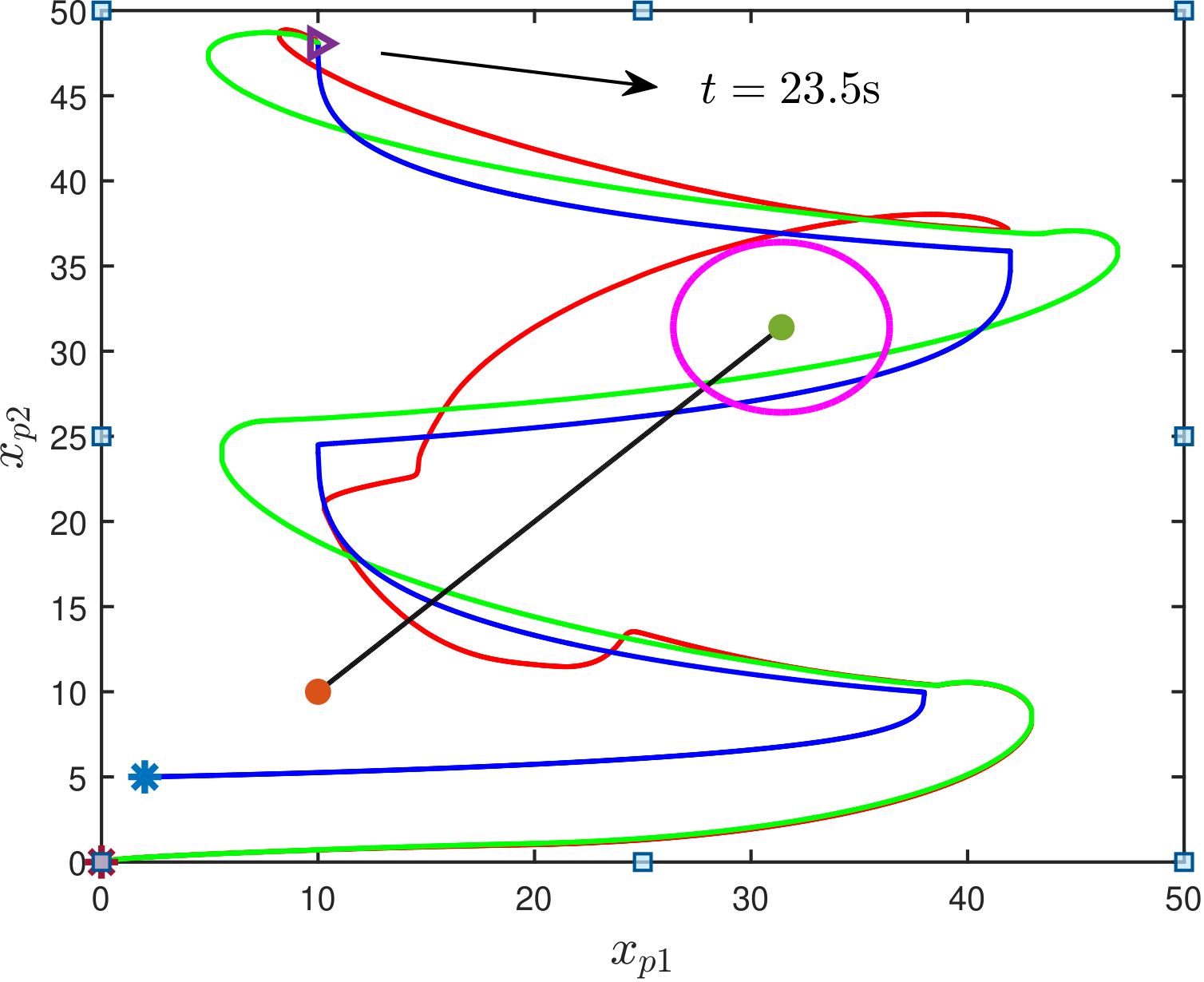}
        \caption{$t \in[0,25]\textnormal{s}$}
        \label{fig:d}
    \end{subfigure}
    \caption{{\hw Profiles of $b_p$ and $b_v$ when the pursuer is controlled by safe robust RL (red line) and robust RL (blue line) and  $b_p< 0$ means pursuer collides with the obstacle, and $b_v< 0$ means velocity constraint is violated.}}
    \label{fig:2D}
\end{figure*}

\begin{figure}[htbp]
    \centering
    \begin{subfigure}{0.48\textwidth}
        \includegraphics[width=\linewidth]{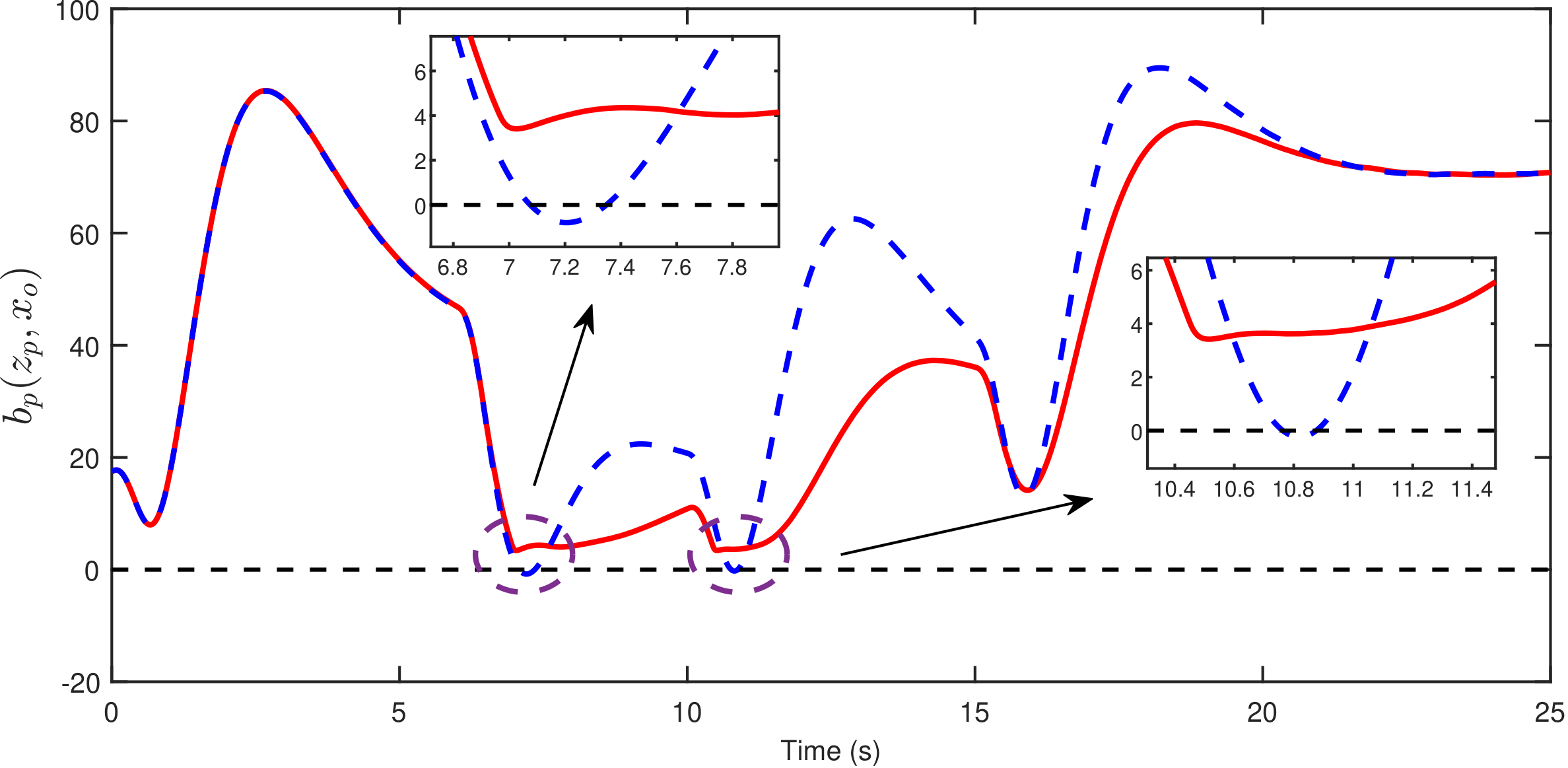}
        \caption{Position constraint}
        \label{fig:constraint_pos}
    \end{subfigure}
    \hfill
    \begin{subfigure}{0.48\textwidth}
        \includegraphics[width=\linewidth]{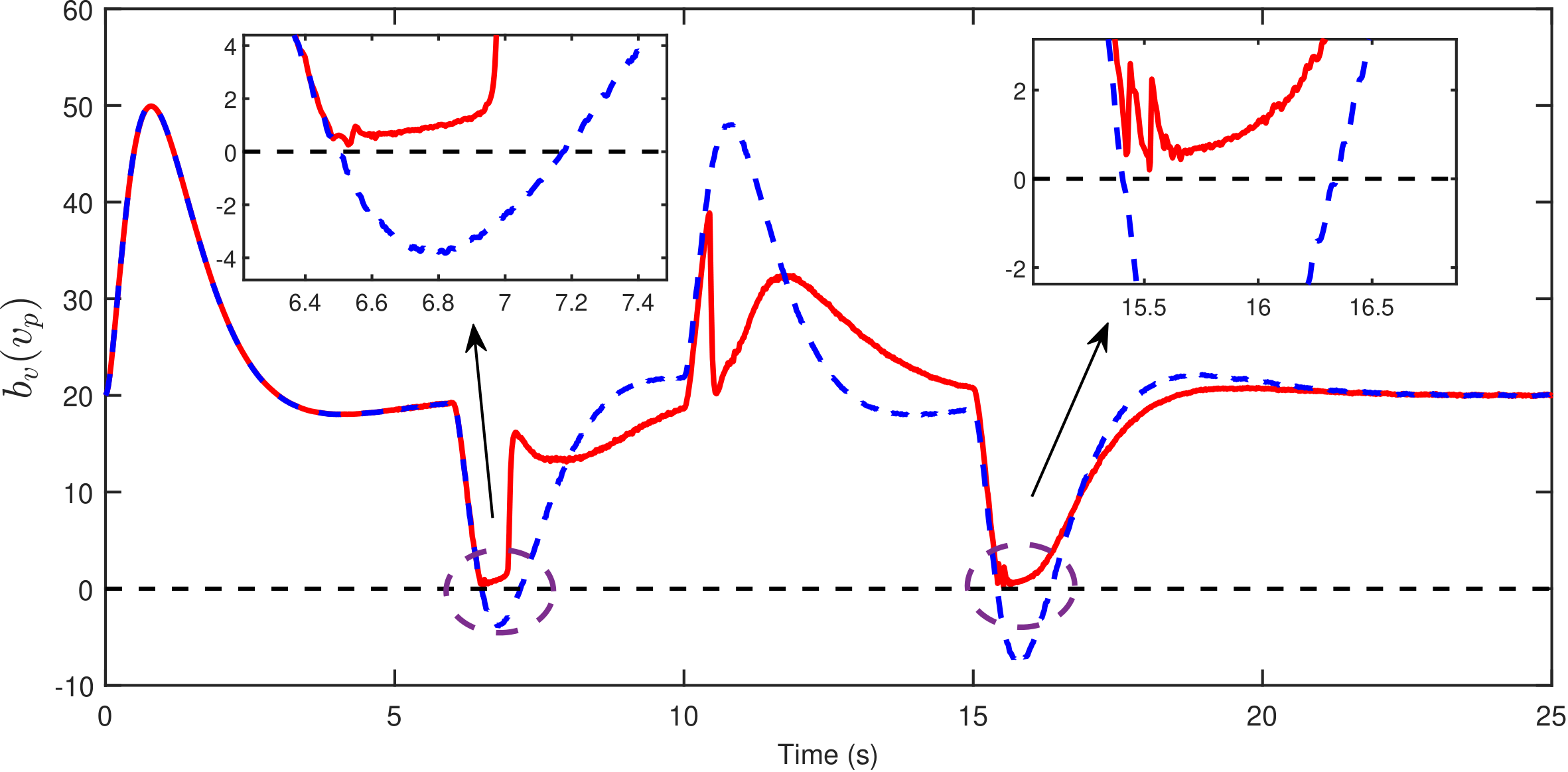}
        \caption{Velocity constraint}
        \label{fig:constraint_vel}
    \end{subfigure}
    \caption{{\hw Profiles of $b_p$ and $b_v$ when the pursuer is controlled by safe robust RL (red solid line) and robust RL (blue dash line) and  $b_p< 0$ means pursuer collides with the obstacle, and $b_v< 0$ means velocity constraint is violated.}}
    \label{fig:constraint}
\end{figure}

% \begin{figure}[htbp]
%     \centering
%     \includegraphics[width=0.45\textwidth]{Figure/constraint.eps}    
%     \caption{ {\hw Profiles of $b_p$ and $b_v$ when the pursuer is controlled by safe robust RL (red solid line) and robust RL (blue solid line)and  $b_p< 0$ means pursuer collides with the obstacle, and $b_v< 0$ means velocity constraint is violated.}}  \label{fig:constraint}  
% \end{figure}

% \begin{figure}[htbp]
%     \centering
%     \includegraphics[width=0.45\textwidth]{Figure/Tracking_error.eps}    
%     \caption{Evolution of the norm of the position tracking error $\lVert \xi_x \rVert$ and velocity tracking error $\lVert \xi_v \rVert$. The evader makes sharp turn at $t = 6\textnormal{s}, 10\textnormal{s}, 15\textnormal{s}$. The tracking errors greatly increase at these time instances and then rapidly decreases to zero.}  \label{fig:tracking}  
% \end{figure}

% \begin{figure}[htbp]
%     \centering
%     \includegraphics[width=0.3\textwidth]{Figure/NN.eps}  
%     \caption{Evolution of critic NN approximation.}  \label{fig:NN} 
% \end{figure}

%==================================
\section{Conclusion}
This paper proposes a robust safe RL framework for PE differential games {\wxy with unknown disturbances in an obstacle-rich environment}.
To enhance robustness, an adaptive sliding mode control (SMC) strategy {\wxy was} designed for disturbance rejection without knowing disturbance bound, while effectively mitigating the chattering issue typically caused by the sign function.
To guarantee safety, a robust safeguard policy {\wxy was} developed to enforce safety constraints in the presence of disturbances.
The robust safeguard policy and adaptive SMC strategy {\wxy were} then integrated within an RL framework to learn the Nash equilibrium of the PE game using both real-time data and simulated data.
%
% This approach addresses the interaction between the design of adaptive SMC and safe control policy, and we provide sufficient conditions for guaranteeing the effectiveness of the proposed safeguard policy.
{\wxy Future work will focus on extending our work to safety-aware multi-agent PE games.}

\appendices 
\section{Proof of Theorem \ref{thm:sliding_control}} \label{Appendix_sliding_control}

First, we study the case of $\lVert s(t)\rVert > \varepsilon$, when {\wxy the dynamics of $\mathcal{K}$} is given by \eqref{eq:sliding_adaptive_gain_update_s>0}.
    From Lemma \ref{lm:K_upper_bound}, there exists a positive number $\mathcal{K}^*$ such that $\mathcal{K}\leq \mathcal{K}^*$.
    Consider the following Lyapunov function candidate 
    \begin{align*}
        V_s(t) = \frac{1}{2}\lVert s(t)\rVert^2 + \frac{1}{2\breve{\gamma}}(\mathcal{K}(t)-\mathcal{K}^*)^2.
    \end{align*}
    Its time derivative along \eqref{eq:sliding_manifold}, \eqref{eq:sliding_adaptive_gain_update_s>0} is 
    \begin{align} \label{eq:Vs_dot}
        \dot{V}_s = &s^\top \chi(z_p)g_p(z_p)(u_r+d) + \frac{1}{\breve{\gamma}}(\mathcal{K}-\mathcal{K}^*)(K_1+K_2\lVert s \rVert) \nonumber\\
        = & -\mathcal{K} s^\top \chi(z_p) g_p(z_p) \tanh\left(\frac{g_p^\top(z_p) \chi^\top(z_p)s}{\rho}\right) \nonumber\\
        &+s^\top \chi(z_p)g_p(z_p)d+ \frac{1}{\breve{\gamma}}(\mathcal{K}-\mathcal{K}^*)(K_1+K_2\lVert s \rVert).
    \end{align}
    From Lemma \ref{lm:tanh_sign}, one has $$-s^\top \chi g_p \tanh(\frac{g_p^\top \chi^\top s}{\rho})\leq - \lVert g_p^\top \chi^\top s\rVert +\kappa m \rho.$$
    Let $\chi = g_p^\dagger$.
    According to the fact that $g_p^\dagger g_p = I$ and Assumption \ref{as:disturbance}, the inequality \eqref{eq:Vs_dot} becomes 
    \begin{align} \label{eq:V_s_dot}
        \dot{V}_s\leq & -\mathcal{K}\big(\lVert s\rVert - \kappa m \rho\big)+\bar{d} \lVert s\rVert \nonumber \\
        & +  \frac{1}{\breve{\gamma}}(\mathcal{K}-\mathcal{K}^*)(K_1+K_2\lVert s \rVert) \nonumber\\
        \leq & -(\mathcal{K}^*-\bar{d})\lVert s\rVert + \kappa m\rho  \mathcal{K} + (\mathcal{K}-\mathcal{K}^*)\frac{K_1}{\breve{\gamma}} \nonumber\\
        &+(\mathcal{K}-\mathcal{K}^*) \bigg(\frac{K_2}{\breve{\gamma}}-1 \bigg)\lVert s\rVert.
    \end{align}
    Note that $\mathcal{K} \leq \mathcal{K}^*$, and there always exists $\mathcal{K}^*$ and $\breve{\gamma}$ such that $\mathcal{K}^* > \bar{d}$ and $\breve{\gamma}< K_2$.
    Then \eqref{eq:V_s_dot} can be further written as
    \begin{align*}
        \dot{V}_s
        \leq & -(\mathcal{K}^*-\bar{d})\lVert s\rVert  - \frac{K_1}{\breve{\gamma}} |\mathcal{K}-\mathcal{K}^*|+ \kappa \mathcal{K}^*m\rho \\
        \leq & -\sqrt{2}\min\left\{\mathcal{K}^*-\bar{d}, \sqrt{\breve{\gamma}~}K_1\right\} V_s^{\frac{1}{2}} + \kappa\mathcal{K}^*m\rho.
    \end{align*}
    Denote $\beta_\gamma := \min\left\{\mathcal{K}^*-\bar{d}, \frac{K_1}{\sqrt{\breve{\gamma}~~}}\right\}$.
    Let $\varepsilon = \frac{\kappa\mathcal{K}^*m\rho}{\beta_\gamma}$.
    Then the sliding variable $s$ will reach a small domain $\Omega_s = \{s\in\mathbb{R}^m :~\lVert  s\rVert \leq \varepsilon\}$ in finite time.

    When $\lVert s \rVert \leq \varepsilon$, $\mathcal{K}(t)$ is determined by \eqref{eq:sliding_adaptive_gain_update_s=0}. 
    According to the proof of (\cite{plestan2010new}, Corollary 1), the states of \eqref{eq:pursuer_dynamics} will {\wxy remain} in $\Omega_s$.
    By selecting an appropriate $\rho$, the domain $\Omega_s$ can be arbitrary small.
    %
    % When $\rho$ is approximately zero, one has that $s = 0$ can be achieved in a finite time.
    {\wxy This further implies that $s = 0$ can be achieved in finite time.}

\section{Proof of Theorem \ref{thm:safe_analysis}} \label{Appendix_safe_analysis}
    % We study the nontrivial case where the compact set $\Omega_p$ is large enough such that $\Omega_p \cap (\mathscr{C}_{x,1}(t) \cup \mathscr{C}_{v}(t)) \neq \emptyset$, i.e., 
    % Let the compact set $\Omega_p$ be large enough such that $\Omega_p \cap (\mathscr{C}_{x,1}(t) \cup \mathscr{C}_{v}(t)) \neq \emptyset$.
    %
    Define $B(t) = B_x(\phi_{x,1}(z_p, t)) + B_v(b_v(z_p))$.
    According to \eqref{bx_pro} and \eqref{bu_pro}, $B$ approaches $\infty$ if $z_p \rightarrow \mathscr{C}_{\cup}^S$.
    The proof is completed by demonstrating the non-existence of a time instant when $z_p \in \mathscr{C}_{\cup}^S$.
    The time derivative of $B$ along \eqref{eq:pursuer_dynamics} can be calculated as 
    \begin{align*}
        \dot{B}(t) = & \nabla B_x  P_x(z_p, t) -  \tilde{K}_x \lVert \breve{\beta}_1(z_p,t) \rVert^2 \lVert \nabla B_x\rVert^2    \\
        & + \nabla B_v P_v(z_p, t) -  \tilde{K}_v\lVert \breve{\beta}_2(z_p)\rVert^2 \lVert \nabla B_v\rVert^2 \\
        & - (\tilde{K}_x+\tilde{K}_v)\nabla B_x  \breve{\beta}_1(z_p,t) \breve{\beta}_2^\top (z_p)\nabla B_v,
    \end{align*}
    where $P_x(z_p,t) =  L_{f_p} \phi_{x,1} + \frac{\partial \phi_{x,1}}{\partial x_o}^\top \frac{\partial x_o}{\partial t} + \breve{\beta}_1(z_p,t)(u_r+u_n +d)$ and $P_v(z_p, t) = L_{f_p}b_v + \breve{\beta}_2(z_p)(u_r+u_n +d)$.

    Considering that $\phi_{x,1},~b_x$ are continuously differentiable, $f_p,~g_p$ are locally Lipschitz, and Assumption \ref{as:obstacle} holds, we have that $\lVert L_{f_p} \phi_{x,1} \rVert, \lVert L_{f_p} b_{v} \rVert, \lVert \breve{\beta}_1(z_p,t) \rVert,\lVert \breve{\beta}_2(z_p) \rVert~\textnormal{and}~ \lVert \frac{\phi_{x,1}}{\partial x_o} \rVert$ are bounded for any $z_p \in \Omega_p$ and any bounded $x_o$.
    Given the locally Lipschitz $u_n$ and Lemma \ref{lm:K_upper_bound}, $\lVert u_n\rVert$ and $\lVert u_r\rVert$ are bounded for any $z_p \in \Omega_p$ and $t \geq 0$.
    Recalling $\lVert d \rVert_\infty \leq \bar{d}$ and $\lVert \frac{\partial  x_o}{\partial t} \rVert \leq \eta$, one can conclude that $\lVert P_x(z_p,t)\rVert< \infty$ and $\lVert P_v(z_p, t)\rVert < \infty$ when $z_p$ approaches $\mathscr{C}_\cup^s$.
    
    Now, we will show that the trajectory of $z_p(t)$ cannot approach the boundary set $\mathscr{C}_\cup^S(t)$.
    In the following analysis, we divide the boundary set $\mathscr{C}_\cup^S(t)$ into three regions, $\mathscr{C}_\cap^S(t)$, $\mathscr{C}_{\cap,x}^S(t)$, $\mathscr{C}_{\cap, v}^S$, where $\mathscr{C}_{\cap,x}^S(t) = \mathscr{C}_\cup^S(t) \backslash \partial\mathscr{C}_v$ and $\mathscr{C}_{\cap, v}^S(t) = \mathscr{C}_\cup^S(t) \backslash \partial\mathscr{C}_{x,1}(t)$.
    \textbf{Case 1}: We consider the case that $z_p(t)$ approaches $\mathscr{C}^S_{\cap, x}(t)$, i.e., $\phi_{x,1}(z_p,t) \rightarrow 0$ and $b_v(z_p)>0$.
    In this case, $\dot{B}$ can be further written as 
    \begin{align*}
        \dot{B}(t) \leq &  \lVert \nabla B_x \rVert^2 \left(  \frac{ \lVert P_x(z_p,t)\rVert}{\lVert \nabla B_x \rVert }  +  \lVert \nabla B_v \rVert \frac{\lVert P_v(z_p,t) \rVert }{\lVert \nabla B_x \rVert^2}  \right. \\
         & \left.  - \tilde{K}_v\lVert \nabla B_v \rVert^2\frac{ \lVert \breve{\beta}_2(z_p)\rVert^2}{\lVert \nabla B_x \rVert^2} - \tilde{K}_x\lVert \breve{\beta}_1(z_p,t) \rVert^2  \right. \\
         & \left.  + (\tilde{K}_x+\tilde{K}_v) \lVert \nabla B_v \rVert \frac{ \lVert \breve{\beta}_1(z_p,t) \breve{\beta}^\top_2(z_p) \rVert }{\lVert \nabla B_x \rVert } \right).
    \end{align*}
    According to \eqref{bx_pro}, $\lVert \nabla B_x\rVert \rightarrow \infty$ as $\phi_{x,1}(z_p, t)$ approaches zero.
    Since $b_v > 0$, $b_v$ is continuously differentiable on $z_p$, $B_v$ is continuously differentiable on $b_v$, $b_v > 0$ implies $B_v < \infty$ and further $\lVert \nabla B_v \rVert < \infty$.
    Then one has $\lVert \nabla B_v \rVert < \lVert \nabla B_x \rVert$ in this case.
    Recalling that $\mathscr{C}_{\cap,x}^S \subset \mathscr{C}_\cup^S$, one has $\lVert P_x \rVert$ and $\lVert P_v \rVert$  are bounded when $z_p$ approaches $\mathscr{C}_{\cap,x}^S$, and further $\lim_{z_p \rightarrow \mathscr{C}^S_{\cap, x}(t)} \dot{B}(t) = -\infty$.

    \textbf{Case 2}: We consider the case that $z_p(t)$ approaches $\mathscr{C}^S_{\cap, v}(t)$, i.e., $\phi_{x,1}(z_p,t) > 0$ and $b_v(z_p) \rightarrow0$.
    In this case, $\dot{B}$ can be further written as 
    \begin{align*}
        \dot{B}(t) \leq &  \lVert \nabla B_v \rVert^2 \left( \lVert \nabla B_x \rVert  \frac{ \lVert P_x(z_p,t)\rVert}{\lVert \nabla B_v \rVert^2 }  +  \frac{\lVert P_v(z_p,t) \rVert }{\lVert \nabla B_v \rVert}  \right. \\
         & \left.  - \tilde{K}_v \lVert \breve{\beta}_2(z_p)\rVert^2 - \tilde{K}_x \lVert \nabla B_x \rVert^2\frac{\lVert \breve{\beta}_1(z_p,t) \rVert^2 }{\lVert \nabla B_v \rVert^2} \right. \\
         & \left.  + (\tilde{K}_x+\tilde{K}_v) \lVert \nabla B_x \rVert \frac{ \lVert \breve{\beta}_1(z_p,t) \breve{\beta}^\top_2(z_p) \rVert }{\lVert \nabla B_v \rVert } \right).
    \end{align*}
     Analogously, following the analysis in Case 1 and noting that $\mathscr{C}_{\cap,v}^S \subset \mathscr{C}_\cup^S$, it can be shown that $\lim_{z_p \rightarrow \mathscr{C}^S_{\cap, v}(t)} \dot{B}(t) = -\infty$.

    \textbf{Case 3}: We consider the case where $z_p(t)$ approaches $\mathscr{C}_{\cap}^S(t)$, i.e., $\phi_{x,1}(z_p, t), b_v(z_p) \rightarrow 0$.
    The time derivative of $B(t)$ is rewritten as
    \begin{align} \label{B-dot-x-u}
        \dot{B}(t) \leq & \lVert \nabla B_x \rVert^2 \lVert \nabla B_v \rVert^2 \left (  - \tilde{K}_x\frac{\lVert \breve{\beta}_1(z_p,t) \rVert^2}{\lVert \nabla B_v \rVert ^2}-   \tilde{K}_v\frac{\lVert \breve{\beta}_2(z_p) \rVert^2}{\lVert \nabla B_x \rVert^2} \right. \notag\\
        & \left.  + \frac{ \lVert P_x(z_p,t) \rVert}{\lVert \nabla B_x \rVert  \lVert \nabla B_v \rVert^2} + \frac{\lVert P_v(z_p,t) \rVert}{\lVert \nabla B_x \rVert^2  \lVert \nabla B_u \rVert} + \kappa \right),
    \end{align}
    where 
    \begin{align*}
        \kappa = -(\tilde{K}_x + \tilde{K}_v)\frac{ \breve{\beta}_1(z_p,t)  \breve{\beta}_2^\top(z_p) }{\nabla B_x \nabla B_v }.
    \end{align*}

    Denote the angle between $\breve{\beta}_1(z_p, t)$ and $\breve{\beta}_2(z_p)$ as $\theta$.
    Noting that $\breve{\beta}_1$ and $\breve{\beta}_2$ are row vectors, and $\breve{\beta}_1\breve{\beta}^\top_2 \in \mathbb{R}$.
    Using $\breve{\beta}_1(z_p, t) \breve{\beta}^\top_2(z_p) = \lVert\breve{\beta}_1(z_p, t) \rVert \lVert \breve{\beta}_2(z_p) \rVert \cos(\theta)$ (\cite{strang1988linear}, Chapter 3) and Young's Inequality, $\kappa$ can be further written as 
    \begin{align*}
    \kappa & = -(\tilde{K}_x + \tilde{K}_v )\cos(\theta) \frac{ \lVert \breve{\beta}_1(z_p,t)\rVert \lVert  \breve{\beta}_2(z_p)\rVert}{\nabla B_x  \nabla B_v } \\
    & \leq \frac{1}{2}(\tilde{K}_x + \tilde{K}_v ) \lVert \cos(\theta) \rVert \left(\frac{ \lVert \breve{\beta}_1(z_p,t) \rVert^2}{\lVert \nabla B_v \rVert  ^2} + \frac{ \lVert \breve{\beta}_2(z_p)\rVert^2 }{\lVert \nabla B_x \rVert^2} \right) .
    \end{align*}

    From Schur complement lemma, \eqref{schur-Matrix} holds if and only if $$\breve{\beta}_1\breve{\beta}_1^\top > 0,~~~\breve{\beta}_2\breve{\beta}_2^\top -  (\breve{\beta}_1\breve{\beta}_1^\top)^{-1}  \lVert \breve{\beta}_2\breve{\beta}_1^\top\rVert^2 > 0.$$
    %
    % According to the fact that $b_x$ has relative degree two and Definition \ref{def:IRD}, one has $\breve{\beta}_1 \neq 0$ and thus $\breve{\beta}_1\breve{\beta}_1^\top > 0$.
    %
    Then this yields the following condition $$ \lVert \breve{\beta}_1 \rVert^2 \lVert\breve{\beta}_2 \rVert^2  - \lVert \breve{\beta}_2\breve{\beta}_1^\top\rVert^2 > 0.$$ Recalling $\breve{\beta}_2\breve{\beta}_1^\top = \lVert\breve{\beta}_1 \rVert \lVert \breve{\beta}_2\rVert\cos(\theta)$, we further have $$\lVert \breve{\beta}_1 \rVert^2 \lVert \breve{\beta}_2\rVert^2 \sin^2(\theta) >0.$$
    This implies that $\sin(\theta) \neq 0$, $\lVert \cos(\theta)\rVert < 1$ since $\cos(\theta) \in [-1,1]$,
    and there exists a positive constant $\epsilon_{\theta}$ such that $0 < \epsilon_{\theta} \leq 1 - \lVert \cos(\theta)\rVert$.
    Let $\breve{\epsilon}_{\theta,1} = \frac{1+\epsilon_{\theta}}{2} \tilde{K}_x - \frac{1-\epsilon_{\theta}}{2} \tilde{K}_v$ and $\breve{\epsilon}_{\theta,2} = \frac{1+\epsilon_{\theta}}{2} \tilde{K}_v - \frac{1-\epsilon_{\theta}}{2} \tilde{K}_x$.
    Then \eqref{B-dot-x-u} becomes
        \begin{align*}
        &\dot{B}(t) \leq  \lVert \nabla B_x \rVert^2 \lVert \nabla B_v \rVert^2 \left (   - \breve{\epsilon}_{\theta,1}  \frac{\lVert \breve{\beta}_1(z_p,t) \rVert^2}{\lVert \nabla B_v \rVert ^2}\right. \notag\\
         &\left. - \breve{\epsilon}_{\theta,2}  \frac{\lVert \breve{\beta}_2(z_p) \rVert^2}{\lVert \nabla B_x \rVert^2}  + \frac{ \lVert P_x(z_p,t) \rVert}{\lVert \nabla B_x \rVert  \lVert \nabla B_v \rVert^2} + \frac{\lVert P_v(z_p,t) \rVert}{\lVert \nabla B_x \rVert^2  \lVert \nabla B_v \rVert} \right).
    \end{align*}
    By selecting $\tilde{K}_x$ and $\tilde{K}_v$ such that $\breve{\epsilon}_{\theta,1}$ and $\breve{\epsilon}_{\theta,2} > 0$.
    Then one has $\lVert \nabla B_x\rVert$ and $\lVert \nabla B_v\rVert \rightarrow \infty$ as $\phi_{x,1}(z_p, t)$ and $ b_v(z_p)$ approach zero according to \eqref{bx_pro} and \eqref{bu_pro}.
    Recalling that $\mathscr{C}_{\cap}^S \subset \mathscr{C}_\cup^S$ and $\breve{\epsilon}_{\theta}  > 0$, one has $\lim_{z_p \rightarrow \mathscr{C}^S_{\cap}(t)} \dot{B}(t) = -\infty$.

    \textbf{Conclusion}: The above analysis yields 
    \begin{align} \label{B-dot-infty}
        \lim_{z_p \rightarrow \mathscr{C}^S_{\cup }(t)} \dot{B}(t) = - \infty.
    \end{align}
    Suppose there exists a finite time $T > 0$ such that $\lim_{t \rightarrow T} B(t) = \infty$, which further implies $\lim_{t \rightarrow T} \dot{B}(t) = \infty$.
    Since $B(t) \rightarrow \infty $ implies $ z_p \rightarrow \mathscr{C}^S_{\cup }(t)$, one has $\lim_{t \rightarrow T} z_p(t) \in \mathscr{C}^S_{\cup }(t)$ and further $\lim_{z_p \rightarrow \mathscr{C}^S_{\cup }(t)} \dot{B}(t) = \infty$, which contradicts with \eqref{B-dot-infty}.
    Hence, the non-existence of a finite $T$ precludes the existence of {\wxy such a} state trajectory $z_p$ {\wxy that} enters $\mathscr{C}_{\cup}^S(t)$.
    And one can conclude that the safe control policy \eqref{eq:safe_control} renders the safe set $\mathscr{C}_\cap^I(t)$ forward invariant.

%such that $\lim_{t}\eta_1^{-1} \eta_2 \left(\sqrt{\frac{2\varsigma}{\kappa}}\right)$

\section{Proof of Theorem \ref{thm:convergence_RL}} \label{Appendix_convergence_PI}
Based on the HJB equation \eqref{eq:HJB_equation},
% $$\nabla V^* (\mathcal{F}(\xi) + \mathcal{G}(\xi) u_n^* + \mathcal{H}(\xi) u_e^*) + Q(\xi) + u_n^{*\top} R u_n^* - \gamma^2 u_e^{*\top}T u_e^* = 0,$$
one can write the time derivative of $V^\star$ under \eqref{eq:behvaior_policy} and $u_e$ as 
    \begin{align}\label{eq:A3-Lyapunov}
        \dot{V}^\star = & \nabla V^{\star\top}(\mathcal{F}(z_p,z_e)+ g_p(z_p)(u_p+d) - g_e(z_e)u_e) \nonumber\\
        = & -Q(\xi)  - u_n^{\star\top} R u_n^\star + \gamma^2u_e^{\star\top} T u_e^\star - \nabla V^{\star\top}g_e(z_e) \tilde{u}_e \nonumber\\
        & +\nabla V^{\star\top}g_p(z_p)(u_s+u_r +d +\tilde{u}_n),
    \end{align}
where $\tilde{u}_n = \hat{u}_n +\frac{1}{2}R^{-1} g_p^\top \nabla V^\star$ and $\tilde{u}_e = u_e + \frac{1}{2\gamma^2}T^{-1}g_e^\top\nabla V^\star$.
Substituting \eqref{eq:NN_expression} in \eqref{eq:A3-Lyapunov} yields 
    \begin{align} \label{eq:A3-V*_dot}
        \dot{V}^\star =& -Q(\xi) - \frac{1}{4}W^\top (\mathcal{G}_{\vartheta} + \frac{\mathcal{H}_{\vartheta}}{\gamma^2})W  + \frac{1}{2}W^\top \mathcal{G}_{\vartheta}\tilde{W}_a\notag\\
        & + \frac{1}{4}\nabla\epsilon_c^\top (\mathcal{G}_{R} - \frac{\mathcal{H}_{T}}{\gamma^2}) \nabla\epsilon_c + \frac{1}{2}\nabla\epsilon_c^\top \mathcal{G}_R \nabla \vartheta^\top \tilde{W}_a\notag\\
        &+ W^\top \nabla \vartheta g_p (u_r+u_s + d) - W^\top \nabla \vartheta g_e u_e\notag\\
        \leq  -&Q(\xi)   + l_1 (\bar{u}_r+\bar{u}_s + \bar{d}) + l_2\bar{u}_e+ \bar{\epsilon}_1 + \bar{\epsilon}_2\lVert \tilde{W}_a \rVert,
    \end{align}
where $\bar{u}_r = \mathcal{K}^*$,
    \begin{align*}
        &l_1 = \bar{W}\overline{\lVert\nabla \vartheta(\xi) g_p(z_p)\rVert}_{\Omega\times \Omega_p},~~ l_2 = \bar{W}\overline{\lVert \nabla \vartheta(\xi) g_e (z_e)\rVert}_{\Omega\times\Omega_e},  \\
        &\bar{u}_s = \tilde{K} (\lVert \nabla B_x\rVert_{\infty}+\lVert \nabla B_v\rVert_{\infty}),~~\tilde{K} = \max\{\tilde{K}_x,\tilde{K}_v\},\\
        &\bar\epsilon_1 = \frac{1}{4}\overline{\lVert \nabla\epsilon_c^\top(\xi) \mathcal{G}_{R}(z_p) \nabla\epsilon_c(\xi)\rVert}_{\Omega\times \Omega_p}, ~~\mathcal{G}_R = g_p R^{-1}g_p^\top,\\
        &\bar{\epsilon}_2 = \frac{1}{2}\left(\bar{W} \overline{\lVert \mathcal{G}_{\vartheta}(\xi,z_p)\rVert}_{\Omega\times \Omega_p} + \overline{\lVert \nabla\epsilon_c^\top(\xi) \mathcal{G}_R(z_p) \nabla \vartheta^\top(\xi)\rVert}_{\Omega\times \Omega_p}\right).
    \end{align*}

    The HJB equation \eqref{eq:HJB_equation} can be also expressed by \eqref{eq:NN_expression} as 
    \begin{align*}
        0 =& W^\top \nabla\vartheta \mathcal{F}(z_p,z_e) +  Q(\xi)+ \epsilon_H(\xi) \\
        & - \frac{1}{4}W^\top \nabla\vartheta \mathcal{G}_R \nabla\vartheta^\top W +\frac{1}{4\gamma^2}W^\top \nabla\vartheta \mathcal{H}_T \nabla\vartheta^\top W, 
    \end{align*}
    where $\mathcal{H}_T = g_e  T^{-1}g_e^\top$ and 
    \begin{align*}
        \epsilon_H(\xi,z_p,z_e) = & \nabla\epsilon_c^\top\mathcal{F}(z_p,z_e)- \frac{1}{4}\nabla\epsilon_c^\top\mathcal{G}_R \nabla\epsilon_c +\frac{1}{4\gamma^2}\nabla\epsilon_c^\top \mathcal{H}_T \nabla\epsilon_c \\
        &- \frac{1}{2}\nabla\epsilon_c^\top\mathcal{G}_R \nabla\vartheta^\top W +\frac{1}{2\gamma^2}\nabla\epsilon_c^\top\mathcal{H}_T \nabla\vartheta^\top W.
    \end{align*}

    The Bellman error \eqref{eq:bellman_error} at $(\xi,z_p,z_e)$ can be rewritten as
    \begin{align*}
        \delta(t) =-\tilde{W}_c^\top \varphi(t) + \Pi(t),
    \end{align*}
    where 
    \begin{align*}
        \Pi(t)=\frac{1}{4}\tilde{W}_a^\top \mathcal{G}_{\vartheta}\tilde{W}_a - \frac{1}{4\gamma^2}\tilde{W}_p^\top \mathcal{H}_{\vartheta}\tilde{W}_p + \epsilon_{H}.
    \end{align*}
    The Bellman error \eqref{eq:bellman_error_sample} at $(\xi_i,z_p^i,z_e^i)$ can be also rewritten as
    \begin{align*}
        \delta_i(t) = -\tilde{W}_c \varphi_i(t) + \Pi_i(t),
    \end{align*}
    where 
    $$
    \Pi_i(t)=\frac{1}{4}\tilde{W}_a^\top \mathcal{G}_{\vartheta,i}\tilde{W}_a - \frac{1}{4\gamma^2}\tilde{W}_p^\top \mathcal{H}_{\vartheta,i}\tilde{W}_p + \epsilon_{H,i}.$$

    The time derivative of $V_c$ is
    \begin{align*}
        \dot{V}_c = & \tilde{W}_c^\top \Gamma^{-1} \dot{\tilde{W}}_c - \frac{1}{2}\tilde{W}_c^\top \Gamma^{-1} \dot{\Gamma} \Gamma^{-1}\tilde{W}_c \\
        = &-\frac{1}{2}\tilde{W}_c^\top\left( \beta \Gamma^{-1}+k_{c1} \Lambda + \frac{k_{c2}}{N} \sum_{i=1}^N \Lambda_i\right) \tilde{W}_c  \\
        & + \tilde{W}_c^\top \left(k_{c1} \frac{\varphi}{\rho^2} \Pi + \frac{k_{c2}}{N} \sum_{i=1}^N \frac{\varphi_i}{\rho_i^2}\Pi_i\right).
    \end{align*}
    Since $f_p$ and $f_e$ are locally Lipschitz, $\lVert \epsilon_H(\xi,z_p,z_e)\rVert$ is bounded for any $(\xi,z_p,z_e) \in \Omega \times \Omega_p \times \Omega_e$.
    Define $\bar{\epsilon}_H = \overline{\lVert\epsilon_H(\xi,z_p,z_e)\rVert}_{\Omega \times \Omega_p \times \Omega_e}$, $\bar{\mathcal{G}}_{\vartheta} = \overline{\lVert\mathcal{G}_\vartheta(\xi,z_p) \rVert}_{\Omega \times \Omega_p}$ and $\bar{\mathcal{H}}_{\vartheta} = \overline{\lVert\mathcal{H}_\vartheta (\xi,z_e)\rVert}_{\Omega \times \Omega_e}$.
    %
    % Considering the following condition,
    % \begin{align*}
    %     \overline{\lVert \Pi \rVert}_{\Omega} \leq &  T \big( \bar{\epsilon}_H + l_1 (\bar{u}_r+\bar{u}_s+\bar{d})+l_2 \bar{u}_e +\frac{1}{2\gamma^2} \bar{\mathcal{H}}_{\vartheta} \bar{W}^2 \\
    %     &+ \frac{1}{4}(\bar{W}_a^2 \bar{\mathcal{G}}_{\vartheta} +\frac{1}{\gamma^2}\bar{W}_p^2 \bar{\mathcal{H}}_{\vartheta} +\frac{2}{\gamma^2} \bar{W}_p \bar{W} \bar{\mathcal{H}}_{\vartheta})\big).
    % \end{align*}
    %
    %
    %
    % \begin{align*}
    %     \lVert W^\top \int_{t-T}^t \nabla \vartheta \mathcal{G}(u_s+u_r+d)d\tau\rVert \leq T  l_1 (\bar{u}_r+\bar{u}_s+\bar{d}), \\
    %     \lVert W^\top \int_{t-T}^t \nabla \vartheta\mathcal{H}(u_e - \frac{1}{2\beta^2} T^{-1}\mathcal{H} \nabla\vartheta^\top W)\big)d\tau\rVert \leq T (l_2 \bar{u}_e +\frac{1}{2\gamma^2} \bar{\mathcal{H}}_{\vartheta} \bar{W}^2)
    %     \\
    %     \lVert \int_{t-T}^t (\frac{1}{4}\tilde{W}_a^\top \mathcal{G}_{\vartheta}\tilde{W}_a - \frac{1}{4\gamma^2} \tilde{W}_p^\top \mathcal{H}_{\vartheta} \tilde{W}_p + \frac{1}{2\gamma^2} \tilde{W}_p^\top \mathcal{H}_{\vartheta} W)d\tau\rVert \leq \frac{T}{4}(\bar{W}_a^2 \bar{\mathcal{G}_{\vartheta} }+\frac{1}{\gamma^2}\bar{W}_p^2 \bar{\mathcal{H}}_{\vartheta} +\frac{2}{\gamma^2} \bar{W}_p \bar{W} \bar{\mathcal{H}}_{\vartheta}).
    % \end{align*}
    %
    Considering that $\frac{\varphi}{\rho^2} \leq \bar{\varphi}$ for a bounded positive value $\bar{\varphi}$, we have
    \begin{align}\label{eq:A3-Vc_dot}
        \dot{V}_c \leq&  -\bar{k}_c \bar{\lambda} \lVert \tilde{W}_c\rVert^2  +  \frac{1}{4}\tilde{W}_c^\top\bar{\Pi}_a 
        \notag\\
        &- \frac{1}{4\gamma^2}\tilde{W}_c^\top\bar{\Pi}_p+ \frac{1}{4} \bar{k}_c \bar{\varphi} \bar{\epsilon}_H \lVert \tilde{W}_c\rVert,
    \end{align}
    where $\bar{\lambda} = \frac{\beta}{2\bar{k}_c \lVert \Gamma\rVert_{\infty}} + \frac{\lambda_c k_{c2}}{2\bar{k}_c}$, $\bar{k}_c=k_{c1}+k_{c2}$  and
    \begin{align*}
        \bar{\Pi}_a = k_{c1} \frac{\varphi}{\rho^2}\tilde{W}_a^\top \mathcal{G}_{\vartheta}\tilde{W}_a +\frac{k_{c2}}{N} \sum_{i=1}^N \frac{\varphi_i}{\rho_i^2} \tilde{W}_a^\top \mathcal{G}_{\vartheta,i}\tilde{W}_a\\
        \bar{\Pi}_p = k_{c1} \frac{\varphi}{\rho^2}\tilde{W}_p^\top \mathcal{H}_{\vartheta}\tilde{W}_p +\frac{k_{c2}}{N} \sum_{i=1}^N \frac{\varphi_i}{\rho_i^2} \tilde{W}_p^\top \mathcal{H}_{\vartheta,i}\tilde{W}_p .
    \end{align*}

    Define $\bar{k}_a = k_{a1}+k_{a2}$. 
    The time derivative of $V_a$ is 
    \begin{align} \label{eq:A3-Va_dot}
        \dot{V}_a \leq & k_{a1} \tilde{W}_a^\top (\hat{W}_a - \hat{W}_c) + k_{a2} \tilde{W}_a^\top \hat{W}_a \notag\\
        &- \frac{k_{c1}}{4}  \tilde{W}_a^\top \mathcal{G}_{\vartheta} \hat{W}_a \frac{\varphi^\top}{ \rho^2} \hat{W}_c- \frac{k_{c2}}{4N}\sum_{i=1}^N  \tilde{W}_a^\top \mathcal{G}_{\vartheta,i} \hat{W}_a \frac{\varphi_i^\top}{ \rho_i^2} \hat{W}_c \notag\\
        = & - \bar{k}_a \tilde{W}_a^\top \tilde{W}_a +  k_{a1} \tilde{W}_a^\top \tilde{W}_c+ k_{a2} \tilde{W}_a^\top W\notag\\
        & - \frac{k_{c1}}{4}  \tilde{W}_a^\top \mathcal{G}_{\vartheta}\left(W \frac{\varphi^\top}{\rho^2}W +\tilde{W}_a \frac{\varphi^\top}{\rho^2}\tilde{W}_c\right) \notag\\
        & +\frac{k_{c1}}{4}  \tilde{W}_a^\top \mathcal{G}_{\vartheta} \left(\tilde{W}_a \frac{\varphi^\top}{\rho^2}W + W \frac{\varphi^\top}{\rho^2} \tilde{W}_c\right)\notag \\
        & - \frac{k_{c2}}{4N} \sum_{i=1}^{N} \tilde{W}_a^\top \mathcal{G}_{\vartheta,i}\left(W \frac{\varphi_i^\top}{\rho_i^2}W +\tilde{W}_a \frac{\varphi_i^\top}{\rho_i^2}\tilde{W}_c\right) \notag\\
        & +\frac{k_{c2}}{4N} \sum_{i=1}^{N} \tilde{W}_a^\top \mathcal{G}_{\vartheta,i} \left(\tilde{W}_a \frac{\varphi_i^\top}{\rho_i^2}W + W \frac{\varphi_i^\top}{\rho_i^2} \tilde{W}_c\right).
    \end{align}

    Similarly, the time derivative of $V_p$ is 
    \begin{align} \label{eq:A3-Vp_dot}
        \dot{V}_p \leq  & - \bar{k}_p \tilde{W}_p^\top \tilde{W}_p +  k_{p1} \tilde{W}_p^\top \tilde{W}_c+ k_{p2} \tilde{W}_p^\top W\notag\\
        & + \frac{k_{c1}}{4\gamma^2}  \tilde{W}_p^\top \mathcal{H}_{\vartheta}\left(W \frac{\varphi^\top}{\rho^2}W +\tilde{W}_p \frac{\varphi^\top}{\rho^2}\tilde{W}_c\right) \notag\\
        & - \frac{k_{c2}}{4\gamma^2} \tilde{W}_p^\top \mathcal{H}_{\vartheta} \left(\tilde{W}_p \frac{\varphi^\top}{\rho^2}W + W \frac{\varphi^\top}{\rho^2} \tilde{W}_c\right)\notag\\
        & + \frac{k_{c2}}{4N\gamma^2} \sum_{i=1}^{N} \tilde{W}_p^\top \mathcal{H}_{\vartheta,i}\left(W \frac{\varphi_i^\top}{\rho_i^2}W +\tilde{W}_p \frac{\varphi_i^\top}{\rho_i^2}\tilde{W}_c\right) \notag\\
         -&\frac{k_{c2}}{4N\gamma^2} \sum_{i=1}^{N} \tilde{W}_p^\top \mathcal{H}_{\vartheta,i} \left(\tilde{W}_p \frac{\varphi_i^\top}{\rho_i^2}W + W \frac{\varphi_i^\top}{\rho_i^2} \tilde{W}_c\right),
    \end{align}
    where $\bar{k}_p = k_{p1}+k_{p2}$.

    For brevity, we introduce the following symbols:
    \begin{align*}
        &\mu_a = \frac{\bar{k}_c}{4}  \bar{\varphi}\bar{\mathcal{G}}_{\vartheta}  \bar{W},~~~\mu_p = \frac{\bar{k}_c}{4\gamma^2}  \bar{\varphi}\bar{\mathcal{H}}_{\vartheta}  \bar{W},~~~~~l_c = \frac{1}{4} \bar{k}_c \bar{\varphi} \bar{\epsilon}_H,\\
        &l_a = k_{a2} \bar{W}+\frac{\bar{k}_c}{4} \bar{\mathcal{G}}_{\vartheta} \bar{\varphi} \bar{W}^2,~~~~~l_p = k_{p2} \bar{W}+\frac{\bar{k}_c}{4\gamma^2}  \bar{\mathcal{H}}_{\vartheta} \bar{\varphi} \bar{W}^2,\\
        &l_{ac} = k_{a1}+\frac{\bar{k}_c}{4}  \bar{\varphi}\bar{\mathcal{G}}_{\vartheta} \bar{W} ,~~~~~l_{pc} =k_{p1} + \frac{\bar{k}_c}{4\gamma^2}  \bar{\varphi}\bar{\mathcal{H}}_{\vartheta} \bar{W}.
    \end{align*}
    % Adding \eqref{eq:A3-Vc_dot}, \eqref{eq:A3-Va_dot} and \eqref{eq:A3-Vp_dot}, and taking its upper bound yields
    Then
    \begin{align} \label{eq:A3-W_dot}
        \dot{V}_c+\dot{V}_a+\dot{V}_p \leq &  -\bar{k}_c \bar{\lambda} \lVert \tilde{W}_c\rVert^2 + l_c\lVert \tilde{W}_c\rVert - (\bar{k}_a- \mu_a) \lVert \tilde{W}_a\rVert^2  \notag \\
         & + l_p \lVert \tilde{W}_p\rVert + l_a \lVert \tilde{W}_a \rVert + l_{pc}\lVert \tilde{W}_c\rVert\lVert \tilde{W}_p\rVert \notag \\
         & - (\bar{k}_p - \mu_p) \lVert \tilde{W}_p \rVert^2+ l_{ac}\lVert \tilde{W}_c\rVert\lVert \tilde{W}_a\rVert, 
    \end{align}
    %
    % Inserting the results of \eqref{eq:A3-V*_dot} and \eqref{eq:A3-W_dot} into \eqref{eq:A3-Lyapunov_function}, $\dot{L}$ becomes
    Considering \eqref{eq:A3-V*_dot}, \eqref{eq:A3-W_dot} and \eqref{eq:A3-Lyapunov_function}, we have
    \begin{align*}
        \dot{L} \leq &  -Q(\xi) - \bar{k}_c \bar{\lambda} \lVert \tilde{W}_c\rVert^2- \left(\bar{k}_a- \mu_a\right) \lVert \tilde{W}_a\rVert^2  \\
        &- \left(\bar{k}_p - \mu_p\right) \lVert \tilde{W}_p \rVert^2+ l_c \lVert \tilde{W}_c\rVert + (l_a+\bar{\epsilon}_2) \lVert \tilde{W}_a \rVert  \\
        &   + l_p \lVert \tilde{W}_p\rVert+ l_{ac} \lVert \tilde{W}_c\rVert\lVert \tilde{W}_a\rVert  + l_{pc}\lVert \tilde{W}_c\rVert\lVert \tilde{W}_p\rVert \\
        &   + l_1 (\bar{u}_r+\bar{u}_s + \bar{d}) + l_2\bar{u}_e+ \bar{\epsilon}_1 \\
        \leq &  -Q(\xi) -\frac{\bar{k}_c \bar{\lambda}}{2} \lVert \tilde{W}_c\rVert^2- \frac{\bar{k}_a- \mu_a}{2} \lVert \tilde{W}_a\rVert^2  \\
        &- \frac{\bar{k}_p - \mu_p}{2}\lVert \tilde{W}_p \rVert^2+ l_{ac} \lVert \tilde{W}_c\rVert\lVert \tilde{W}_a\rVert  + l_{pc}\lVert \tilde{W}_c\rVert\lVert \tilde{W}_p\rVert + \varrho  \\
        =&  -Q(\xi) -\frac{\bar{k}_c \bar{\lambda}}{4} \lVert \tilde{W}_c\rVert^2- \frac{\bar{k}_a- \mu_a}{4} \lVert \tilde{W}_a\rVert^2  \\
        &- \frac{\bar{k}_p - \mu_p}{4}\lVert \tilde{W}_p \rVert^2- y^\top M y + \varrho,
    \end{align*}
    where $y = \col(\lVert \tilde{W}_c\rVert,\lVert \tilde{W}_a\rVert,\lVert \tilde{W}_p\rVert)$, $\varrho = l_1 (\bar{u}_r+\bar{u}_s + \bar{d}) + l_2\bar{u}_e+ \bar{\epsilon}_1  + \frac{l_c^2}{2\bar{k}_c\bar{\lambda}} + \frac{(l_a +\bar{\epsilon})^2}{2(\bar{k}_a - \mu_a)}+ \frac{l_p^2}{2(\bar{k}_p - \mu_p)}$ and 
    \begin{align*}
        M = \begin{bmatrix}
            \frac{\bar{k}_c\bar{\lambda}}{4} & -\frac{l_{ac}}{2} & -\frac{l_{pc}}{2}\\
            -\frac{l_{ac}}{2} &  \frac{\bar{k}_a- \mu_a}{4} & 0\\
            -\frac{l_{pc}}{2} & 0 & \frac{\bar{k}_p- \mu_p}{4}
        \end{bmatrix}.
    \end{align*}
    Design $k_{c1}, k_{c2}, k_{a1}, k_{a2}, k_{p1}, k_{p2}$ and $\tilde{K}$ such that 
    \begin{align*}
        \bar{k}_a \geq \mu_a + 2l_{ac} ,~~~~ \bar{k}_p \geq \mu_p + 2l_{pc} ,~~~~ \frac{\bar{k}_c\bar{\lambda}}{2} \geq l_{ac} + l_{pc}.
    \end{align*}
    Then $M$ is positive semi-definite according to Geršgorin circle criterion.
    If $Q(\xi) \geq k_\xi \lVert \xi \rVert^2$ for some positive $k_\xi$, then 
    \begin{align} \label{eq:A3-L_dot}
        \dot{L} \leq & -k_\xi \lVert \xi \rVert^2-\frac{\bar{k}_c\bar{\lambda}}{4} \lVert \tilde{W}_c\rVert^2- \frac{\bar{k}_a- \mu_a}{4} \lVert \tilde{W}_a\rVert^2  \notag\\
        &- \frac{\bar{k}_p - \mu_p}{4}\lVert \tilde{W}_p \rVert^2 + \varrho,  \notag\\
    \leq & - \underbrace{\operatorname{min}\Big \{k_{\xi},\frac{\bar{k}_c \bar{\lambda}}{4}, \frac{\bar{k}_a- \mu_a}{4},\frac{\bar{k}_p - \mu_p}{4}\Big\}}_{\kappa>0}\lVert z \rVert^2 + \varrho,
    \end{align}
    which implies that 
    \begin{align*}
        \dot{L} \leq - \frac{\kappa}{2} \lVert z \rVert^2, & &  \forall \lVert z \rVert  \geq \sqrt{\frac{2\varrho}{\kappa}}.
    \end{align*}
    Recalling $L \succ 0$, the tracking error $\xi$, the NN weight estimations $\tilde{W}_c, \tilde{W}_a$ and $\tilde{W}_p$ are UUB (\cite{khalil2002nonlinear}, Theorem 4.18).

%%-----------------------------
%% bibÎÄ¼þ
%%------------------------------
\bibliographystyle{IEEEtran}        % Include this if you use bibtex
\bibliography{xywang}          % and a bib file to produce the
\end{document}